\documentclass[11pt]{article}

\usepackage{pdfpages}
\usepackage{lscape}
\usepackage{longtable}
\usepackage{amsmath,amsfonts,amsthm,amssymb,bbm,nicefrac,mathtools}
\usepackage{authblk}
\usepackage{graphicx}
\usepackage{mathrsfs}
\usepackage{multirow}
\usepackage{blkarray, bigstrut} %
\usepackage{soul,xcolor, color, colortbl}
\usepackage{multicol, tcolorbox}
\usepackage{tikz}
\usepackage{tikz-cd}
\usetikzlibrary{matrix,arrows,positioning,calc,shapes}
\usetikzlibrary{arrows.meta}
\usepackage{booktabs}

\usepackage{dsfont}
\usepackage{subcaption}
\usepackage[cmtip,arrow]{xy}
\usepackage{pb-diagram,pb-xy}
\usepackage{multirow}
\usepackage{phoenician}
\usepackage[normalem]{ulem}
\usepackage{array}
\usepackage{parskip}

\usepackage{hyperref}
\usepackage{wrapfig}
\usepackage{natbib}
\usepackage{url}
\usepackage{algorithm}
\usepackage[noend]{algpseudocode}
\usepackage{varwidth}

\usepackage[shortlabels]{enumitem}

\usetikzlibrary{shapes, arrows, calc, arrows.meta, fit, positioning}

\usetikzlibrary{patterns,patterns.meta}

\usepackage{listings}

\definecolor{codegreen}{rgb}{0,0.6,0}
\definecolor{codegray}{rgb}{0.5,0.5,0.5}
\definecolor{codepurple}{rgb}{0.58,0,0.82}
\definecolor{backcolour}{rgb}{0.95,0.95,0.92}

\newtheorem{thm}{Theorem}[section]
\newtheorem{lem}[thm]{Lemma}
\newtheorem{prop}[thm]{Proposition}

\theoremstyle{definition}
\newtheorem{defn}[thm]{Definition}
\newtheorem{rem}[thm]{Remark}

\newcommand{\setof}[1]{\left\{ {#1}\right\}}
\newcommand{\setdef}[2]{\left\{{#1}\,\left|\,{#2}\right.\right\}}

\newcommand{\bq}{{\bf q}}

\newcommand{\bw}{{\bf w}}

\newcommand{\bI}{{\bf I}}

\newcommand{\B}{{\mathbb{B}}}

\newcommand{\I}{{\mathbb{I}}}

\newcommand{\R}{{\mathbb{R}}}

\newcommand{\Z}{{\mathbb{Z}}}

\newcommand{\bzero}{{\bf 0}}
\newcommand{\bone}{{\bf 1}}

\newcommand{\cD}{\mathcal{D}}
\newcommand{\cE}{{\mathcal E}}
\newcommand{\cF}{{\mathcal F}}
\newcommand{\cG}{{\mathcal G}}

\newcommand{\cL}{{\mathcal L}}
\newcommand{\cM}{{\mathcal M}}

\newcommand{\cP}{{\mathcal P}}

\newcommand{\cT}{{\mathcal T}}

\newcommand{\cX}{{\mathcal X}}

\newcommand{\cZ}{{\mathcal Z}}

\newcommand{\Top}{{\mathrm{Top}}}
\newcommand{\TP}{{\mathrm{TP}}}

\newcommand{\Targets}{\mathbf{T}}
\newcommand{\Sources}{\mathbf{S}}

\newcommand{\PG}{\mathsf{PG}}

\newcommand{\ol}[1]{\overline{#1}}

\newcommand{\bbB}{\mathbb{B}}

\newcommand{\CG}{\mathsf{CG}}
\newcommand{\RC}{\mathsf{RC}}
\newcommand{\MG}{\mathsf{MG}}

\newcommand{\SCC}{\mathsf{SCC}}
\newcommand{\GRC}{\mathsf{GRC}}

\definecolor{proofscolor}{rgb}{0,0.2,1}
\definecolor{myblue}{rgb}{0,0.2,.7}
\definecolor{mygreen}{rgb}{0.1, .9, 0.4 }
\definecolor{mycyan}{rgb}{0, 1, .9 }

\title{Boolean models coarsely sample continuous dynamics of regulatory networks}

\author[1]{Breschine Cummins}
\author[,2]{Marcio Gameiro\thanks{Corresponding author: gameiro@math.rutgers.edu}}
\author[1]{Tom\'{a}\v{s} Gedeon}
\author[2]{Konstantin Mischaikow}
\author[3]{Bernardo Rivas}

\affil[1]{Department of Mathematical Sciences, Montana State University, Bozeman, MT, USA}
\affil[2]{Department of Mathematics, Rutgers University, Piscataway, NJ, USA}
\affil[3]{Department of Mathematics and Statistics, University of Toledo, Toledo, OH, USA}

\begin{document}

\maketitle

\begin{abstract}
Boolean models are widely used to characterize the dynamics of gene regulatory networks. However, their coarse state discretization limits their ability to capture complex continuous dynamics and continuous parameter dependencies. In this paper, we present a rigorous mathematical framework that embeds monotone Boolean models into a broader class of multilevel combinatorial models, which in turn embed into the Dynamic Signatures Generated by Regulatory Networks (DSGRN) methodology. We define the DSGRN parameter graph, which encodes the notion of parameter adjacency and is used to map Boolean functions to specific nodes within the DSGRN parameter space. We prove that these multilevel discrete update functions act as a multilevel refinement of monotone Boolean models. We demonstrate that purely Boolean models systematically underestimate network dynamics by missing crucial intermediate behaviors such as higher-order multistability and stable periodic orbits. We show that the DSGRN framework efficiently captures a strictly richer set of dynamics consistent with ordinary differential equations (ODEs), providing a mathematically rigorous and computationally viable bridge between discrete and continuous network modeling.
\end{abstract}

\section{Introduction}

The aphorism ``all models are wrong, but some are useful'' is clearly relevant in the domain of systems biology, and more specifically in attempts to characterize the time-dependent behavior of gene regulatory networks.
The most fundamental model in this setting is that of an \emph{interaction network} in which the structure of the interactions between the genes, proteins, and other regulatory elements is presented as a signed, directed graph with nodes representing the regulatory elements and edges the regulation. The signs represent up- or down-regulation with an implied monotonicity of the response of the target with respect to abundance of the regulator. 
However, an interaction network does not contain sufficient information to specify dynamic behavior. 

To model the dynamics of an interaction network,  the modeler needs to ask whether a discrete or a continuous approximation of the state of the system and a discrete or continuous evolution  provides useful information about system function. 
On one end of the spectrum are   monotone Boolean models that  use discrete time and the coarsest possible discretization of the state, while on the other ordinary differential equations assume that the dynamics is continuous in both time and state. Between these two modeling approaches lies a gradation of combinatorial models. In this paper we consider  three combinatorial models that embed into each other: Boolean, multilevel, and Dynamic Signatures Generated by Regulatory Networks (DSGRN)~\cite{Cummins16,Gedeon20,Cummins2017b,Cummins17,gameiro:gedeon:kepley:mischaikow}. These models have differing levels of expressiveness, computational cost, and fidelity to the dynamics of continuous systems. The most detailed  model, DSGRN,  allows systematic incorporation of  monotone Boolean models within the class of  multilevel discrete models, expanding the range of observed dynamics at a computational cost comparable to that of  monotone Boolean models. Furthermore, there is a  rigorous procedure to identify the DSGRN characterization of dynamics with that of continuous models~\cite{rook_field:24}.

B. Goodwin  is credited with introducing the first explicit mathematical models, taking the form of systems of ordinary differential equations (ODEs), that give rise to well-defined dynamics and precise results with respect to both quantities and times \cite{goodwin}.
Though the models presented in \cite{goodwin} are, by today's standards, extremely simple, the essential challenges are already evident: what form of explicit nonlinearities should be chosen, what are appropriate parameter values, and which numerical simulations are needed to identify the dynamics.

These three challenges are avoided in the Boolean models introduced by S. Kauffman \cite{kauffman} where the values associated with the nodes of the regulatory network are either \textsf{ON} or \textsf{OFF}.
The combinatorial nature of these models allows for extremely efficient computations and meaningful interpretability in biological contexts.
As a consequence there has been significant effort and development of Boolean modeling~\cite{Thomas1973,deJong2004, Thomas95, Thieffry99, Jaoude16, albert:othmer,Glass1975,CH2011,chaouiya2016,chatain2020,Albert2017,Assmann2014,Thieffry2016,Zhang2008,aledo2024,pauleve2019,pauleve2020,laub_2016,kadelka_laub_2017,kadelka_2026}. 
These models provide direct interpretability in biological contexts~\cite{plaugher_2021,zanudo_albert_2015,Leukemia_2025}. On the computational side, several specialized algorithms allow computation of the steady states~\cite{SAT,melkman2013,Mori2022}, reachability between sets of states, and computations of attractors~\cite{tamura2009,dubrova2012,Mori2022} in Boolean models of networks with dozens of genes.

Monotone Boolean models (MBM) are an important subclass of Boolean models where the update function at each node is a \emph{monotone Boolean function} (MBF). There are many  MBMs compatible with the same interaction network and they can be readily enumerated~\cite{Gedeon24b,monteiro2024}.

In spite of many advantages, there are at least three significant limitations to Boolean models~\cite{chaves:albert:sontag}.
\begin{itemize}
\item 
First, there exist systems for which the measurements of the quantities associated to nodes appear to take on more than two values.
\item 
Second, there is no natural way to embed the Boolean values of \textsf{ON} and \textsf{OFF} into the real values $[0,\infty)$ associated with experimental measurements.
\item 
Third, for the purposes of understanding the impact of mutations or epigenetic factors or for control of phenotypes, it is important to understand the impact of parameters on the behavior of the network. 
\end{itemize}
One approach to dealing with the first limitation is to replace the Boolean model with a combinatorial model wherein the nodes are allowed to take on a finite set of values. 
The extension of Boolean models to finite multivalued models does not address the fact that there is no natural embedding of these combinatorial models into real-valued systems. 
This is a fundamental modeling issue that goes far beyond biology and thus it should not be a surprise that the mathematics community has provided alternative approaches to addressing it.

The approach we take in this paper is via a framework that we refer to as Dynamic Signatures Generated by Regulatory Networks (DSGRN).
The fundamental idea is that the finite combinatorial values of multilevel models are replaced by a cell complex \cite{lefschetz}. A cell complex is a finite combinatorial representation of a continuous phase space, where the connectedness is encoded in the contiguity relation between the cells of the complex~(see Section~\ref{sec:math}).
For the purpose of this introduction it is sufficient to know that there are explicit, verifiable conditions under which computations performed using the cell complex are valid in the continuous setting of ODEs. One of these conditions is that the cell complex is sufficiently large with the number of levels associated to each node related to the number of out-edges from each node. 
Thus, DSGRN provides a mathematically well-defined means of transitioning between combinatorial and continuous models for the dynamics of gene regulatory networks.
For the details of this transition the reader is referred to~\cite{rook_field:24}.

The focus of this paper is on the third limitation, that of parameterization. 
The efficacy of combinatorial models arises from the fact that they can only take on finitely many states, and thus their parameterization is inherently finite.
In Section~\ref{sec:math} we introduce the DSGRN parameter graph.
The importance of this structure is that:
\begin{enumerate}
\item Each node of the DSGRN parameter graph represents a multilevel combinatorial model.
\item Monotone Boolean models can be identified with a subset of parameter  graph nodes and their dynamics can be compared to the dynamics of their neighboring monotone multilevel models.
\item It provides a framework by which  multilevel combinatorial models can be identified with ODE models~\cite{rook_field:24}.
\end{enumerate}

We show on a series of examples that monotone Boolean functions often miss dynamics that is captured by the broader class of multilevel functions. For example, in Section~\ref{sec:more_examples} we present a three-node network in Figure~\ref{fig:EMT_network_intro}(Left) where monotone Boolean models detect bistability and tristability of fixed points, but DSGRN detects instances of monostability up through 6-stability, as well as the existence of periodic orbits.  This greatly enlarges the  array of potential dynamics and  has implications for networks of biological interest. As an example of a biologically relevant network, in Section~\ref{sec:more_examples} we consider the simplified epithelial-mesenchymal network important to cancer ~\cite{Hari_EMT_2020,Sullivan_EMT_2023,xin:cummins:gedeon} (see Figure~\ref{fig:EMT_network_intro}(Right)).  For parameters that we examined, DSGRN shows existence of $7$-stability with $4$ so-called intermediate states in addition to mesenchymal and epithelial states. This  multistable dynamics may be underestimated by Boolean models.

The DSGRN software efficiently enumerates and computes recurrent sets of dynamics of multilevel maps for a given network~\cite{DSGRN_repo}. Since monotone Boolean models miss potential dynamics, there is no reason not to use the multilevel discrete models when network size permits since these  provably describe network dynamics of all ODEs with sufficiently steep nonlinearities~\cite{rook_field:24}.

\begin{figure*}[!htb]
\centering
\includegraphics[width=0.7\linewidth]{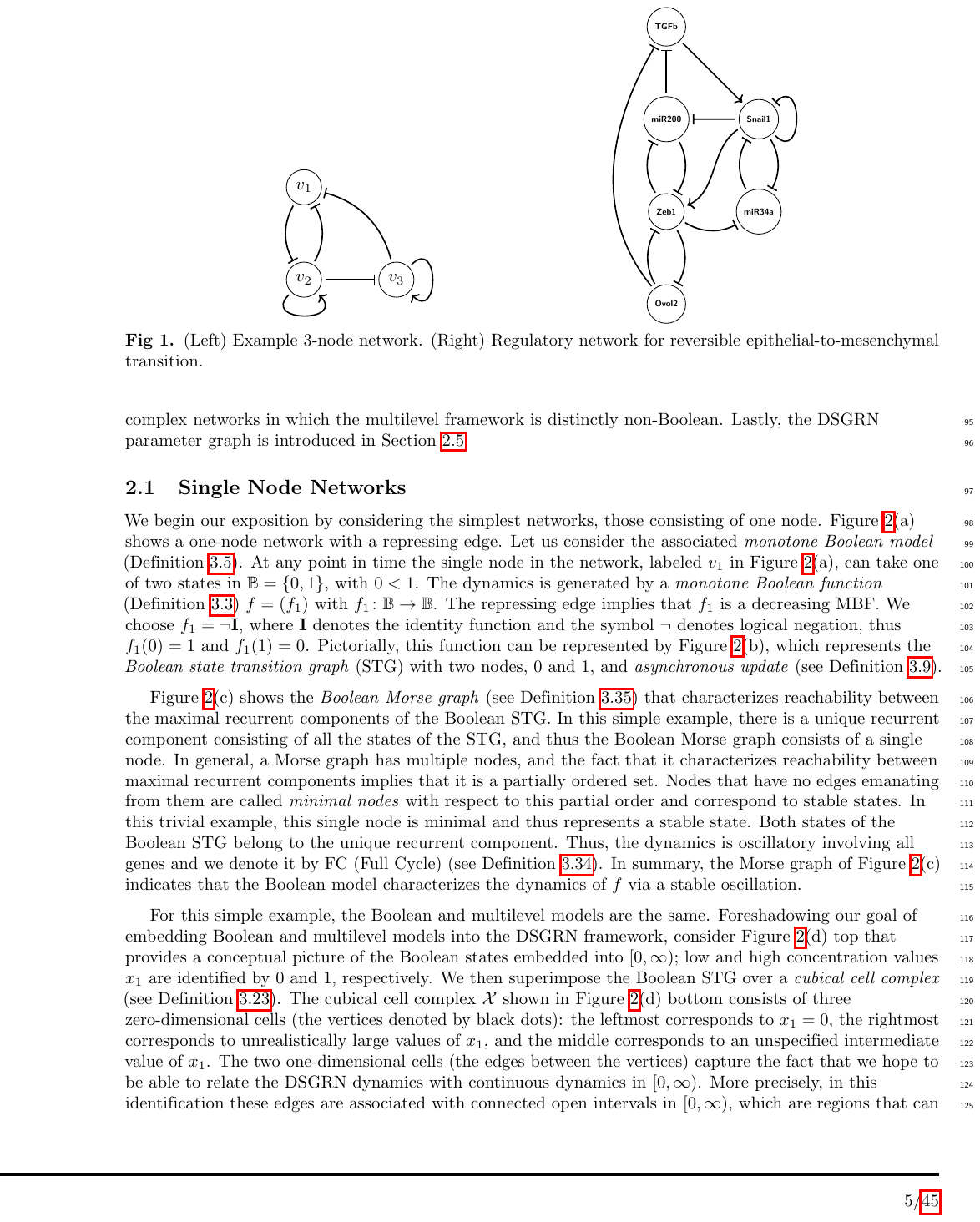}
\caption{(Left) Example 3-node network. (Right) Regulatory network for reversible epithelial-to-mesenchymal transition.}
\label{fig:EMT_network_intro}
\end{figure*}

We organize the paper as follows. We begin with simple examples that demonstrate the connection between monotone Boolean models and their more complex multilevel and DSGRN counterparts (Section~\ref{sec:examples_intro}). We continue with the rigorous mathematical details (Section~\ref{sec:math}), and then provide additional, richer examples (Section~\ref{sec:more_examples}) and software guidance (Appendix~\ref{sec:code}).

\section{Exposition through examples}
\label{sec:examples_intro}

As indicated in the introduction, the goal of this paper is to explain the relationship between monotone Boolean models, multilevel combinatorial models, and DSGRN.
However, the mathematics underlying these relationships is rather technical, and thus the goal of this section is to conceptually describe  the progression through examples. We provide two one-dimensional examples and one two-dimensional example to illustrate the steps from Boolean dynamics to DSGRN dynamics in Sections~\ref{subsec:snn} and \ref{subsec:TS}. In these examples, the multilevel combinatorial model is trivially the Boolean dynamics. In Sections~\ref{sec:multilevel_models} and \ref{sec:orders}, we introduce more complex networks in which the multilevel framework is distinctly non-Boolean. Lastly, the DSGRN parameter graph is introduced in Section~\ref{subsec:DSGRNPGBLattices}.

\subsection{Single Node Networks}
\label{subsec:snn}

We begin our exposition by considering the simplest networks, those consisting of one node.
Figure~\ref{fig:singleDown}(a) shows a one-node network with a repressing edge.
Let us consider the associated \textit{monotone Boolean model} (Definition~\ref{defn:MBM}).
At any point in time the single node in the network, labeled $v_1$ in Figure~\ref{fig:singleDown}(a),  can take one of two states in $\B = \setof{0, 1}$, with $0 < 1$.
The dynamics is generated by a \textit{monotone Boolean function} (Definition~\ref{defn:MBF}) $f = (f_1)$ with $f_1 \colon \B \to \B$.
The repressing edge implies that $f_1$ is a decreasing MBF.
We choose $f_1 = \neg \bI$, where $\bI$ denotes the identity function and  the symbol $\neg$ denotes logical negation, thus $f_1(0) = 1$ and $f_1(1) = 0$.
Pictorially, this function can be represented by Figure~\ref{fig:singleDown}(b), which represents the \textit{Boolean state transition graph} (STG) with two nodes, $0$ and $1$, and \textit{asynchronous update} (see Definition~\ref{def:Boolean_update}).

\begin{figure}[htp!]
\centering
\includegraphics[width=1.0\linewidth]{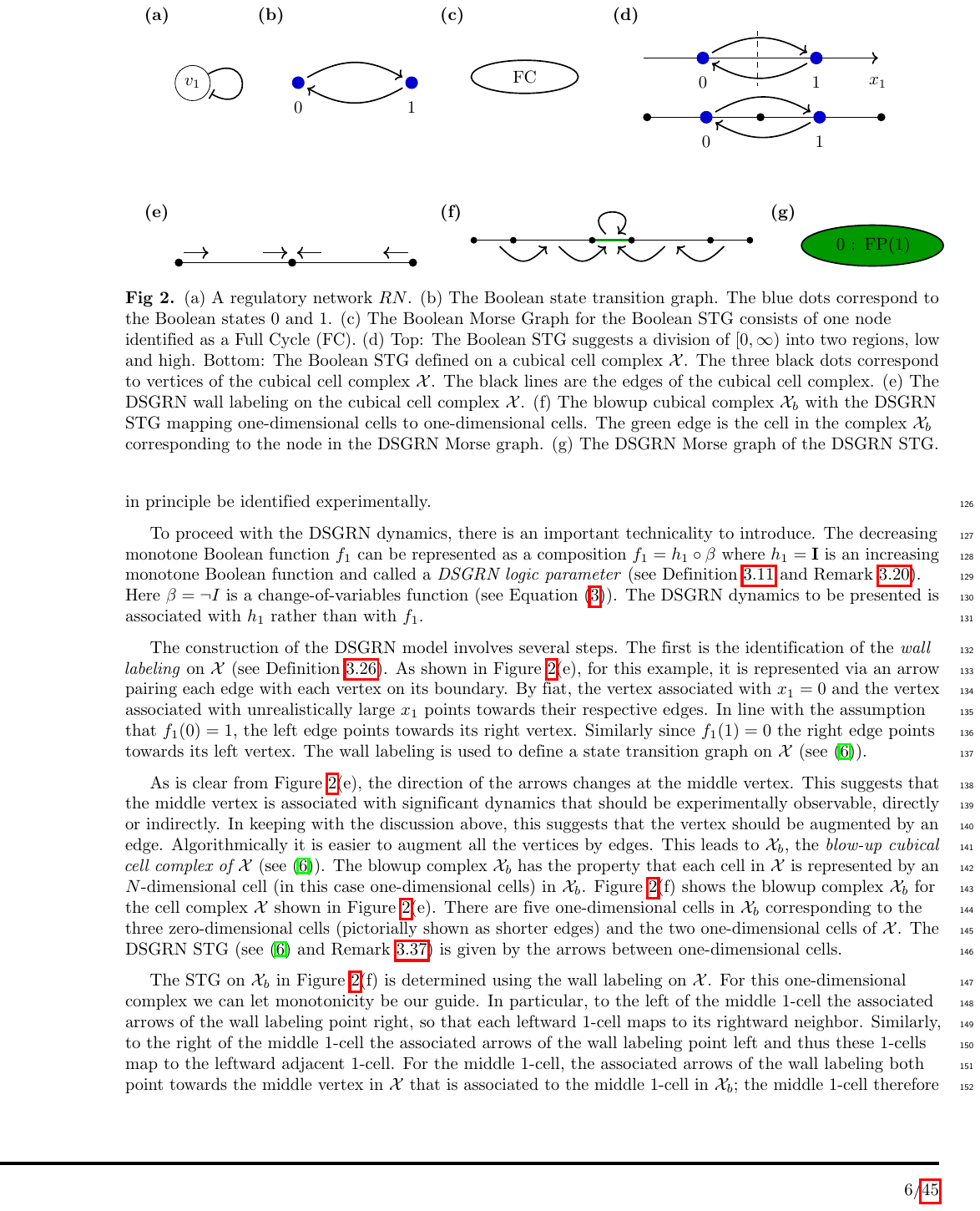}
\caption{(a) A regulatory network $RN$. (b) The Boolean state transition graph. The blue dots correspond to the Boolean states $0$ and $1$. (c) The Boolean Morse Graph for the Boolean STG consists of one node identified as a Full Cycle (FC). (d) Top: The Boolean STG suggests a division of $[0,\infty)$ into two regions, low and high. Bottom: The Boolean STG defined on a cubical cell complex $\cX$. The three black dots correspond to vertices of the cubical cell complex $\cX$. The black lines are the edges of the cubical cell complex.  (e) The DSGRN wall labeling on the cubical cell complex $\cX$. (f) The blowup cubical complex $\cX_b$ with the DSGRN STG mapping one-dimensional cells to one-dimensional cells. The green edge is the cell in the complex $\cX_b$ corresponding to the node in the DSGRN Morse graph. (g) The DSGRN Morse graph of the DSGRN STG.}
\label{fig:singleDown}
\end{figure}

Figure~\ref{fig:singleDown}(c) shows the \textit{Boolean Morse graph} (see Definition~\ref{def:Boolean_MG}) that characterizes reachability between the maximal recurrent components of the Boolean STG.
In this simple example, there is a unique recurrent component consisting of all the states of the STG, and thus the Boolean Morse graph consists of a single node.
In general, a Morse graph has multiple nodes, and the fact that it characterizes reachability between maximal recurrent components implies that it is a partially ordered set.
Nodes that have no edges emanating from them are called \emph{minimal nodes} with respect to this partial order and correspond to stable states.
In this trivial example, this single node is minimal and thus represents a stable state.
Both states of the Boolean STG belong to the unique recurrent component.
Thus, the dynamics is  oscillatory involving all genes
and  we denote it by FC (Full Cycle) (see Definition~\ref{def:Boolean_MG_labels}). 
In summary, the Morse graph of Figure~\ref{fig:singleDown}(c) indicates that the Boolean model characterizes the dynamics of $f$ via a stable oscillation.

For this simple example, the Boolean and multilevel models are the same.
Foreshadowing our goal of embedding Boolean and multilevel models into the DSGRN framework, consider Figure~\ref{fig:singleDown}(d) top that provides a conceptual picture of the Boolean states embedded into $[0,\infty)$; low and high concentration values $x_1$  are identified by $0$ and  $1$, respectively. We then superimpose the Boolean STG over a \textit{cubical cell complex} (see Definition~\ref{defn:Xcomplex}).
The  cubical cell complex $\cX$ shown in Figure~\ref{fig:singleDown}(d) bottom consists of three zero-dimensional cells (the vertices  denoted by black dots): the leftmost corresponds to $x_1 =0$, the rightmost corresponds to unrealistically large values of $x_1$, and the middle corresponds to an unspecified intermediate value of $x_1$.
The two one-dimensional cells (the edges between the vertices) capture the fact that we hope to be able to relate the DSGRN dynamics with continuous dynamics in $[0,\infty)$.
More precisely, in this identification these edges are associated with connected open intervals in $[0,\infty)$, which are regions that can in principle be identified experimentally.

To proceed with the DSGRN dynamics, there is an important technicality to introduce. The decreasing monotone Boolean function $f_1$ can be represented as a composition $f_1=h_1\circ \beta$ where $h_1= {\bf I}$ is an  increasing monotone Boolean function and called a \textit{DSGRN logic parameter} (see Definition~\ref{def:fi} and Remark~\ref{rem:increasing_MBFs}). Here $\beta=\neg I$ is a change-of-variables function (see Equation~\eqref{eq:beta}). The DSGRN dynamics to be presented is associated with $h_1$ rather than with $f_1$.

The construction of the DSGRN model involves several steps.
The first is the identification of the \emph{wall labeling} on $\cX$ (see Definition~\ref{def:wall_labeling}).
As shown in Figure~\ref{fig:singleDown}(e), for this example, it is represented via an arrow pairing each edge with each vertex on its boundary.
By fiat, the vertex associated with $x_1 =0$ and the vertex associated with unrealistically large $x_1$ points towards their respective edges.
In line with the assumption that $f_1(0) =1$, the left edge points towards its right vertex. 
Similarly since $f_1(1) =0$ the right edge points towards its left vertex.
The wall labeling is used to define a state transition graph on $\cX$ (see \cite{rook_field:24}).

As is clear from Figure~\ref{fig:singleDown}(e), the direction of the arrows changes  at the middle vertex.
This suggests that the middle vertex is associated with significant dynamics that should be experimentally observable, directly or indirectly.
In keeping with the discussion above, this suggests that the vertex should be augmented by an edge.
Algorithmically it is easier to augment all the vertices by edges.
This leads to $\cX_b$, the \textit{blow-up cubical cell complex of $\cX$} (see \cite{rook_field:24}). The blowup complex $\cX_b$ has the property that each cell in $\cX$ is represented by an $N$-dimensional cell (in this case one-dimensional cells) in $\cX_b$. Figure~\ref{fig:singleDown}(f) shows the blowup complex $\cX_b$ for the cell complex $\cX$ shown in Figure~\ref{fig:singleDown}(e). There are five one-dimensional cells in $\cX_b$ corresponding to the three zero-dimensional cells (pictorially shown as shorter edges) and the two one-dimensional cells of $\cX$. The DSGRN STG (see \cite{rook_field:24} and Remark~\ref{DSGRN_STG}) is given by the arrows between one-dimensional cells.

The STG on $\cX_b$ in Figure~\ref{fig:singleDown}(f) is determined using the wall labeling on $\cX$.
For this one-dimensional complex we can let monotonicity be our guide.
In particular, to the left of the middle 1-cell the associated arrows of the wall labeling point right, so that each leftward 1-cell maps to its rightward neighbor. 
Similarly, to the right of the middle 1-cell the associated arrows of the wall labeling point left and thus these 1-cells map to the leftward adjacent 1-cell. 
For the middle 1-cell, the associated arrows of the wall labeling both point towards the middle vertex in $\cX$ that is associated to the middle 1-cell in $\cX_b$; the middle 1-cell therefore has no exterior mapping and so we include a self-edge in the DSGRN STG.

The STG on $\cX_b$ has a single recurrent set, the single cell in the middle of $\cX_b$ colored green suggesting a fixed point, and thus the \textit{DSGRN Morse graph} (see Definition~\ref{def:DSGRN_MG}) shown in Figure~\ref{fig:singleDown}(g) consists of a single node.
The annotation of the node of the Morse graph takes the form $p: $ FP$(k)$, where $p$ in the indexing of the Morse node, FP suggesting a fixed point, and $k$ indicates the number of cells in the recurrent set that defines the Morse node (see Definition~\ref{def:Boolean_MG_labels}).

This simple example shows that the dynamics expressed via the Boolean model and the DSGRN model can be fundamentally different.
If one adopts the perspective that ODEs provide an appropriate model for the dynamics associated with the regulatory network  of Figure~\ref{fig:singleDown}(a), then these differences have important implications and demonstrate limitations in purely Boolean modeling. If we assume that $[0,\infty)$ is an appropriate phase space for the dynamics of a single node network, then it is impossible to have oscillatory dynamics as suggested by  Figure~\ref{fig:singleDown}(c).
In contrast the DSGRN Morse graph of Figure~\ref{fig:singleDown}(g) suggests that dynamics converges to a stable fixed point.

\begin{figure}[htp!]
\centering
\includegraphics[width=0.95\linewidth]{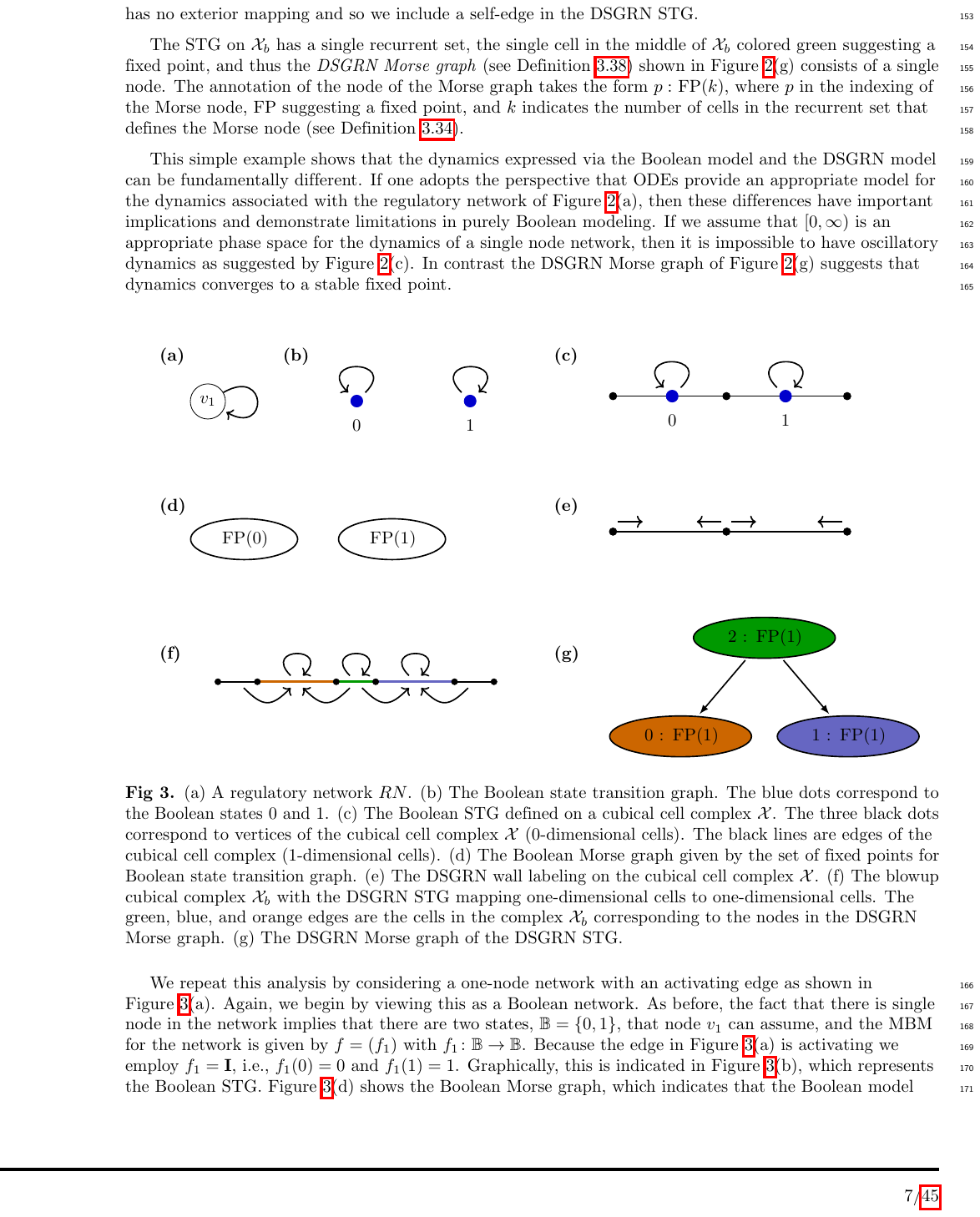}
\caption{(a) A regulatory network $RN$. (b) The Boolean state transition graph. The blue dots correspond to the Boolean states $0$ and $1$. (c) The Boolean STG defined on a cubical cell complex $\cX$. The three black dots correspond to vertices of the cubical cell complex $\cX$ ($0$-dimensional cells). The black lines are edges of the cubical cell complex ($1$-dimensional cells). (d) The Boolean Morse graph given by the set of fixed points for Boolean state transition graph. (e) The DSGRN wall labeling on the cubical cell complex $\cX$. (f) The blowup cubical complex $\cX_b$ with the DSGRN STG mapping one-dimensional cells to one-dimensional cells. The green, blue, and orange edges are the cells in the complex $\cX_b$ corresponding to the nodes in the DSGRN Morse graph. (g) The DSGRN Morse graph of the DSGRN STG.}
\label{fig:singleUp}
\end{figure}

We repeat this analysis by considering a one-node network with an activating edge as shown in Figure~\ref{fig:singleUp}(a).
Again, we begin by viewing this as a Boolean network. 
As before, the fact that there is single node in the network  implies that there are two states, $\B = \setof{0, 1}$, that node $v_1$ can assume, and the MBM for the network is given by $f = (f_1)$ with $f_1 \colon \B \to \B$. 
Because the edge in Figure~\ref{fig:singleUp}(a) is activating we  
employ $f_1 = {\bf I}$, i.e., $f_1(0) = 0$ and $f_1(1) = 1$. 
Graphically, this is indicated in Figure~\ref{fig:singleUp}(b), which represents the Boolean STG.
Figure~\ref{fig:singleUp}(d) shows the Boolean Morse graph, which indicates that the Boolean model characterizes the dynamics of $f_1$ via two fixed points labeled as FP$(0)$ and FP$(1)$, denoting the states of the fixed points.
Furthermore, the fixed points are minimal and thus represent stable states.

As in the previous example, the cubical complex $\cX$, shown in Figure~\ref{fig:singleUp}(c), consists of three zero-dimensional cells and two one-dimensional cells.
However, the wall labeling is different (see Figure~\ref{fig:singleUp}(e)). 
Again, by fiat the vertex associated with $x_1 =0$ and the vertex associated with unrealistically large $x_1$ points towards their respective edges.
In line with the assumption that $f_1(0) =0$, the left edge points  away from its right vertex. 
Similarly, since $f_1(1) =1$ the right edge points away from its left vertex.

The blow-up cubical complex $\cX_b$ and the associated STG is shown in Figure~\ref{fig:singleUp}(f).
Figure~\ref{fig:singleUp}(g) shows the DSGRN Morse graph consisting of three nodes (indexed by 0, 1, and 2) that arise from the self-edges at the one-dimensional cells colored (from left to right) in orange, green, and blue. 
Because the recurrent components consist of single states of the DSGRN STG, we label them as  FP$(1)$.
Recall that $k$ in the DSGRN labels FP$(k)$ indicates the number of cells in the recurrent component and hence it has a fundamentally different meaning than the fixed point states encoded in the Boolean Morse graph (see Definition~\ref{def:Boolean_MG_labels}).
The arrows from the green cell to the orange and blue cells gives rise to the ordering of green to blue and orange shown in the DSGRN Morse graph. 
The minimality of the blue and orange nodes is interpreted to mean that they represent stable objects, while the green node represents an unstable Morse set that contains  a fixed point.

As in the previous example we see a difference between the Boolean and DSGRN Morse graphs.
In this case the information from the Boolean Morse graph embeds into the DSGRN Morse graph.
The additional information captured by the DSGRN Morse graph allows for consistency with the dynamics of a one-dimensional ODE.
If we assume that $[0,\infty)$ is an appropriate phase space for the dynamics, then it is impossible to have two attractors without a separatrix. The DSGRN Morse graph identifies this separatrix via the green node.

\subsection{The toggle switch}
\label{subsec:TS}

We introduce the toggle-switch~\cite{toggle}, a well-studied two-node network, to begin to demonstrate that the dynamics of multi-node networks can be organized using products based on single nodes. The network, shown in Figure~\ref{fig:toggle}(a), consists of two nodes with mutually repressive edges, and so the corresponding MBM is $f = (f_1,f_2)$. 

\begin{figure}[htp!]
\centering 
\includegraphics[width=0.8\linewidth]{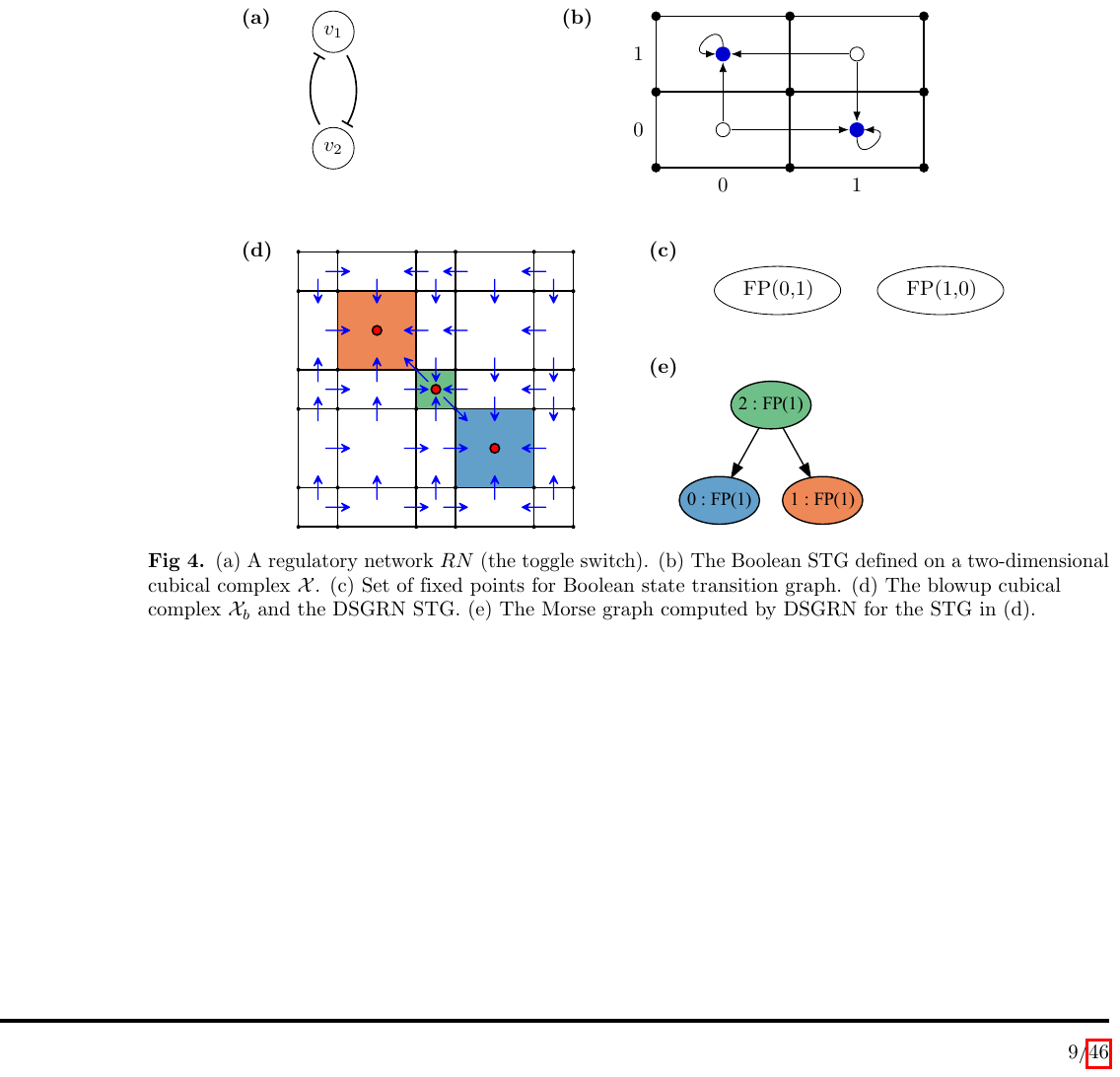}
\caption{(a) A regulatory network $RN$ (the toggle switch). (b) The Boolean STG defined on a two-dimensional cubical complex $\cX$. (c) Set of fixed points for Boolean state transition graph. (d) The blowup cubical complex $\cX_b$ and the DSGRN STG. (e) The Morse graph computed by DSGRN for the STG in (d).}
\label{fig:toggle}
\end{figure}

As in the examples of Section~\ref{subsec:snn}, each node can assume one of the two states in $\B = \setof{ 0,1 }$ and 
we consider $f_1 = f_2 = \neg {\bf I}$. As before, the repressing edges introduce a change of variables $\beta$, given by Equation \eqref{eq:beta}, that transform $f_1$ and $f_2$ into increasing MBFs $h_1$ and $h_2$. In this example $h_1 = h_2 = {\bf I}$. The Boolean functions $f_i$ are used to produce the Boolean STG represented in Figure~\ref{fig:toggle}(b). The Boolean STG has four nodes $\setof{(0,0),(0,1),(1,0),(1,1)}$ and the asynchronous update edges are shown in Figure~\ref{fig:toggle}(b) within a cubical complex $\cX$. Given that there are two nodes in the network, $v_1$ and $v_2$, the associated concentrations $x_1$ and $x_2$ lie in the $[0,\infty)^2$. Thus, $\cX$ is a two-dimensional cubical complex conceptually relating the Boolean states with domains in $[0,\infty)^2$.

Figure~\ref{fig:toggle}(c) shows the Boolean Morse graph. The nodes are the two fixed points that arise from the self-edges at states $(0,1)$ and $(1,0)$ and are denoted by FP(0,1) and FP(1,0), respectively. They are minimal within the Boolean Morse graph and hence represent stable states.

Figure~\ref{fig:toggle}(d) shows the blowup cubical complex $\cX_b$, where each zero-, one-, or two-dimensional cell of the cubical complex $\cX$ in Figure~\ref{fig:toggle}(b)  is represented by a two-dimensional cell. The arrows indicate the DSGRN STG derived from $h_1$ and $h_2$ and the dots indicate self-edges. Figure~\ref{fig:toggle}(e) shows the DSGRN Morse graph, which consists of three nodes that arise from the self-edges at the two-dimensional cells colored (from upper left to lower right) in orange, green, and blue. Again, because the recurrent components consist of single states of the DSGRN STG we label them as fixed points FP. The arrows from the green cell to the orange and blue cells gives rise to the ordering of green to blue and orange shown in the DSGRN Morse graph. The minimality of the blue and orange nodes is interpreted to mean that they represent stable objects. We index the nodes of the Morse graph as nodes $0$, $1$, and $2$ and label each node as FP($k)$, where $k$ is the number of cells in the complex $\cX_b$ corresponding to the Morse node (in this example $k = 1$ for all nodes).

The  differences between the Boolean Morse graph and the DSGRN Morse graph for the toggle switch are the same as those of the single-node network with an activating edge. Again, from the perspective of ODEs, if we assume that $[0,\infty)^2$ is an appropriate phase space for the dynamics, then it is impossible to have two attractors without a separatrix as is suggested by the Boolean Morse graph. However, the DSGRN Morse graph identifies this separatrix via the green node.

\subsection{Multilevel discrete models}
\label{sec:multilevel_models}

To describe the relationship between the set of  monotone Boolean functions and multilevel discrete models we need to consider networks with nodes that have more than one target. A particularly simple example is shown in Figure~\ref{fig:network_boolean}(a). This is a two-node network where node $v_1$ has two targets (itself and node $v_2$) and two inputs, while node $v_2$ has one target (node $v_1$) and one input (also node $v_1$).

The key change from a Boolean model to a multilevel model is the assignment of a distinct ``threshold'' to each edge in a regulatory network $RN$ (see Section~\ref{sec:multilevel_RN}). Each threshold is then associated with a choice of an MBF, permitting different downstream effects from a source node to a target node. For intuition, in a Boolean model, a single threshold of node $v_1$ in Figure~\ref{fig:network_boolean}(a) controls the impact of $v_1$ on both of its outgoing edges, $v_1 \to v_1$ and $v_1 \dashv v_2$, as demonstrated in the Boolean STG shown in Figures~\ref{fig:network_boolean}(b) and \ref{fig:network_boolean}(c). We then ``unfold'' the threshold for $v_1$ into two thresholds, one for each out-edge, plus a choice of threshold ordering, as shown in Figures~\ref{fig:multi1}(a) and \ref{fig:multi2}(a). We initially consider an unfolding in which the MBFs are identical for the two thresholds (Figure~\ref{fig:multi1}), analogous to the Boolean setting, and then proceed to a more general example where the two functions are not identical (Figure~\ref{fig:multi2}).

\begin{figure}[htp!]
\centering
\includegraphics[width=0.85\linewidth]{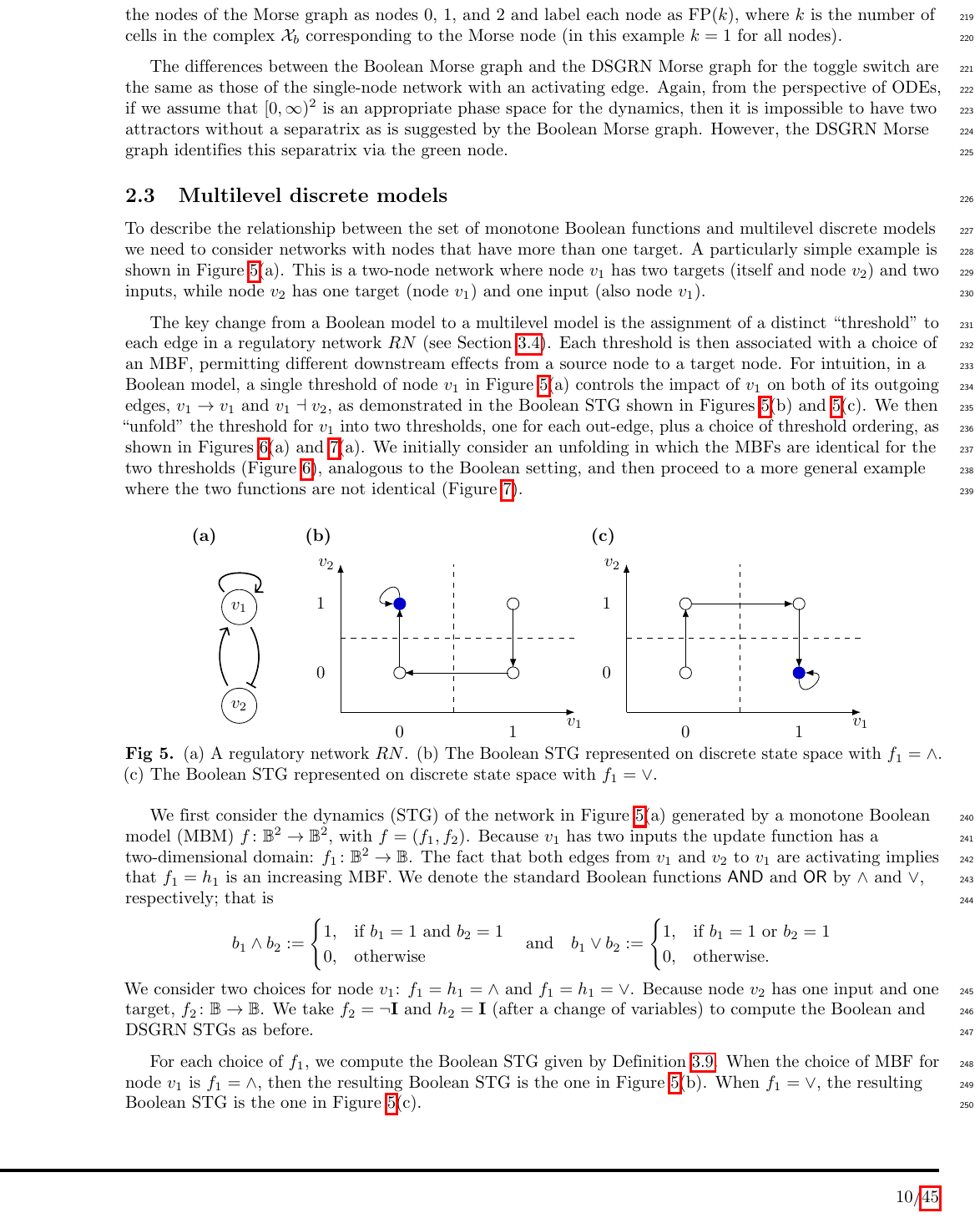}
\caption{(a) A regulatory network $RN$. (b) The Boolean STG represented on discrete state space with $f_1 = \wedge$. (c) The Boolean STG represented on discrete state space with $f_1= \vee$.}
\label{fig:network_boolean}
\end{figure}

We first consider the dynamics (STG) of the network in Figure~\ref{fig:network_boolean}(a) generated by a monotone Boolean model (MBM) $f \colon \B^2 \to \B^2$, with $f = (f_1, f_2)$. Because $v_1$ has two inputs the update function has a two-dimensional domain: $f_1 \colon \B^2 \to \B$. The fact that both edges from $v_1$ and $v_2$ to $v_1$ are activating implies that $f_1 = h_1$ is an increasing MBF. 
We denote the standard  Boolean functions \textsf{AND} and \textsf{OR} by $\wedge$ and $\vee$, respectively; that is
\[
b_1 \wedge b_2 :=
\begin{cases}
1, & \text{if} ~ b_1 =1 ~\text{and}~ b_2 = 1 \\
0, & \text{otherwise}
\end{cases}
\quad \text{and} \quad
b_1 \vee b_2 :=
\begin{cases}
1, & \text{if} ~ b_1 = 1 ~\text{or}~ b_2 = 1 \\
0, & \text{otherwise}.
\end{cases}
\]
We consider two choices for node $v_1$: $f_1 = h_1 = \wedge$ and $f_1 = h_1 = \vee$. Because node $v_2$ has one input and one target, $f_2 \colon \B \to \B$.  We take $f_2 = \neg {\bf I}$ and $h_2 = {\bf I}$ (after a change of variables) to compute the Boolean and DSGRN STGs as before.

For each choice of $f_1$, we compute the Boolean STG given by Definition~\ref{def:Boolean_update}. When the choice of MBF for node $v_1$ is $f_1 = \wedge$, then the resulting Boolean STG is the one in Figure~\ref{fig:network_boolean}(b). When $f_1 = \vee$, the resulting Boolean STG is the one in Figure~\ref{fig:network_boolean}(c).

We now ``unfold'' the threshold for $v_1$ into two thresholds, that is, we consider one MBF for each edge in the network. Hence we have the increasing MBFs (see Definition~\ref{def:fi}) $h_1^1$ and $h_1^2$, corresponding to the out-edges of node $v_1$, and $h_2^1 = \mathbf I$ corresponding to the out-edge of node $v_2$. The associated MBM $h = ((h_1^1, h_1^2), h_2^1)$ is called a \textit{DSGRN logic parameter} (see Definition~\ref{def:fi} and Remark~\ref{rem:increasing_MBFs}), where $h_i^j$ denotes an MBF associated to the edge from $v_i$ to $v_j$. We first consider the case where $h_1^1 = h_1^2 = \wedge$ are identical. Since now we have two out-edges for node $v_1$ and an increasing MBF corresponding to each edge, we need to decide the order in which the MBFs $h_1^1$ and $h_1^2$ are turned on. The chosen ordering is called an order parameter (see Definition~\ref{defn:order_parameter}). For this example, we allow $h_1^2$ to be turned on before $h_1^1$; i.e., node $v_1$ affects $v_2$ at a lower concentration than it affects itself. We use the notation $\theta_1(2) < \theta_1(1)$ to denote this ordering, where $\theta_i(j)$ denotes the threshold associated to the edge from $v_i$ to $v_j$. The logic and order parameters together are called a \textit{DSGRN parameter} (see Definition~\ref{def:DSGRN_multilevel_parameter_graph}), and define the multilevel  STG (see Definition~\ref{def:Boolean_update}) on the state space $\cD = \setof{(0,0),(1,0),(2,0), (0,1), (1,1), (2,1)}$ shown in Figure~\ref{fig:multi1}(a). Figure~\ref{fig:multi1}(b) shows the multilevel Morse graph that consists of a single node. Since the associated recurrent component consists of the single state $(0, 1)$ of the multilevel STG, we denote it by FP(0,1). Figure~\ref{fig:multi1}(c) shows the blowup cubical complex $\cX_b$ and the DSGRN STG. The same fixed point is identified by DSGRN as shown in the DSGRN Morse graph in Figure~\ref{fig:multi1}(d).

\begin{figure}[htp!]
\centering
\includegraphics[width=0.8\linewidth]{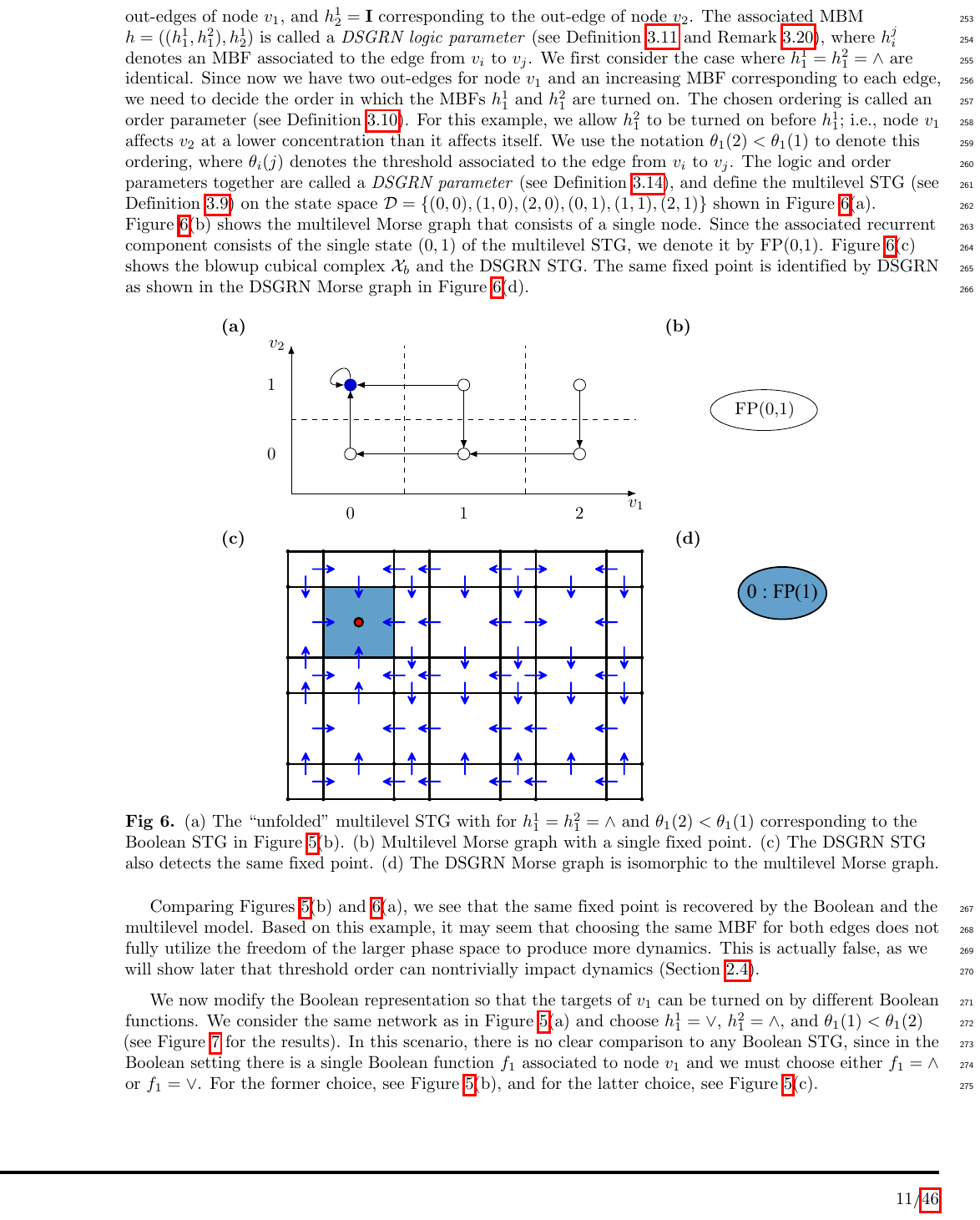}
\caption{(a) The ``unfolded'' multilevel STG with for $h_1^1 = h_1^2 = \wedge$ and $\theta_1(2) < \theta_1(1)$ corresponding to the Boolean STG in Figure~\ref{fig:network_boolean}(b). (b) Multilevel Morse graph with a single fixed point. (c) The DSGRN STG also detects the same fixed point. (d) The DSGRN Morse graph is isomorphic to the multilevel Morse graph.}
\label{fig:multi1}
\end{figure}

Comparing Figures~\ref{fig:network_boolean}(b) and~\ref{fig:multi1}(a), we see that the same fixed point is recovered by the Boolean and the  multilevel model. 
Based on this example, it may seem that choosing the same MBF for both edges does not fully utilize the freedom of the larger phase space to produce more dynamics. This is actually false, as we will show later that threshold order can nontrivially impact dynamics (Section~\ref{sec:orders}).

We now modify the Boolean representation so that the targets of $v_1$ can be turned on by different Boolean functions. We consider the same network as in Figure~\ref{fig:network_boolean}(a) and choose $h_1^1= \vee$, $h_1^2 = \wedge$, and $\theta_1(1) < \theta_1(2)$ (see Figure~\ref{fig:multi2} for the results). In this scenario, there is no clear comparison to any Boolean STG, since in the Boolean setting there is a single Boolean function $f_1$ associated to node $v_1$ and we must choose either $f_1 = \wedge$ or $f_1 = \vee$. For the former choice, see Figure~\ref{fig:network_boolean}(b), and for the latter choice, see Figure~\ref{fig:network_boolean}(c).

\begin{figure}[htp!]
\centering
\includegraphics[width=0.8\linewidth]{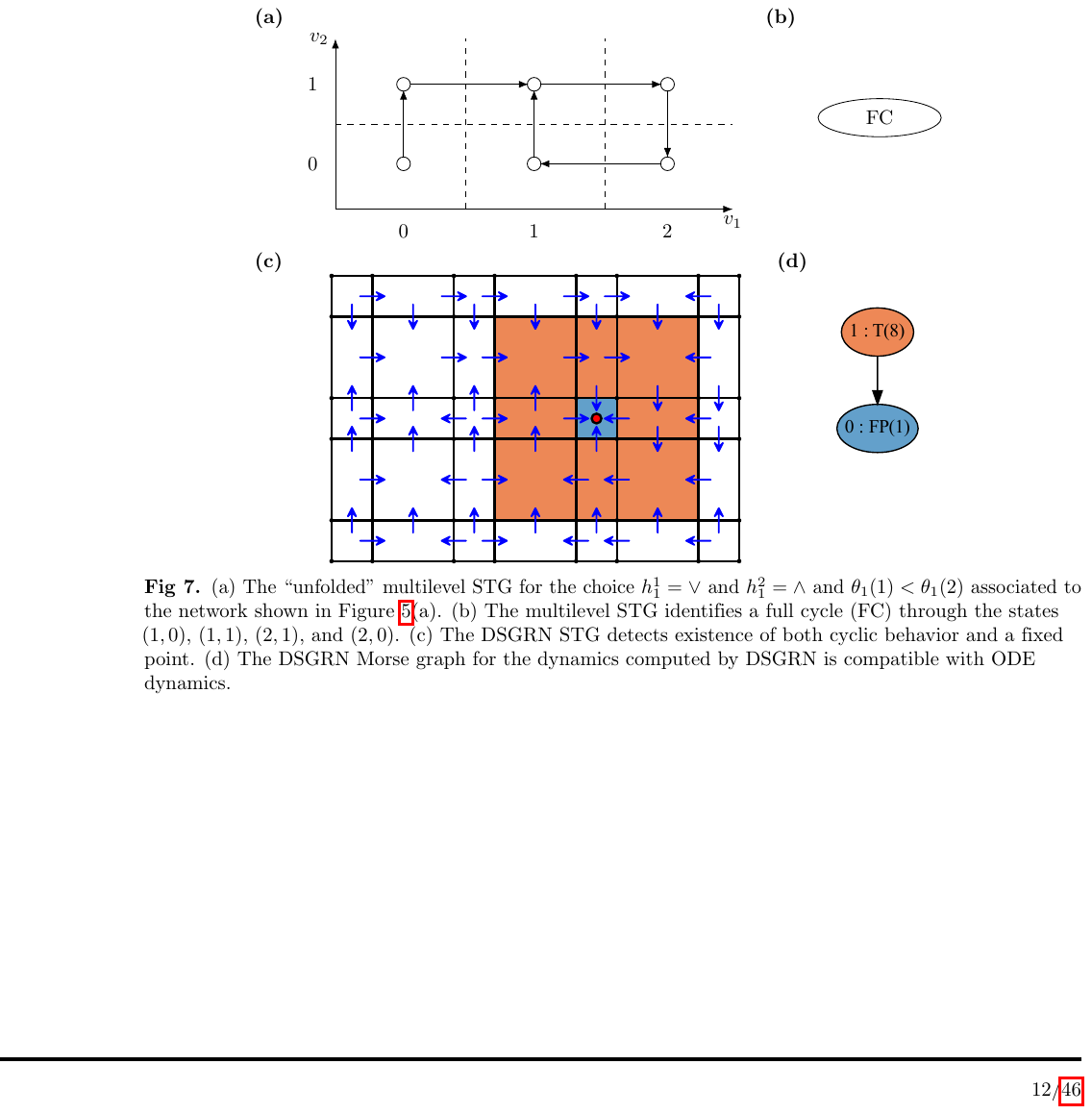}
\caption{(a) The ``unfolded'' multilevel STG for the choice $h_1^1= \vee$ and $h_1^2 = \wedge$ and $\theta_1(1) < \theta_1(2)$ associated to the network shown in Figure~\ref{fig:network_boolean}(a). (b) The multilevel STG identifies a full cycle (FC) through the states $(1,0)$, $(1,1)$, $(2,1)$, and $(2,0)$. (c) The DSGRN STG detects existence of both cyclic behavior and a fixed point. (d) The DSGRN Morse graph for the dynamics computed by DSGRN is compatible with ODE dynamics.}
\label{fig:multi2}
\end{figure}

The choice $h_1^1= \vee$ and $h_1^2 = \wedge$ substantially changes the state transition graph in Figure~\ref{fig:multi2}(a) from the previous choice of $h_1^1 = h_1^2 = \wedge$ in Figure~\ref{fig:multi1}(a). Additionally, the multilevel STG in Figure~\ref{fig:multi2}(a) is substantially different from both of the Boolean STGs in Figures~\ref{fig:network_boolean}(b) and (c).
The multilevel STG now indicates the existence of a cyclic strongly connected component denoted by FC (Full Cycle) shown in Figure~\ref{fig:multi2}(b).

The DSGRN STG in Figure~\ref{fig:multi2}(c) shows the existence of both cyclic behavior and a fixed point. However the DSGRN Morse graph (Figure~\ref{fig:multi2}(d)) identifies the cyclic behavior as possibly representing trivial dynamics (see Definition~\ref{def:DSGRN_MG}) and hence labels it as $T(8)$ since it is composed of $8$ cells. The reason the Morse graph in Figure~\ref{fig:multi2}(d) labels the cycling Morse node as trivial is that the state transition graph in Figure~\ref{fig:multi2}(c) does not provide enough information to decide whether trajectories of the continuous system just pass through the orange region and converge to the fixed point in the blue region, or if there is a periodic orbit within the orange region.

As we briefly explain in Section~\ref{sec:Morse_graph}, DSGRN software computes an algebraic-topological invariant called the \textit{Conley index} for each Morse node.  A nontrivial Conley index for a Morse node provides evidence for nontrivial (i.e., recurrent) dynamics within the cells of the cubical complex associated to that node. A trivial Conley index, as was computed for Morse node T(8) in our example, does not provide enough information to confirm the existence of recurrent dynamics.

\subsection{Threshold order affects network dynamics}
\label{sec:orders}

It is perhaps not surprising that having more states in the state transition graph of the multi-level discrete map than in the Boolean map and the freedom of having separate Boolean functions associated to different edges produces richer dynamics. This is illustrated by comparing Figure~\ref{fig:network_boolean}, which shows dynamics of monotone Boolean functions, with Figure~\ref{fig:multi2} that shows a more general scenario.

We now consider the network in Figure~\ref{fig:22network}(a) and compare the Boolean and the DSGRN state transition graphs when we change the order parameters. For the Boolean STG we consider the following MBFs: $f_1(b_1, b_2) = b_1 \wedge b_2$ and $f_2(b_1, b_2) = \neg b_1 \wedge b_2$. The state transition graph for this Boolean system is in Figure~\ref{fig:22network}(b).

\begin{figure}[htp!]
\centering
\includegraphics[width=0.53\linewidth]{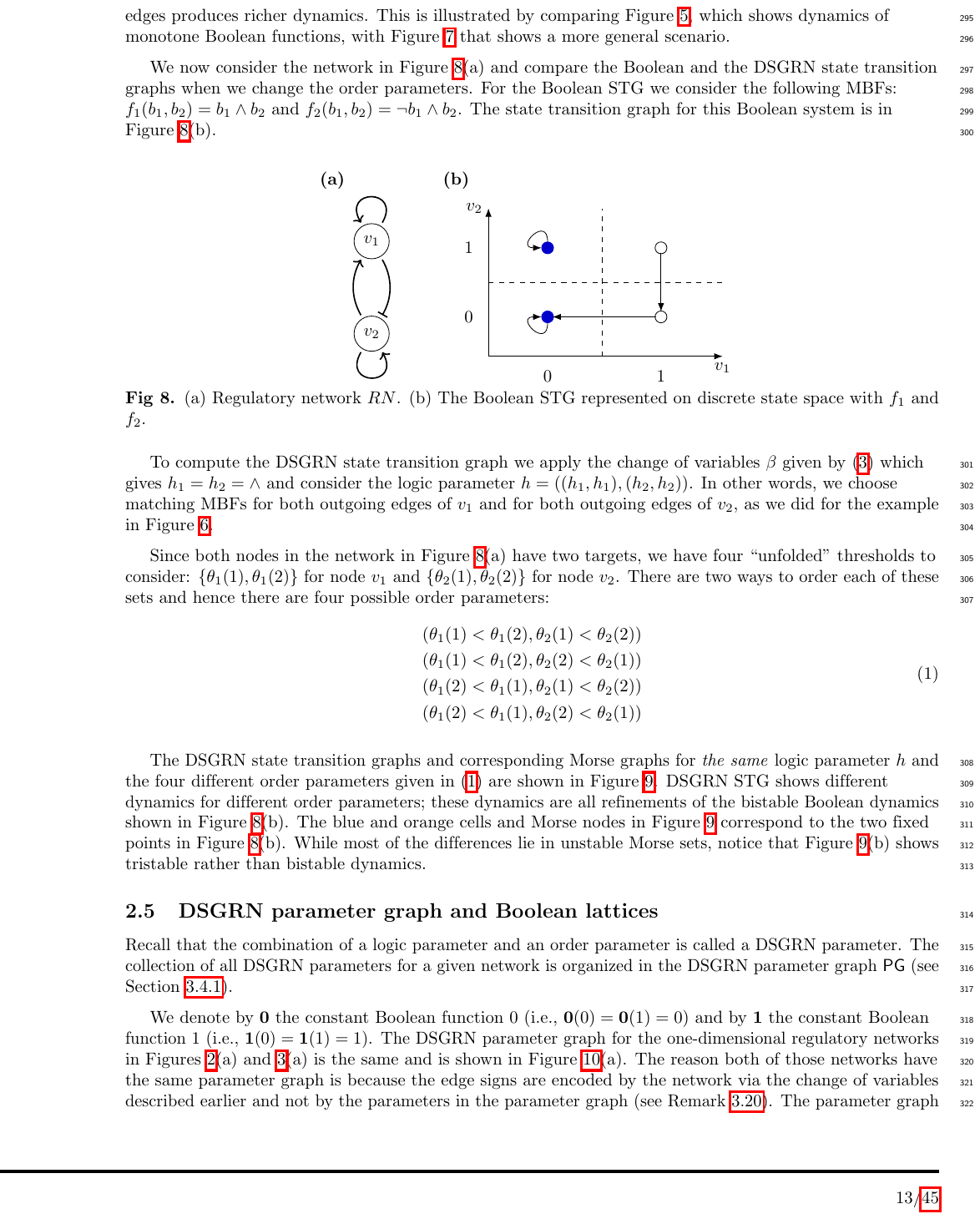}
\caption{(a) Regulatory network $RN$. (b) The Boolean STG represented on discrete state space with $f_1$ and $f_2$.}
\label{fig:22network}
\end{figure}

To compute the DSGRN state transition graph we apply the change of variables $\beta$ given by \eqref{eq:beta} which gives $h_1 = h_2 = \wedge$ and consider the logic parameter $h = ((h_1, h_1), (h_2, h_2))$. In other words, we choose matching MBFs for both outgoing edges of $v_1$ and for both outgoing edges of $v_2$, as we did for the example in Figure~\ref{fig:multi1}.

Since both nodes in the network in Figure~\ref{fig:22network}(a) have two targets, we have four ``unfolded'' thresholds to consider: $\{ \theta_1(1), \theta_1(2) \}$ for node $v_1$ and $\{ \theta_2(1), \theta_2(2) \}$ for node $v_2$. There are two ways to order each of these sets and hence there are four possible order parameters:
\begin{equation}
\label{order:4p}
\begin{aligned}
(\theta_1(1) < \theta_1(2), \theta_2(1) < \theta_2(2)) \\
(\theta_1(1) < \theta_1(2), \theta_2(2) < \theta_2(1)) \\
(\theta_1(2) < \theta_1(1), \theta_2(1) < \theta_2(2)) \\
(\theta_1(2) < \theta_1(1), \theta_2(2) < \theta_2(1))
\end{aligned}
\end{equation}

The DSGRN state transition graphs and corresponding Morse graphs for \emph{the same} logic parameter $h$ and the four different order parameters given in (\ref{order:4p}) are shown in Figure~\ref{fig:order_flow}. DSGRN STG shows different dynamics for different order parameters;  these dynamics are all refinements of the bistable Boolean dynamics shown in Figure~\ref{fig:22network}(b). The blue and orange cells and Morse nodes in Figure~\ref{fig:order_flow} correspond to the two fixed points in Figure~\ref{fig:22network}(b). While most of the differences lie in unstable Morse sets, notice that Figure~\ref{fig:order_flow}(b) shows tristable rather than bistable dynamics.

\begin{figure}
\centering
\includegraphics[width=1.0\linewidth]{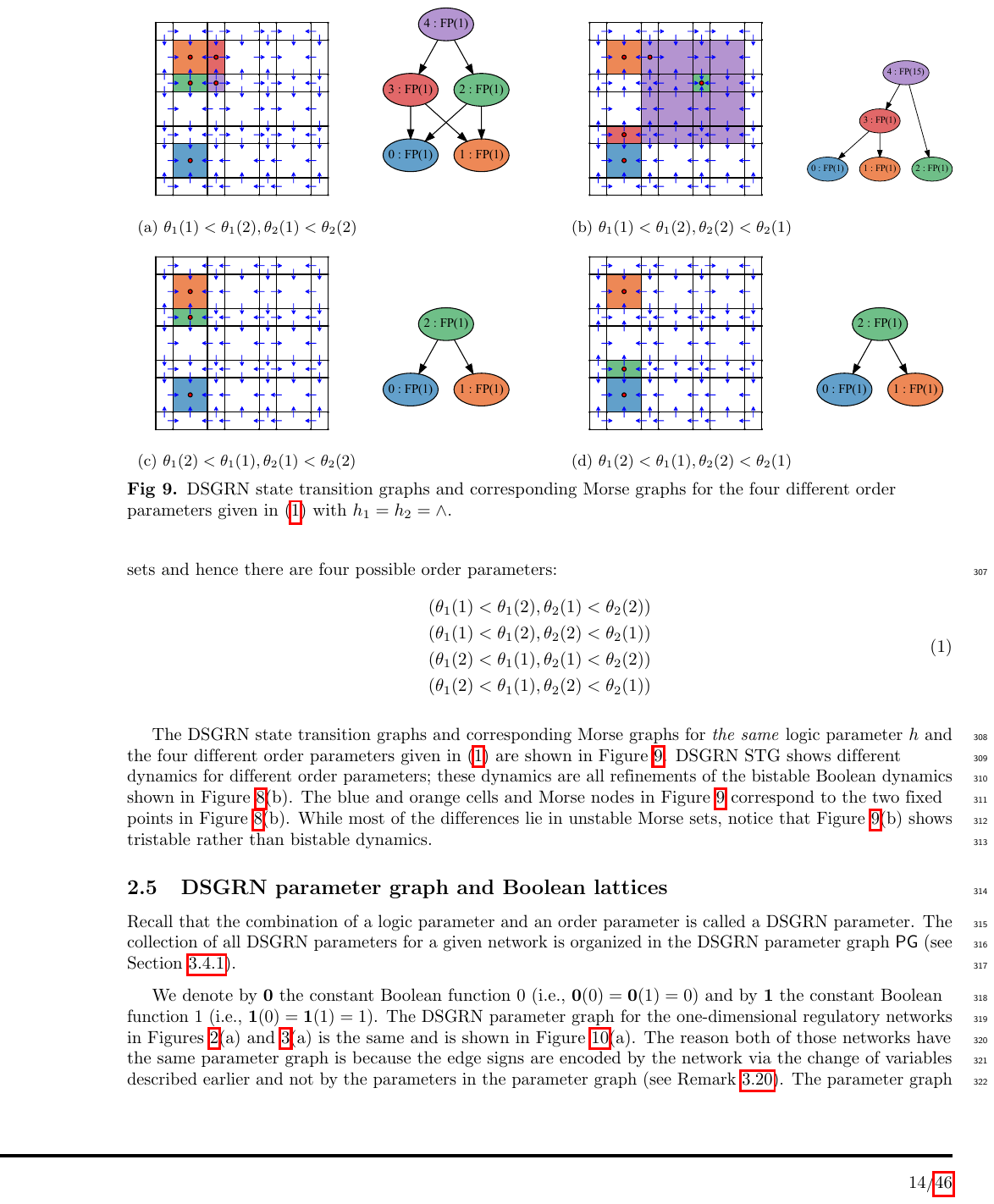}
\caption{DSGRN state transition graphs and corresponding Morse graphs for the four different order parameters given in (\ref{order:4p}) with $h_1 = h_2 = \wedge$.}
\label{fig:order_flow}
\end{figure}

\subsection{DSGRN parameter graph and Boolean lattices}
\label{subsec:DSGRNPGBLattices}
Recall that the combination of a logic parameter and an order parameter is called a DSGRN parameter.
The collection of all DSGRN parameters for a given network is organized in the DSGRN parameter graph $\PG$ (see Section~\ref{sec:PG}).

We denote by ${\mathbf 0}$ the constant Boolean function $0$ (i.e., ${\mathbf 0}(0) = {\mathbf 0}(1) = 0$) and by ${\mathbf 1}$ the constant Boolean function $1$ (i.e., ${\mathbf 1}(0) = {\mathbf 1}(1) = 1$). The DSGRN parameter graph for the one-dimensional regulatory networks in Figures~\ref{fig:singleDown}(a) and \ref{fig:singleUp}(a) is the same and is shown in Figure~\ref{fig:PG_single_toggle}(a). The reason both of those networks have the same parameter graph is because the edge signs are encoded by the network via the change of variables described earlier and not by the parameters in the parameter graph (see Remark~\ref{rem:increasing_MBFs}). The parameter graph for the toggle switch in Figure~\ref{fig:toggle}(a) is shown in Figure~\ref{fig:PG_single_toggle}(b). Notice that the order parameter is not present in the parameter graphs in Figure~\ref{fig:PG_single_toggle}. That is because the corresponding network has a single out-edge per node and hence the order parameter is trivial. The order parameter is nontrivial for networks having at least one node with at least two out-edges as in Figures~\ref{fig:network_boolean}(a) and~\ref{fig:22network}(a). The parameter graphs for the networks in Figures~\ref{fig:network_boolean}(a) and~\ref{fig:22network}(a) have $120$ and $1,600$ nodes, respectively. See Section~\ref{sec:more_examples} for additional examples of  parameter graphs.

\begin{figure}[htp!]
\centering
\includegraphics[width=0.58\linewidth]{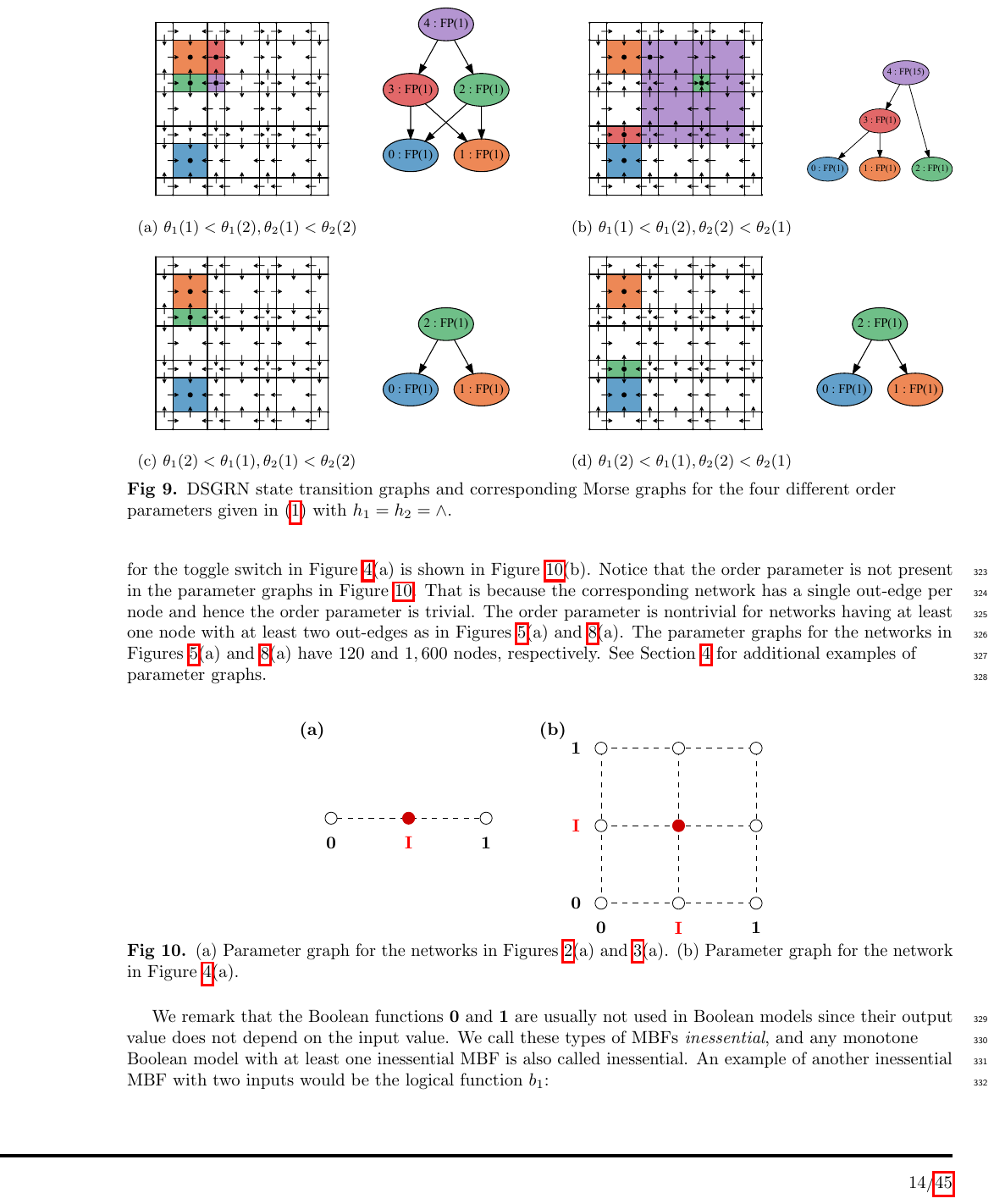}
\caption{(a) Parameter graph for the networks in Figures~\ref{fig:singleDown}(a) and \ref{fig:singleUp}(a). (b) Parameter graph for the network in Figure~\ref{fig:toggle}(a).}
\label{fig:PG_single_toggle}
\end{figure}

We remark that the Boolean functions ${\mathbf 0}$ and ${\mathbf 1}$ are usually not used in Boolean models since their output value does not depend on the input value.  We call these types of MBFs \textit{inessential}, and any monotone Boolean model with at least one inessential MBF is also called inessential. An example of another inessential MBF with two inputs would be the logical function $b_1$:
\begin{center}
\begin{tabular}{cc|c}
$b_1$ & $b_2$ & output  \\
\hline
0 & 0 &  0 \\
1 & 0 & 1 \\
0 & 1 & 0 \\
1 & 1 & 1
\end{tabular}
\end{center}
In the above example, the second input $b_2$ is irrelevant to determining output, and therefore the MBF is classified as inessential. \textit{Essential parameters} are those for which every edge in a network has an influence on the state of its target node (see Section~\ref{sec:MBFs} for rigorous definitions).

Inessential MBFs and MBMs may seem irrelevant to the modeling process. However, they are critical to the definition of neighboring MBFs and neighboring dynamics.
To illustrate this point, consider the 2-node, 4-edge network in Figure~\ref{fig:22network}(a), where both $v_1$ and $v_2$ have two in-edges and two out-edges. Further consider two essential monotone Boolean models and Boolean STGs for that network,
\begin{align*}
f (b_1, b_2) & = (b_1 \wedge b_2, \neg b_1 \wedge b_2) \\
f''(b_1, b_2) & = (b_1 \vee b_2, \neg b_1 \wedge b_2).
\end{align*}
In other words, node $v_1$ expresses the \textsf{AND} function on its out-edges in the first MBM and the \textsf{OR} function in the second MBM, while node $v_2$ expresses the \textsf{AND} function on its out-edges in both MBMs. The Boolean STGs for these two models are given in Figures~\ref{fig:boolean_flow}(a) and (c).

Comparing Figure~\ref{fig:boolean_flow}(a) and (c), we see that although both models produce bistability with shared fixed point $(0,0)$, the second fixed point lies in the state $(0,1)$ in (a) and in $(1,0)$ in (c). The $f$ and $f''$ dynamics are not adjacent in the sense that the second fixed point does not lie in an adjacent cell of the first. However, consider the inessential model, $f'(b_1, b_2) = (b_1, \neg b_1 \wedge b_2)$. The STG for $f'$ is given in Figure~\ref{fig:boolean_flow}(b). We observe that the dynamics of $f$ and $f'$ are adjacent in the sense that their STGs  differ by one fixed point, and similarly for $f'$ and $f''$. 

\begin{figure}[htp!]
\centering
\includegraphics[width=1.0\linewidth]{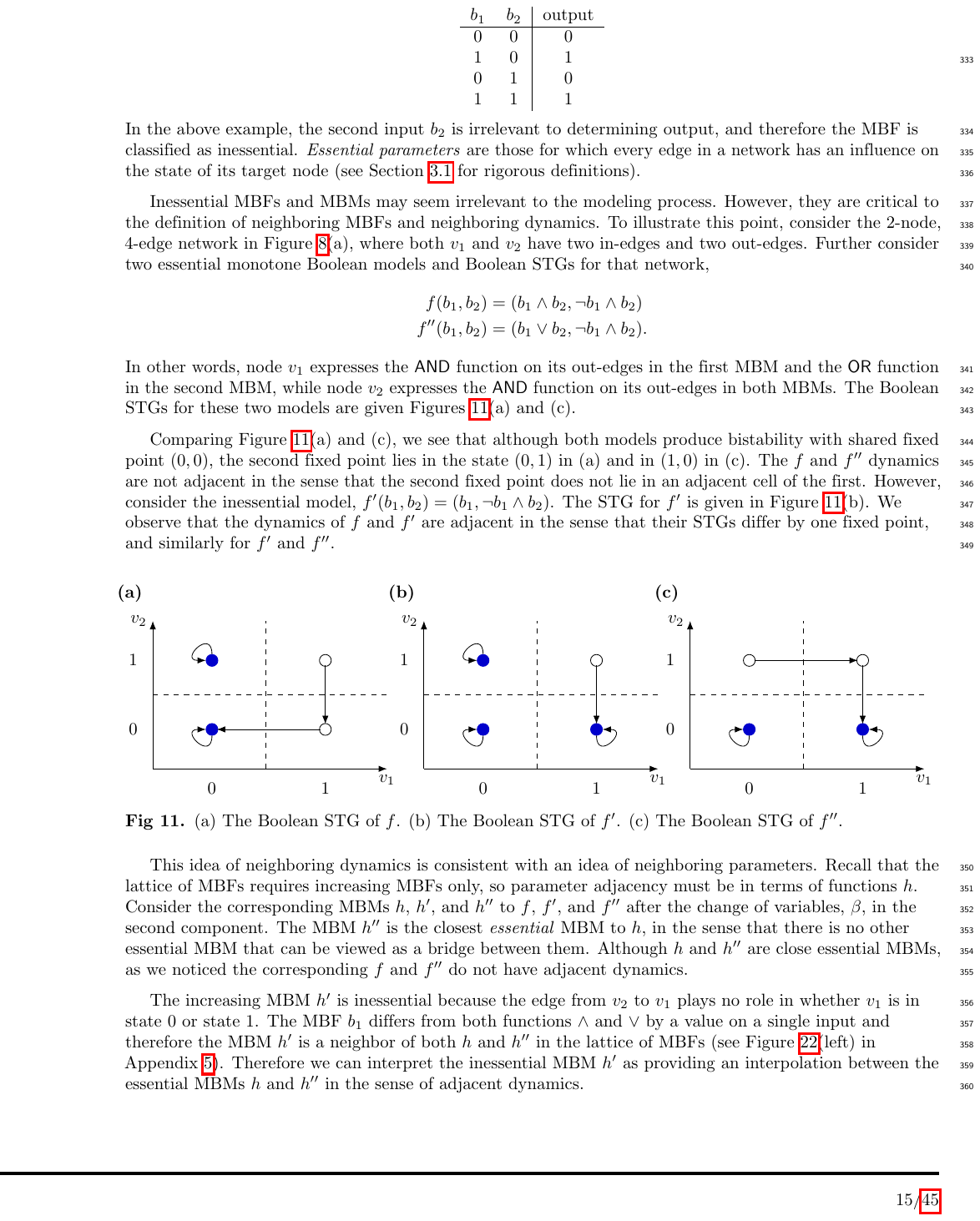}
\caption{(a) The Boolean STG of $f$. (b) The Boolean STG of $f'$. (c) The Boolean STG of ${f}''$.}
\label{fig:boolean_flow}
\end{figure}

This idea of neighboring dynamics is consistent with an idea of neighboring parameters. Recall that the lattice of MBFs requires increasing MBFs only, so parameter adjacency must be in terms of functions $h$. Consider the corresponding MBMs $h$, $h'$, and $h''$ to $f$, $f'$, and $f''$ after the change of variables, $\beta$, in the second component. The MBM $h''$ is the closest \textit{essential} MBM to $h$, in the sense that there is no other essential MBM that can be viewed as a bridge between them. Although $h$ and $h''$ are close essential MBMs, as we noticed the corresponding $f$ and $f''$ do not have adjacent dynamics.

The increasing MBM $h'$ is inessential because the edge from $v_2$ to $v_1$ plays no role in whether $v_1$ is in state $0$ or state $1$. The MBF $b_1$ differs from both functions $\wedge$ and $\vee$ by a value on a single input and therefore the MBM $h'$ is a neighbor of both $h$ and $h''$ in the lattice of MBFs (see Figure~\ref{fig:MBF_PG_multi}(left) in Appendix~\ref{sec:lattices}). Therefore we can interpret the inessential MBM $h'$ as providing an interpolation between the essential MBMs $h$ and $h''$ in the sense of adjacent dynamics.

Note that, although $h'$ is a neighbor of both $h$ and $h''$ in the lattice of MBFs, the corresponding DSGRN parameters are not adjacent. Since each node of the network has $2$ out-edges, the parameters in DSGRN are in fact: $((\wedge, \wedge), (\wedge, \wedge))$, $((X, X), (\wedge, \wedge))$, and $((\vee, \vee), (\wedge, \wedge))$. The MBM lattice corresponding to the logic DSGRN factor graph for node $v_1$ is shown in Figure~\ref{fig:MBF_PG_multi}(right). The logic parameter nodes in the factor graph of node $v_1$ are $(\wedge, \wedge)$, $(X, X)$, and $(\vee, \vee)$ which, as shown in Figure~\ref{fig:MBF_PG_multi}(right), are not adjacent. Additional interpolation of dynamics is required for the greater complexity of DSGRN parameters.

\section{Mathematical background and results}
\label{sec:math}

In this section we present the necessary mathematical background and results. We denote by $\bbB := \{ 0, 1 \}$ be a totally ordered set  with $0 < 1$. Let $(\bbB^n, \preceq)$ be an $n$-fold product of $\bbB$ with a partial order induced component-wise by $0 < 1$. Depending on the context, the value $0$ represents \textsf{FALSE} or the state \textsf{OFF} and the value $1$ represents \textsf{TRUE} or the state \textsf{ON}. We denote the negation of $b \in \bbB$ by $\neg b$, that is, $\neg 0 = 1$ and $\neg 1 = 0$.

\subsection{Monotone Boolean functions}
\label{sec:MBFs}

\begin{defn}
\label{def:lattice}
A \emph{Boolean function} is a function $f \colon \bbB^n \to \bbB$. A function $f \colon \bbB^n \to \bbB^n$, $f = (f_1, \ldots, f_n)$, is called a \emph{Boolean model}.
\end{defn}

\begin{defn}
A Boolean function $f \colon \bbB^n \to \bbB$ is \emph{increasing  with respect to input $j$} if for any $b = (b_1, \ldots, b_n) \in \bbB^n$
\[
f(b_1, \ldots, b_{j-1}, 0, b_{j+1}, \ldots, b_n) \leq f(b_1 \ldots, b_{j-1}, 1, b_{j+1}, \ldots, b_n).
\]
It is \emph{strictly increasing  with respect to input $j$} if there is at least one $b \in \bbB^n$ for which the inequality is strict.

A Boolean function $f \colon \bbB^n \to \bbB$ is \emph{decreasing  with respect to input $j$} if for any $b = (b_1, \ldots, b_n) \in \bbB^n$
\[
f(b_1, \ldots, b_{j-1}, 0, b_{j+1}, \ldots, b_n) \geq f(b_1, \ldots, b_{j-1}, 1, b_{j+1}, \ldots, b_n).
\]
It is \emph{strictly decreasing with respect to input $j$} if there is at least one $b \in \bbB^n$ for which the inequality is strict.
\end{defn}

\begin{defn}
\label{defn:MBF}
A Boolean function $f \colon \bbB^n \to \bbB$ is \emph{monotone} if it is increasing or decreasing with respect to each input $j = 1, \ldots, n$. In this case $f$ is called a \emph{monotone Boolean function} (MBF).
\end{defn}

A special type of monotone Boolean function is increasing with respect to all of its inputs.

\begin{defn}
\label{defn:MBF_increasing}
 A Boolean function $f \colon \bbB^n \to \bbB$ is \emph{monotone increasing} if 
it is increasing with respect to all of its inputs. Equivalently,
for all $z, w \in \bbB^n$
\[
z \preceq w \qquad \implies \qquad f(z) \leq f(w).
\]
\end{defn}

\begin{defn}
\label{defn:MBM}
A Boolean model $f \colon \bbB^n \to \bbB^n$ is \emph{monotone} if $f_i$ is monotone for every $i = 1, \ldots, n$. In this case $f$ is called a \textit{monotone Boolean model} (MBM).
\end{defn}

\subsection{Regulatory network}

\begin{defn}
\label{def:RN}
A \textit{regulatory network} $RN = (V, E, \delta)$ is given by
\begin{itemize}
\item A directed graph $D = (V, E)$ with nodes $V = \{ 1, 2, \ldots, N \}$ and  directed edges $E\subset V \times V$;
\item An edge sign function $\delta \colon E \to \{-1, 1\}$. 
\end{itemize}
We denote an edge from node $i$ to node $j$ by $(i, j)$. The edge $(i, j)$ is \textit{activating} if $\delta(i, j) = 1$ and \textit{repressing} if $\delta(i, j) = -1$. Graphically, we denote an edge $(i, j)$ without indicating its sign by
$i \multimap j$, and we denote an activating edge by $i \to j$ and a repressing edge by $i \dashv j$. The \textit{sources} and \textit{targets} of a node $i$ are given by
\[
\Sources(i):= \setdef{k \in V}{(k, i) \in E} \ \text{ and } \ \Targets(i):= \setdef{j \in V}{(i, j) \in E},
\]
respectively. We assume that $\Sources(i) \neq \emptyset$ and $\Targets(i) \neq \emptyset$ for all $i = 1, \ldots, N$.
\end{defn}

For the sake of clarity of exposition, in a slight abuse of notation, we often assign labels $\{ v_1, v_2, \ldots, v_N \}$ to the nodes $V$ and denote a node $i$ and an edge $i \multimap j$ by $v_i$ and $v_i \multimap v_j$, respectively. In this paper we use both notations interchangeably.

Monotone Boolean models are closely linked to regulatory networks.

\begin{defn}
\label{def:regulatory_network}
Given a monotone Boolean model $f \colon \bbB^n \to \bbB^n$, its associated \emph{influence network} is the regulatory network $RN(f) = (V, E, \delta)$ defined by
\begin{enumerate}[(i)]
\item $V = \{1, \ldots, n \}$,
\item $(i, j) \in E$ if and only if $f_j$ is not constant with respect to the input $i$,
\item $\delta \colon E \to \{1, -1\}$ is given by
\[
\delta(i, j) =
\begin{cases}
\;\;\, 1  & \text{if } f_j \text{ is strictly increasing with respect to input } i \\
-1 & \text{if } f_j \text{ is strictly decreasing with respect to input } i
\end{cases}
\]
\end{enumerate}
\end{defn}

\begin{defn}
\label{def:boolean_model_RN}
Let $RN = (V, E, \delta)$ be a regulatory network with $V = \{ 1, 2, \ldots, N \}$. A monotone Boolean model $f \colon \bbB^N \to \bbB^N$ is a \emph{monotone Boolean model} for $RN$ if its influence network is a signed subgraph of $RN$, that is, if it is a subgraph and the sign functions agree on the common edges. If $RN(f)=RN$, then we say that $f$ is an \emph{essential monotone Boolean model} for $RN$.
\end{defn}

Given a monotone Boolean model $f = (f_1, \ldots, f_n)$, the Boolean functions $f_j \colon \bbB^n \to \bbB$ can be written in terms of monotone increasing Boolean functions $h_j \colon \bbB^n \to \bbB$ as a composition
\begin{equation}
\label{eq:f_hbeta}
f_j(b) = h_j(\beta^j(b)),
\end{equation}
where the component $\beta^j_i \colon \bbB \to \bbB$ of the map $\beta^j \colon \bbB^n \to \bbB^n$ is defined by
\begin{equation}
\label{eq:beta}
\beta^j_i(b) =
\begin{cases}
\;\;\, b_i, & \text{if } f_j \text{ is increasing with respect to input } i \\
\neg b_i,   & \text{if } f_j \text{ is decreasing with respect to input } i
\end{cases}
\end{equation}

It follows that any monotone Boolean model $f = (f_1, \ldots, f_n)$ for a regulatory network $RN$ corresponds to an increasing  monotone Boolean model $h = (h_1, \ldots, h_n)$ where the change of variables $\beta^j$ such that $f_j(b) = h_j(\beta^j(b))$ is given by \eqref{eq:beta}.

Therefore from now on we describe the structure of increasing monotone Boolean functions. When considering Boolean models of networks with negative edges we construct the change of variable function $\beta$ in \eqref{eq:beta}.

The set of monotone increasing Boolean functions with $k$ inputs is large and its size is the $k$-th Dedekind number \cite{Kleitman1969}, which is known for $k \leq 9$. Since we can represent any monotone Boolean function $f$ in terms of an increasing monotone Boolean function $h$ as in \eqref{eq:beta}, the Dedekind number is also the number of monotone Boolean functions with any fixed monotonicity with respect to its inputs.

\subsection{Boolean models}

Given a regulatory network $RN$ and a monotone Boolean model $f \colon \bbB^N \to \bbB^N$ we define a multivalued map $\cF \colon \bbB^N \rightrightarrows \bbB^N$, called the \textit{asynchronous update multivalued map}, which is used to compute the \emph{Boolean dynamics} of the network.

\begin{defn}
\label{def:Boolean_update}
The \textit{asynchronous update multivalued map} is the map $\cF \colon \bbB^N \rightrightarrows \bbB^N$ generated by $f$ and defined by:
\begin{itemize}
\item If $f(b) = b$, then $\cF(b) = \{ b \}$.
\item If $f(b) \neq b$, then $\ol{b} \in \cF(b)$ if and only if there exists $i \in V$ such that
\[
\ol{b}_i = f_i(b) \neq b_i \quad \text{and} \quad \ol{b}_j = b_j \mbox{ for } j\neq i.
\]
\end{itemize}
The multi-valued map $\cF$ on $\bbB^N$ is also called the \textit{Boolean state transition graph}.
\end{defn}

\subsection{Finite multilevel models compatible with $RN$}
\label{sec:multilevel_RN}

Given a regulatory network $RN = (V, E, \delta)$ with $V = \{ 1, 2, \ldots, N \}$, we and others have proposed \cite{Ironi2011,Cummins16,Gedeon20,Richard2006} a larger class of finite multilevel models compatible with the network structure. For each node $j$ we consider all possible ways in which the collection of its input edges $\{ (i, j) \in E \ | \ i \in \Sources(j) \}$ can ``turn on'' a subset of its output edges $\{ (j, k) \in E \ | \ k \in \Targets(j) \}$.

The precise notion of an edge being turned on is presented in Definition~\ref{def:turn_on}. Recall that we consider only increasing monotone Boolean models for the network. Informally, if an activating edge $v_i \to v_j$ is turned on it presents the input $1$ (\textsf{ON}) to node $v_j$ and hence it is sending an activating signal to $v_j$. If a repressing edge $v_i \dashv v_j$ is turned on it presents the input $0$ (\textsf{OFF}) to node $v_j$ and hence it is sending a repressing signal to $v_j$. If an edge $v_i \multimap v_j$ is not turned on it presents the opposite input to $v_j$.

Note that in a traditional Boolean model~\cite{glass:kaufman:73,glass:kaufman:73,Thomas95,Thomas1973} each node $v_i$ has associated to it a single Boolean variable which can be in one of the two states $0$ or $1$, which is determined by a Boolean function of its inputs. The state $1$ then turns on all of the output edges of $v_i$ and the state $0$ does not turn on any of the output edges of $v_i$. Hence the output edges of a node $v_i$ are either all turned on or none of them are turned on.

If we want a discrete network model that represents a continuous process, the pattern of the subsets of the output edges of a node $v_i$ that are turned on must be representable as a response to varying levels of expression of the continuous variable $x_i$, which may represent concentration, associated to node $v_i$. Hence we need a model where the edge $v_i \multimap v_j$ is turned on by the node $v_i$ only when the value of $x_i$ passes a certain level, or threshold, associated with the edge $v_i \multimap v_j$. Therefore we need a discrete network model where, unlike in a traditional Boolean model, a particular input $b \in \bbB^N$ to a node $v_i$ may turn on only some, but not all, of the edges $v_i \multimap v_j$ with $v_j \in \Targets(v_i)$.

In order to describe such a discrete model we need a notion of a ``discrete threshold'' associated to an edge, corresponding to the continuous threshold associated with the continuous variable $x_i$. In principle, there can be more than one discrete threshold associated to an edge $v_i \multimap v_j$. However, the simplest model satisfying these conditions has a single threshold for each edge. The crucial constraint imposed by this view of the process of turning on some, but not all, the edges is that the discrete thresholds at which individual edges are turned on must be ordered. Hence our notion of a ``discrete threshold'' is simply an ordering of the output edges of each node $v_i$ and is formalized as an \emph{order parameter} defined below.

We next describe the set of \emph{DSGRN parameters} associated with the regulatory network $RN$ \cite{E2,duncan3}, each of which uniquely defines a finite multilevel discrete model. We note that the construction of the DSGRN parameters only depends on the directed graph structure of the regulatory network, the multilevel discrete models do depend on the edge sign structure $\delta$. We follow the exposition in~\cite{E2}.

We first define an order parameter for a node $i$ of a regulatory network. Given a node $i \in V$ let $m_i:= | \Targets(i) |$ denote the number of out-edges of $i$.

\begin{defn}
\label{defn:order_parameter}
An \emph{order parameter} for a node $i \in V$ is an ordering  of the set $\Targets(i)$, that is, it is a bijection $\theta_i \colon \Targets(i) \to \{1, \ldots, m_i \}$.
\end{defn}

Note that an order parameter is just a total ordering of the out-edges of $i$. This total order represents the order in which the out-edges $\{ (i, j) \ | \ j \in \Targets(i) \}$ of $i$ are turned on.

The set of order parameters for $i$ is denoted by $\Theta(i)$. The set of all order parameters is given by $\Theta := \prod_{i=1}^{N} \Theta(i)$.

Next, we define the \textit{logic parameters} for the regulatory network $RN = (V,E,\delta)$.

\begin{defn}
\label{def:fi}
A \textit{logic parameter} for a node $i$ of the regulatory network $RN = (V,E,\delta)$ is a function
\begin{equation}
\label{eq:logical}
h_i \colon \bbB^{|\Sources(i)|} \to \bbB^{|\Targets(i)|}, \quad h_i = (h_i^1, \ldots, h_i^{m_i}),
\end{equation}
whose components $h_i^k \colon \bbB^{|\Sources(i)|} \to \bbB$ are \emph{increasing} monotone Boolean functions satisfying the \textit{ordering conditions}
\begin{equation}
\label{eq:ordering condition}
h_i^1(b) \succeq \cdots \succeq h_i^{m_i}(b)
\end{equation}
for all $b \in \bbB^{|\Sources(i)|}$.

The logic parameter $h_i$ for a node $i$ is called an \emph{essential logic parameter} if all the functions $h_i^j$ are strictly increasing with respect to all of their inputs. Otherwise it is called an \emph{inessential logic parameter}.

A logic parameter $h = (h_1, \ldots, h_N)$ is called an \emph{essential logic parameter} if $h_1, \ldots, h_N$ are all essential logic parameters. Otherwise it is called an \emph{inessential logic parameter}.
\end{defn}

The functions $h_i^j$ are used in conjunction with an order parameter $\theta_i$ and their purpose is to describe when an input $b$ turns on the out-edges of $v_i$. More precisely, the function $h_i^{\theta_i(j)}$ describes when an input $b$ turns on the edge $v_i \multimap v_j$ associated with $\theta_i(j)$ as follows:
\begin{itemize}
\item If $h_i^{\theta_i(j)}(b) = 1$, then $b$ turns on the edge $v_i \multimap v_j$;
\item If $h_i^{\theta_i(j)}(b) = 0$, then $b$ does not turn on the edge $v_i \multimap v_j$.
\end{itemize}

The set of all logic parameters for node $i$ is denoted $\cL(i)$. The set of all logic parameters is $\cL := \prod_{i=1}^{N} \cL(v_i)$.

The set of parameters for node $i$ is $\cP(i) := \cL(i) \times \Theta(i)$. The \textit{set of parameters} for the regulatory network is given by the product $\cP := \prod_{i=1}^{N} \cP(i)$. In a slight abuse of notation we identify the set of parameters $\cP$ with the product of logic and order parameters $\cP = \cL \times \Theta$ and denote a parameter $p \in \cP$ by $p = (h, \theta)$.

A parameter $p = (h, \theta)$ is called an \emph{essential parameter} if $h$ is an essential logic parameter. Otherwise it is called an \emph{inessential parameter}.

In the next section we endow the set of parameters $\cP$ with a graph structure by defining an adjacency relation between the elements of $\cP$.

\subsubsection{The Parameter Graph}
\label{sec:PG}

The parameters in $\cP$ are related by an adjacency relationship which we now define.

\begin{defn}
\label{def:adjacency_DSGRN}
Two parameters $(h_i, \theta_i), (\hat{h}_i, \hat{\theta}_i) \in \cP(i)$ for a node $i$ are \textit{adjacent} if exactly one of the following conditions is satisfied.

\begin{enumerate}
\item[(i)] \textit{Logic adjacency}: $\theta_i = \hat\theta_i$ and a single component of the Boolean functions $h_i$ and $\hat h_i$ differ on a single input, i.e., there exist a unique $j$ and a unique $b\in\bbB^{|\Sources(i)|}$ such that $h^j_i(b) \neq \hat{h}^j_i(b)$.

\item[(ii)] \textit{Order adjacency}: $h_i=\hat h_i$ and the values of the order parameters $\theta_i$ and $\hat\theta_i$ are exchanged on a single pair of neighboring entries on which the logic parameters agree, i.e., there is an index $k$ such that $h^k_i= h^{k+1}_i$ (and hence also $\hat h^k_i= \hat h^{k+1}_i$) and  there exists $j_1, j_2 \in \Targets(i)$ such that
\[
\theta_i(j_1) = k, \; \theta_i(j_2) = k+1 \quad \mbox{ and } \quad \hat{\theta}_i(j_2) = k, \; \hat{\theta}_i(j_1) = k+1,
\]
and $\theta_i(j) = \hat{\theta}_i(j)$ for all $j \not\in \{j_1, j_2\}$.
\end{enumerate}
\end{defn}

\begin{defn}
The \textit{factor graph} for node $i$ is the undirected graph $\PG(i) := (\cP(i), \cE(i))$ whose nodes $\cP(i)$ are the parameter nodes for $i$ and whose edges $\cE(i)$ are given by the adjacency relation defined above.
\end{defn}

\begin{defn}
\label{def:DSGRN_multilevel_parameter_graph}
The \textit{DSGRN multi Boolean parameter graph} $\PG := (\cP, \cE)$ of the regulatory network $RN$ is the Cartesian product
\[
\PG := \prod_{i=1}^{N} \PG(i).
\]
That is, given $p, \hat{p} \in \cP$ there is an edge $(p, \hat{p}) \in \cE$ if and only if there is a unique $i \in V$ such that $(p_i, \hat{p}_i) \in \cE(i)$ and $p_j = \hat{p}_j$ for $j \neq i$. We often refer to $\PG$ just as the \textit{DSGRN parameter graph} or simply as the \textit{parameter graph}.
\end{defn}

We emphasize again,  that the size and structure of the parameter graph $\PG$ depends only on the number of input and output edges for each $i \in V$, but not on the sign structure of edges given by $\delta$. The sign structure $\delta$ affects only the dynamics at each $p \in \PG$ which we describe next.

\subsubsection{Dynamics}
\label{sec:dynamics}

Given a parameter $p = (h, \theta) \in \PG$, the \emph{finite multilevel dynamics} occurs on the discrete state space
\[
\mathcal{D} := \prod_{i = 1}^N X_i, \qquad X_i:= \{0, 1, \ldots, m_i\}.
\]

We call $d \in \mathcal D$ a \textit{state} of the $RN$, and construct the \textit{finite multilevel update function} $g \colon \mathcal{D} \to \mathcal{D}$ from the information contained in the parameter $(h, \theta) \in PG$ in three steps.

First, for each state vector $d = (d_1, \ldots, d_N) \in \mathcal{D}$ and every edge $(i, j)$ the function $\mu_i^j \colon \cD \to \bbB$ given by 
\begin{equation}
\label{eq:mu}
\mu_i^j(d_i) :=
\begin{cases}
0, & \text{if } d_i < \theta_i(j) \\
1, & \text{if } d_i \geq \theta_i(j).
\end{cases}
\end{equation}
indicates the relationship of $d$ with respect to the order parameter $\theta_i(j)$ that corresponds to the edge $(i,j)$. Thus the map
\[
\mu \colon \cD \to \Pi_{i=1}^n \bbB^{|\Sources(j)|}
\]
depends only on the directed graph structure $D=(V,E)$ of the network, but not the edge sign structure $\delta$. To implement the effect of $\delta$ we define the map $\tilde{\beta}^j_i \colon \bbB \to \bbB$ given by
\begin{equation}
\label{eq:beta_tilde}
\tilde{\beta}^j_i(b) =
\begin{cases}
\;\;\, b_i, & \text{if } \delta(i, j) = 1 \\
\neg b_i,   & \text{if } \delta(i, j) = -1
\end{cases}
\end{equation}
and define the \textit{encoding function}
\begin{equation}
\label{def:B}
B^j \colon \mathcal{D} \to \bbB^{|\Sources(j)|},
\end{equation}
component-wise by
\[
B_i^j := \tilde{\beta}^j_i \circ \mu_i^j.
\]
The composition $B_i^j$ determines the input from node $v_i$ to node $v_j$ for a given $d \in \mathcal{D}$. For an activating edge $v_i \to v_j$, if $d_i$ is below $\theta_i(j)$ then the input from $v_i$ to $v_j$ is $B_i^j(d) = 0$, while if $d_i$ is above $\theta_i(j)$ then the input from $v_i$ to $v_j$ is $B_i^j(d) = 1$. This assignment is reversed if the edge $v_i \dashv v_j$ is repressing.

\begin{rem}
Notice that if $p = (h, \theta) \in \PG$ is a Boolean parameter (see Definition~\ref{def:DSGRN_Boolean_parameter_graph}) and $f$ is a model for the regulatory network $RN$ corresponding to $h$, then the maps $\beta^j_i$ defined by \eqref{eq:beta} and $\tilde{\beta}^j_i$ coincide.
\end{rem}

Note that the range of the map $B := (B^1, B^2, \ldots, B^N)$ is $\bbB^{|E|}$ since $E$ is the disjoint union $E = \bigsqcup_{j = 1}^{N}{\{(i, j) \mid i \in \Sources(j)\}}$. Therefore the function $B$ describes assignments of $0$ or $1$ to every edge in the network for any $d \in \mathcal D$. The function $B$ depends on the order parameter $\theta$ and the sign structure function $\delta$, but not on the logic parameter $h$. It is only via the function $B$ that the sign structure $\delta$ and the order parameter $\theta$ affect the dynamics. Note that given two states $c, d \in \cD$ when the edge $(i, j)$ is activating, i.e. $\delta(i, j) = 1$, and $c_i < d_i$ then $B^j_i(c) \leq B^j_i(d)$, and when the edge $(i, j)$ is repressing, i.e. $\delta(i, j) = -1$, then  $c_i < d_i$ implies that $B^j_i(c) \geq B^j_i(d)$. Therefore the function $B^j$ acts like the change of variables function $\beta^j$ for $f_j$ defined in \eqref{eq:beta}. This allows us to only consider increasing monotone Boolean functions in Definition~\ref{def:fi}.

\begin{defn}
\label{def:turn_on}
Given a state $d \in \cD$ we say that the edge $(i, j)$ is \emph{turned on} at the state $d$ if the following conditions are satisfied:
\begin{itemize}
\item If $v_i \to v_j$, then $B_i^j(d) = 1$.
\item If $v_i \dashv v_j$, then $B_i^j(d) = 0$.
\end{itemize}
\end{defn}

In the second step, we take into consideration the logic parameter $h$ from the parameter $p = (h, \theta) \in \PG$. We apply the logic parameter map $h_i$ (see Definition~\ref{def:fi}) associated to node $i$ to the output of the function $B^i$ and define $g_i \colon \mathcal{D} \to X_i$ by
\begin{equation}
\label{Psi}
g_i(d) := \sum_{j \in \Targets(i)} h_i^j(B^i(d)),
\end{equation}
where, by a slight abuse of notation, we are interpreting the Boolean values $0$ and $1$ as integers in the above sum.

\begin{rem}
Note that if $p = (h, \theta) \in \PG$ is a Boolean parameter (see Definition~\ref{def:DSGRN_Boolean_parameter_graph}) and $f$ is a model for the regulatory network $RN$ corresponding to $h$, then
\[
\sum_{j \in \Targets(i)} h_i^j(B^i(d))= \sum_{j \in \Targets(i)} h_i^j(\beta_i^j(\mu_i^j(d))) = \sum_{j \in \Targets(i)} f_i(\mu_i^j(d)).
\]
This expression shows that multi level update function $g_i(d)$ depends on the monotone Boolean function $f_i$ that both incorporates the directed graph structure $D=(V,E)$ through the monotone increasing function $h$ and the sign structure $\delta$ through function $\beta$. The order parameter $\theta$ only affects  the function $\mu$. 
\end{rem}

\begin{rem}
The function $g_i \colon \mathcal{D} \to X_i$ is just counting how many times $h_i^j(B^i(d))$ attains the value $1$. So an alternative definition, without the abuse of notation above, is
\[
g_i(d) := \left| \left\{ j \mid j \in \Targets(i) ~\text{and}~ h_i^j(B^i(d)) = 1 \right\} \right|.
\]
\end{rem}

The finite multilevel update function $g = (g_1, g_2,\ldots, g_N)$ is used to define the multilevel state transition graph.

\begin{defn}
\label{def:update}
The \textit{nearest neighbor asynchronous update} (NNU) is the multivalued map $\cG \colon \cD \rightrightarrows \cD$ generated by $g$ and defined by
\begin{itemize}
\item If $g(d) = d$, then $\cG(d) = \{ d \}$.
\item If $g(d) \neq d$, then $\ol{d} \in \cG(d)$ if and only if there exists $i \in V$ and $\eta \in \{-1, 1\}$ such that $\eta g_i(d) > \eta d_i$ and
\[
\ol{d}_i = d_i + \eta, \quad \ol{d}_j = d_j \mbox{ for } j\neq i.
\]
\end{itemize}
The multi-valued map $\cG$ on $\mathcal D$ is also called a \textit{state transition graph}.
\end{defn}

\begin{rem}
\label{rem:increasing_MBFs}
The asynchronous update multivalued map $\cG \colon \cD \rightrightarrows \cD$ is defined for every DSGRN parameter $p = (h, \theta) \in \PG$, which is given by a collection of \emph{increasing} monotone Boolean functions $h_i \colon \bbB^{|\Sources(i)|} \to \bbB^{|\Targets(i)|}$. The effect of the negative edges of the network $RN$ that give rise to the non-increasing responses of the model are  encoded by the edge sign function $\delta$, which affects the dynamics via the encoding map $B$.
\end{rem}

\subsubsection{Correspondence between MBF and parameters}
\label{sec:MBF_parameters}

We describe how monotone Boolean functions naturally correspond to  particular nodes of the parameter graph $\PG$ and hence to a strict subset of the finite multilevel models. Consider a regulatory network $RN = (V, E, \delta)$ with $|V| = N$. A monotone Boolean function $f = (f_1, \ldots, f_N)$ is a finite multilevel model where all edges $(i, j)$ with $j \in \Targets(i)$ are turned on at the same level of the input in $\bbB^N$. This corresponds to the situation where for each $i = 1, \ldots, N$ the functions $f_i^j$, for $j \in \{ 1, \ldots, |\Targets(i)| \}$, are all equal to the same function $f_i$. Therefore the dynamics of a monotone Boolean function $f = (f_1, \ldots, f_N)$ correspond to
\[
q:= \prod_{i=1}^N |\Targets(i)|!
\]
parameter nodes $(h, \theta)$ where the logic parameter is given by $h_i^j = h_i$ for all $j \in \Targets(i)$, that is,
\begin{equation}
\label{eq:Boolean_par}
h = \left( (h_1, \ldots, h_1), (h_2, \ldots, h_2), \ldots, (h_N, \ldots, h_N) \right).
\end{equation}

\begin{defn}
\label{def:DSGRN_Boolean_parameter_graph}
We call the subset of DSGRN parameters corresponding to monotone Boolean functions \textit{DSGRN Boolean parameters} to distinguish them from the set of all \textit{DSGRN multi Boolean parameters}. We refer to the corresponding parameter graph as the \textit{DSGRN Boolean parameter graph}.
\end{defn}

\subsection{Labeling}

We would like to view the nearest neighbor update map $\cG$ generated by the multilevel map $g$ on the set $\cD$ as a representation of a continuous vector field on a compact subset of $\R^N$. To this end, we graphically represent $\cG$ in a way that facilitates the comparison with such vector field in a form of \textit{labeling}.

Given $d \in \cD$ and a direction $j \in V$ we assign two labels to the state $d$ in direction $j$ as follows.

\begin{defn}
\label{def:multilevel_labeling}
Let $\Omega = \cD \times V \times \{\pm 1\}$. The map $g$ defines a \emph{labeling} $\alpha \colon \Omega \to \setof{\pm 1}$ by
\[
\alpha(d, j, \lambda) :=
\begin{cases}
\;\;\, 1, & \text{if } g_j(d) > d_j \\
-1,       & \text{if } g_j(d) < d_j \\
-\lambda, & \text{if } g_j(d) = d_j.
\end{cases}
\]
We refer to $\alpha(d, j, -1)$ as the \emph{left label} of $d$ in direction $j$ and to $\alpha(d, j, 1)$ as the \emph{right label} of $d$ in direction $j$.
\end{defn}

The labeling $\alpha$ is equivalent to the nearest neighbor asynchronous update multivalued map $\cG \colon \cD \rightrightarrows \cD$ in the sense that we can build one from the other. The labeling function $\alpha$ is illustrated in Figure~\ref{fig:labeling}. It is used in Section~\ref{sec:label_wall_label} to produce a DSGRN wall labeling (defined in Section~\ref{DSGRN_wall_labeling}).

\begin{figure}
\centering
\includegraphics[width=0.7\linewidth]{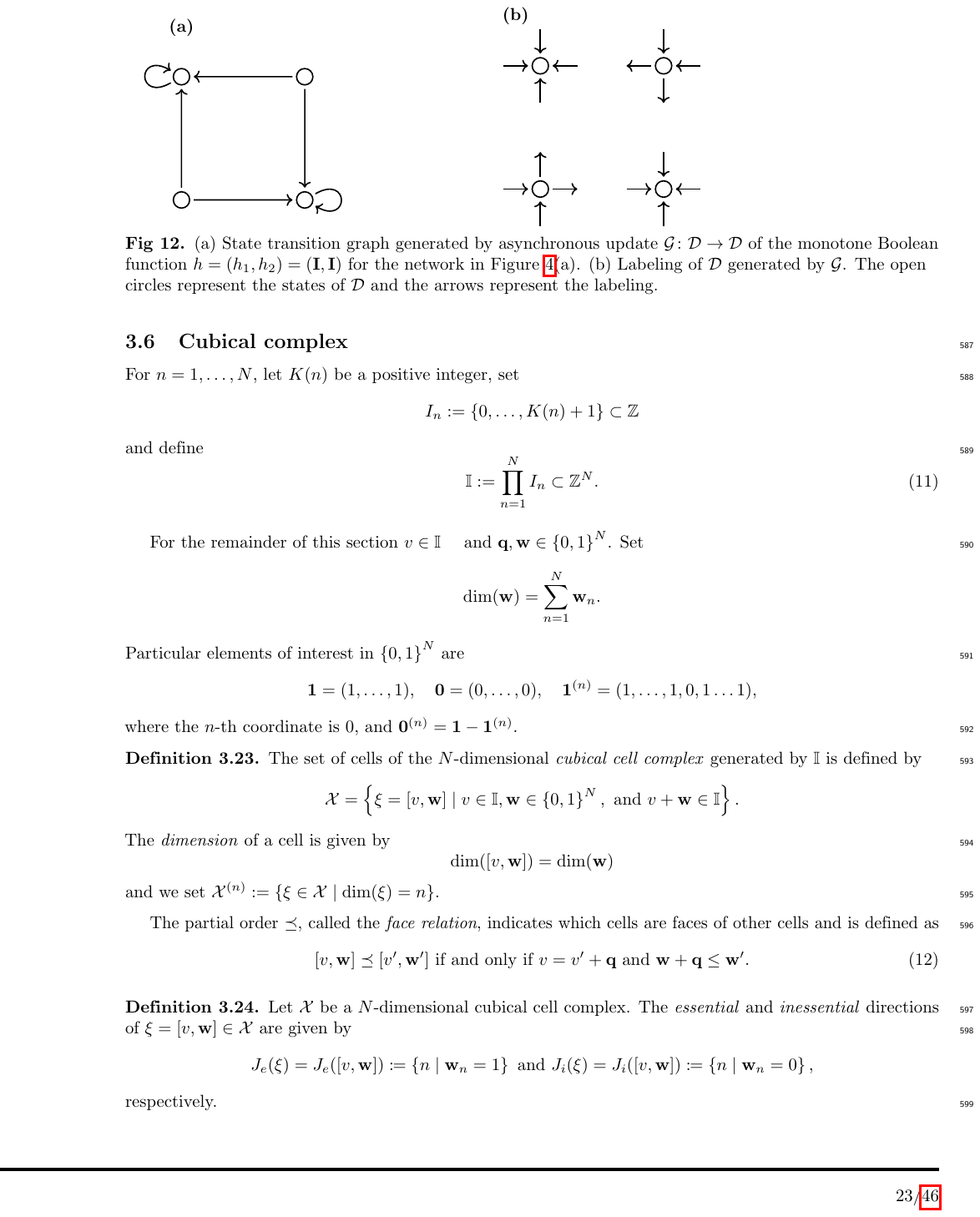}
\caption{(a) State transition graph generated by asynchronous update $\cG \colon \cD \to \cD$ of the monotone Boolean function $h = (h_1, h_2) = ({\bf I}, {\bf I})$ for the network in Figure~\ref{fig:toggle}(a). (b) Labeling of $\cD$ generated by $\cG$. The open circles represent the states of $\cD$ and the arrows represent the labeling.}
\label{fig:labeling}
\end{figure}

\subsection{Cubical complex}
\label{section:wall}

For $n=1,\ldots, N$, let $K(n)$ be a positive integer, set
\[
I_n := \setof{0,\ldots, K(n)+1} \subset \Z
\]
and define 
\begin{equation}
\label{eq:II}
\I := \prod_{n=1}^N I_n\subset \Z^N.
\end{equation}

For the remainder of this section $v \in \I \quad$ and $\bq,\bw\in \setof{0,1}^N$.
Set
\[
\dim(\bw) = \sum_{n=1}^N \bw_n.
\]
Particular elements of interest in $\setof{0,1}^N$ are
\[
{\bf 1}=(1,\ldots, 1), \quad {\bf 0}=(0,\ldots, 0), \quad {\bf 1}^{(n)}=(1,\ldots,1,0,1\ldots 1),
\]
where  the $n$-th coordinate is $0$,  and ${\bf 0}^{(n)} = {\bf 1}-{\bf 1}^{(n)}$.

\begin{defn}
\label{defn:Xcomplex}
The set of cells of the $N$-dimensional \emph{cubical cell complex} generated by $\I$ is
defined by
\[
\cX = \setof{ \xi = [v, \bw] \mid v \in \I, \bw\in \setof{0,1}^N, \text{ and } v + \bw \in \I}.
\]
The \emph{dimension} of a cell is given by
\[
\dim([v, \bw]) = \dim(\bw)
\]
and we set $\cX^{(n)} := \setof{\xi\in\cX \mid \dim(\xi) = n}$.
\end{defn}

The partial order $\preceq$, called the \emph{face relation}, indicates which cells are faces of other cells and is defined as
\begin{equation}
\label{eq:is_a_face}
[v, \bw]\preceq[v', \bw']\ \text{if and only if $v = v' + \bq$ and $\bw +\bq \leq \bw'$}. 
\end{equation}

\begin{defn}
\label{defn:JeJi}
Let $\cX$ be a $N$-dimensional cubical cell complex. The \emph{essential} and \emph{inessential} directions of $\xi = [v, \bw] \in \cX$ are given by
\[
J_e(\xi) = J_e([v, \bw]) \coloneqq \setof{n\mid \bw_n=1} \text{ and } J_i(\xi) = J_i([v, \bw]) \coloneqq \setof{n\mid \bw_n=0},
\]
respectively.
\end{defn}

Our characterization of combinatorial dynamics on an $N$-dimensional cubical cell complex $\cX$ is based on information organized using top cells $\cX^{(N)}$. Given $\xi \in \cX$ we denote the set of \emph{top cells of $\xi$} by $\Top_\cX(\xi) := \setof{\mu \in \cX^{(N)} \mid \xi \preceq \mu}$.

\begin{defn}
\label{defn:wall}
Given an $N$-dimensional cubical cell complex $\cX$ the set of \emph{top pairs} is given by
\[
\TP(\cX) : = \setof{(\xi, \mu) \in \cX \times \cX^{(N)} \mid \xi \preceq \mu}.
\]
Of particular interest is the subcollection where $\xi \in \cX^{(N-1)}$. The set of \emph{walls} of $\cX$ is 
\[
W(\cX) := \setof{(\xi, \mu) \in \cX^{(N-1)} \times \cX^{(N)} \mid \xi \prec \mu}.
\]
A wall $(\xi, \mu)\in W(\cX)$ is called an \emph{$n$-wall} if $J_i(\xi) = \setof{n}$.
\end{defn}

In a slight misuse of language we say that $\xi\in\cX^{(N-1)}$ is a \textit{wall} of $\mu\in\cX^{(N)}$ to indicate that $(\xi, \mu)$ is a wall.

A top cell $\mu = [v, {\bf 1}] \in \cX^{(N)}$ has two $n$-walls 
for each $n$,
\[
\mu^-_n = [v, {\bf 1}^{(n)}]\quad \text{and}\quad \mu^+_n = [v + {\bf 0}^{(n)}, {\bf 1}^{(n)}],
\] 
called the \emph{left} and \emph{right $n$-walls} of $\mu$, respectively.
Two distinct top cells $\mu,\mu'\in\cX^{(N)}$ are \emph{$n$-adjacent} if $\mu$ and $\mu'$ share an \emph{$n$-wall} $\xi$, i.e., $(\xi, \mu)$ and $(\xi, \mu')$ are $n$-walls.

Let $\xi_0 \in\cX$. 
The \emph{walls} of $\xi_0\in\cX$ are
\begin{equation*}
W(\xi_0) := \setof{(\xi, \mu) \in W(\cX) \mid  \xi_0 \preceq \xi}.
\end{equation*}

\subsection{Wall labeling of a cubical complex}
\label{DSGRN_wall_labeling}

\begin{defn}
\label{def:wall_labeling}
Let $\cX$ be an $N$ dimensional cubical complex.
A function $\omega \colon W(\cX) \to \setof{\pm 1}$ is a \emph{wall labeling} if for each $\sigma \in \cX^{(0)}$ there exists a map
\[
\tilde{o}_{\sigma}\colon \setof{1, \ldots, N} \to \setof{1,\ldots, N},
\]
called a \emph{local inducement map}, satisfying the following two conditions.
\begin{enumerate}
\item[(i)] Let $\mu, \mu'\in \Top_\cX(\sigma)$  be $n$-adjacent.
If $k \neq n$ and $k \neq \tilde{o}_\sigma(n)$, then
\[
\omega(\mu^-_k, \mu) = \omega(\mu'^-_k, \mu') \quad \text{and} \quad \omega(\mu^+_k, \mu) = \omega(\mu'^+_k, \mu').
\]
\item[(ii)] Let $(\xi, \mu)$ and $(\xi, \mu') \in W(\sigma)$ be $n$-walls. If $n \neq \tilde{o}_\sigma(n)$, then
\[
\omega(\xi, \mu) = \omega(\xi, \mu').
\]
\end{enumerate}
\end{defn}

\subsection{Multi-level labeling induces a wall labeling of cubical complex}
\label{sec:label_wall_label}

Given a regulatory network $RN = (V, E, \delta)$ with $V = \{ 1, 2, \ldots, N \}$, we constructed a finite multilevel model $g \colon \cD \to \cD$ where $\cD$ is the set of states
\[
\mathcal{D} = \prod_{i = 1}^N \{0, 1, \ldots, m_i\},
\]
and $m_i = | \Targets(i) |$ is the number of out-edges of node $i$. For a given parameter $p = (h, \theta) \in \cP$ let $\alpha \colon \cD \times V \times \{ \pm 1\} \to \{\pm 1\}$ be the labeling on $\cD$ given in Definition~\ref{def:multilevel_labeling}.
Let $K(j) := m_j$ and consider $N$-dimensional cubical complex $\cX$ generated by
\[
\I = \prod_{j=1}^N \setof{0, \ldots, K(j) + 1}
\]
with set of walls $W(\cX)$. Define $\varphi \colon \cD \to \cX^{(N)}$ by
\[
\varphi(v) := [v, {\bf 1}],
\]
and notice that $\varphi$ is a bijection. Let
\begin{equation}
\label{eq:bijection_Boolean_walls}
\psi \colon \cD \times V \times \{ \pm 1\} \to W(\cX)
\end{equation}
be the map defined by
\begin{align*}
& \psi (v, j, -1) := \left([v, {\bf 1}^{(j)}], [v, {\bf 1}] \right) = (\mu_j^-, \mu) \\
& \psi (v, j, 1) := \left( [v + {\bf 0}^{(j)}, {\bf 1}^{(j)}], [v, {\bf 1}] \right) = (\mu_j^+, \mu),
\end{align*}
where $\mu = \varphi(v) = [v, {\bf 1}] \in \cX^{(N)}$.

\begin{prop}
The map $\psi \colon \cD \times V \times \{ \pm 1\} \to W(\cX)$ defined in \eqref{eq:bijection_Boolean_walls} is a bijection.
\end{prop}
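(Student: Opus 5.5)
The plan is to verify directly from Definitions~\ref{defn:Xcomplex} and~\ref{defn:wall} that $\psi$ is well defined, injective and surjective. The main tool is the description of the face relation~\eqref{eq:is_a_face}. Throughout, $\I=\prod_j\{0,\dots,m_j+1\}$ and $\cD=\prod_j\{0,\dots,m_j\}$.

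\emph{Well-definedness.} Fix $(v,j,\lambda)\in\cD\times V\times\{\pm1\}$.
\begin{itemize}
\item Since $0\le v_j\le m_j$, both $v$ and $v+{\bf 1}$ lie in $\I$, so $\mu=[v,{\bf 1}]\in\cX^{(N)}$.
\item The cell $[v,{\bf 1}^{(j)}]$ belongs to $\cX$ because $v+{\bf 1}^{(j)}\le v+{\bf 1}$ componentwise.
\item The cell $[v+{\bf 0}^{(j)},{\bf 1}^{(j)}]$ belongs to $\cX$ because $v+{\bf 0}^{(j)}\in\I$ and $v+{\bf 0}^{(j)}+{\bf 1}^{(j)}=v+{\bf 1}\in\I$.
\item Both cells have dimension $N-1$.
\item They are faces of $\mu$ by~\eqref{eq:is_a_face}: take $\bq={\bf 0}$ and $\bq={\bf 0}^{(j)}$ respectively. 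In both cases ${\bf 1}^{(j)}+\bq\le{\bf 1}$.
\end{itemize}
Hence $\psi(v,j,\lambda)\in W(\cX)$.

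\emph{Surjectivity.} This is the step that requires reading the face relation carefully, in particular the direction of the shift $\bq$. Let $(\xi,\mu)\in W(\cX)$.
\begin{itemize}
\item Write $\mu=[v,{\bf 1}]$. The condition $v+{\bf 1}\in\I$ forces $v_j\le m_j$ for all $j$, so $v\in\cD$ and $\mu=\varphi(v)$.
\item Write $\xi=[u,\bw]$ with $\dim\bw=N-1$. Then $\bw={\bf 1}^{(n)}$ for a unique $n$.
\item By~\eqref{eq:is_a_face}, $\xi\preceq\mu$ means $u=v+\bq$ for some $\bq\in\{0,1\}^N$ with ${\bf 1}^{(n)}+\bq\le{\bf 1}$. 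This forces $\bq_k=0$ for $k\ne n$, so $\bq\in\{{\bf 0},{\bf 0}^{(n)}\}$.
\item If $\bq={\bf 0}$, then $\xi=\mu_n^-$ and $(\xi,\mu)=\psi(v,n,-1)$. If $\bq={\bf 0}^{(n)}$, then $\xi=\mu_n^+$ and $(\xi,\mu)=\psi(v,n,1)$.
\end{itemize}

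\emph{Injectivity.} Suppose $\psi(v,j,\lambda)=\psi(v',j',\lambda')$.
\begin{itemize}
\item Equality of the second coordinates gives $[v,{\bf 1}]=[v',{\bf 1}]$, hence $v=v'$.
\item The first coordinates have second entries ${\bf 1}^{(j)}$ and ${\bf 1}^{(j')}$. These are equal only if $j=j'$.
\item The anchoring vertices are then $v$ or $v+{\bf 0}^{(j)}$. These are distinct, so $\lambda=\lambda'$.
\end{itemize}

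This completes the argument that $\psi$ is a bijection. The only point needing care is surjectivity, where the inequality $\bw+\bq\le\bw'$ has to be read to confine $\bq$ to ${\bf 0}$ or ${\bf 0}^{(n)}$.
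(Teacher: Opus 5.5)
Your proof is correct and takes the same route as the paper. There, injectivity is declared clear, and surjectivity follows by noting that every wall of $\mu=\varphi(v)$ is an $n$-wall, hence equal to $\mu_n^-$ or $\mu_n^+$. You additionally make explicit the well-definedness check and the reading of the face relation \eqref{eq:is_a_face} that forces $\bq\in\{\bzero,\bzero^{(n)}\}$, both of which the paper leaves implicit.
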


\begin{proof}
The map $\psi$ is clearly injective. To show that $\psi$ is surjective, let $(\xi, \mu) \in W(\cX)$ be a wall of $\mu$. It follows that there exists $j \in \{ 1, \ldots, N \}$ such that $J_i(\xi) = \setof{j}$ and hence that $(\xi, \mu)$ is a $j$-wall of $\mu$. Thus $(\xi, \mu)$ is either the left $j$-wall $\mu_j^-$ of $\mu$ or the right $j$-wall $\mu_j^+$ of $\mu$. Let $v = \varphi^{-1}(\mu)$ and notice that in the first case $\psi (v, j, -1) = (\mu_j^-,\mu)$ and in the second case $\psi (v, j, 1) = (\mu_j^+,\mu)$. Therefore $\psi$ is surjective and hence a bijection.
\end{proof}

We define a labeling $\omega \colon W(\cX) \to \{ \pm 1\}$ by
\begin{equation}
\label{eq:Boolean_wall_labeling}
\omega (\xi, \mu) := \alpha(\psi^{-1}(\xi, \mu)).
\end{equation}

\begin{thm}
\label{thm:Boolen_wall_label}
The map $\omega \colon W(\cX) \to \{ \pm 1\}$ defined by \eqref{eq:Boolean_wall_labeling} is a wall labeling.
\end{thm}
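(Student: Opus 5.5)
The plan is to construct, for each vertex $\sigma\in\cX^{(0)}$, an explicit local inducement map $\tilde o_\sigma$ from the order parameter $\theta$. I then verify conditions (i) and (ii) of Definition~\ref{def:wall_labeling} by translating walls back to labels through the bijection $\psi$ and the identification $\varphi(d)=[d,\bone]$. The key observation is the following. Moving from a top cell $[d,\bone]$ to the $n$-adjacent top cell $[d+e_n,\bone]$, where $e_n$ is the $n$-th unit vector, changes only the coordinate $d_n$, and only by one. By \eqref{eq:mu}, this flips $\mu_n^j$ for exactly one target $j$, namely the unique $j\in\Targets(n)$ with $\theta_n(j)=d_n+1$. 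So at most the single input $B^j_n$ changes, and hence at most the single component $g_j$ of $g$ changes.

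\textbf{Construction.} Let $\sigma=[v,\bzero]$. Two top cells in $\Top_\cX(\sigma)$ that are $n$-adjacent have the form $[d,\bone]$ and $[d+e_n,\bone]$ with $d_n+1=v_n$. This requires $1\le v_n\le m_n$; otherwise there is only one layer of top cells in direction $n$ around $\sigma$, and conditions (i) and (ii) in direction $n$ are vacuous. I therefore define
\[
\tilde o_\sigma(n) := \theta_n^{-1}(v_n)\ \text{ if } 1\le v_n\le m_n, \qquad \tilde o_\sigma(n):=n \ \text{ otherwise}.
\]

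\textbf{Condition (i).} Let $\mu=[d,\bone]$ and $\mu'=[d+e_n,\bone]$, and let $k\neq n$ and $k\neq\tilde o_\sigma(n)$. By the key observation, $g_k(d)=g_k(d+e_n)$, because $B^k$ does not change. Also $d_k=(d+e_n)_k$. Definition~\ref{def:multilevel_labeling} then gives $\alpha(d,k,\pm1)=\alpha(d+e_n,k,\pm1)$. Since $\omega(\mu^\pm_k,\mu)=\alpha(d,k,\pm1)$ by \eqref{eq:Boolean_wall_labeling}, condition (i) follows.

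\textbf{Condition (ii).} This is the one step requiring a small case check. Two distinct top cells share an $n$-wall $\xi$ only if $\xi$ is the right $n$-wall of $\mu=[d,\bone]$ and the left $n$-wall of $\mu'=[d+e_n,\bone]$. Thus $\omega(\xi,\mu)=\alpha(d,n,1)$ and $\omega(\xi,\mu')=\alpha(d+e_n,n,-1)$. If $n\neq\tilde o_\sigma(n)$, the flipped input does not feed $g_n$, so $g_n(d)=g_n(d+e_n)=:c$. Set $a=d_n$, and compare $c$ with $a$ and with $a+1$:
\begin{itemize}
\item if $c\ge a+1$, both labels equal $1$; in the boundary case $c=a+1$, the tie rule $-\lambda$ gives $+1$ for the left label of $d+e_n$;
\item if $c\le a$, both labels equal $-1$; in the boundary case $c=a$, the tie rule gives $-1$ for the right label of $d$.
\end{itemize}
Hence the two labels agree in every case. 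The main care needed is confirming the tie convention $-\lambda$ exactly in these boundary cases, and checking that walls on the boundary of $\cX$ belong to only one top cell, so they impose no constraint. With both conditions verified, $\omega$ is a wall labeling.
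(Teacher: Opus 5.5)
Your proof is correct and follows the paper's argument. You use the same local inducement map, $\tilde o_\sigma(n)=\theta_n^{-1}(v_n)$ when $1\le v_n\le m_n$ and $n$ otherwise, and your inline ``key observation'' together with the tie-rule case check is exactly the content of the paper's Lemma~\ref{lem:labeling}(i)--(ii); the only difference is presentational, since the paper isolates this as a lemma for $d-d'=e_j$ while you verify it directly for the pair $[d,\bone]$, $[d+e_n,\bone]$ around a fixed vertex $\sigma$.
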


The following lemma is used in the proof of Theorem~\ref{thm:Boolen_wall_label}. We denote by $e_j$ the unit vector in the $j$-th direction.

\begin{lem}
\label{lem:labeling}
Let $d, d' \in \cD$. If $d - d'= e_j$ for some $j \in \setof{1, \ldots, N}$, then there is an index $k = k(j) \in \setof{1, \ldots, N}$ such that
\begin{enumerate}[(i)]
\item For any index $i \neq k$ and $i \neq j$
\[
\alpha(d', i, -1) = \alpha(d, i, -1) \quad \text{ and } \quad \alpha(d', i, 1) = \alpha(d, i, 1).
\]
\item If $j \neq k$, then
\[
\alpha(d', j, 1) = \alpha(d, j, -1).
\]
\end{enumerate}
\end{lem}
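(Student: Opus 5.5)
The plan is to identify $k$ explicitly as the unique target of $j$ whose threshold is crossed when the $j$-th coordinate moves from $d'_j$ to $d_j$. Then (i) and (ii) follow from short comparisons of the update values $g_i$.

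\textbf{Choice of $k$.} Since $d - d' = e_j$, we have $d_j = d'_j + 1 \in \setof{1, \ldots, m_j}$ and $d_i = d'_i$ for $i \neq j$. Because $\theta_j \colon \Targets(j) \to \setof{1, \ldots, m_j}$ is a bijection, there is a unique $k \in \Targets(j)$ with $\theta_j(k) = d_j$; this is the index we take.

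\textbf{Which functions can change.} By \eqref{eq:mu}, $\mu_j^t(d_j) \neq \mu_j^t(d'_j)$ exactly when $d'_j < \theta_j(t) \le d_j$, that is, when $\theta_j(t) = d_j$. This happens only for $t = k$. For $i \neq j$ every $\mu_i^t$ is unchanged, since $d_i = d'_i$. Hence $B_i^t(d) = B_i^t(d')$ for every edge $(i,t)$ except $(j,k)$. Since $g_i$ in \eqref{Psi} depends on $d$ only through $B^i$, it follows that
\[
g_i(d) = g_i(d') \quad \text{for all } i \neq k.
\]

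\textbf{Part (i).} Take $i \neq k$ and $i \neq j$. Then $d_i = d'_i$ and $g_i(d) = g_i(d')$. Definition~\ref{def:multilevel_labeling} computes $\alpha(\cdot, i, \lambda)$ only from the pair $(g_i, d_i)$ together with $\lambda$, so both the left and the right labels coincide.

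\textbf{Part (ii).} Suppose $j \neq k$, and set $\gamma := g_j(d) = g_j(d')$. All quantities are integers and $d'_j = d_j - 1$, so there are four exhaustive cases.
\begin{enumerate}[(a)]
\item $\gamma > d_j$: then also $\gamma > d'_j$, and both $\alpha(d,j,-1)$ and $\alpha(d',j,1)$ equal $1$.
\item $\gamma = d_j$: then $\alpha(d,j,-1) = -(-1) = 1$, while $\gamma > d'_j$ gives $\alpha(d',j,1) = 1$.
\item $\gamma = d'_j$: then $\gamma < d_j$ gives $\alpha(d,j,-1) = -1$, and $\alpha(d',j,1) = -1$ by the equality case.
\item $\gamma < d'_j$: both labels equal $-1$.
\end{enumerate}
No value of $\gamma$ lies strictly between $d'_j$ and $d_j$, so these cases cover everything.

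\textbf{Main obstacle.} The only delicate step is the threshold bookkeeping that pins down $k$: one has to use that $\mu_j^t$ switches exactly at $d_j = \theta_j(t)$. The equality cases (b) and (c) then work precisely because the two labels compared in (ii) carry opposite values of $\lambda$.
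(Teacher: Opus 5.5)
Your proposal is correct and follows essentially the same route as the paper: you take $k=\theta_j^{-1}(d_j)$, note that only the input $B_j^k$ can change, so $g_i(d)=g_i(d')$ for every $i\neq k$, and then read off the labels from the definition of $\alpha$. Your four cases in (ii) are the paper's two cases ($g_j\le d'_j$ and $g_j\ge d_j$) with the equality subcases written out separately, which makes explicit where the opposite values of $\lambda$ matter.
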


\begin{proof}
Let $d,d' \in \cD$ be such that $d - d' = e_j$. So $d_j = d'_j + 1$ and $d_i = d'_i$ for $i \neq j$.
Let $k \in \Targets(j)$ be such that $\theta_j(k) = d_j$. It follows that $d'_j < \theta_j(k)$ and $d_j \geq \theta_j(k)$ and for any $i \neq k$ either $d'_j < \theta_j(i)$ and $d_j < \theta_j(i)$ or $d'_j \geq \theta_j(i)$ and $d_j \geq \theta_j(i)$. Hence for $i \neq k$, by definition, $B^i(d') = B^i(d)$ and thus
\[
g_i(d') = \sum_{j' \in \Targets(i)} h_i^{j'}(B^i(d')) = \sum_{j' \in \Targets(i)} h_i^{j'}(B^i(d)) = g_i(d).
\]
Therefore if $i \neq k$ and $i \neq j$ then $g_i(d') = g_i(d)$ and $d_i' = d_i$ and hence $\alpha(d', i, \lambda) = \alpha(d, i, \lambda)$ which proves (i).

To prove (ii), notice that since $j \neq k$ we have $g_j(d') = g_j(d)$ as above. Hence there are two cases to consider:
\begin{itemize}
\item If $g_j(d') = g_j(d) \leq d'_j < d_j$, then $\alpha(d', j, +1) = -1$ and $\alpha(d, j, -1) = -1$.
\item If $g_j(d') = g_j(d) \geq d_j > d'_j$, then $\alpha(d', j, +1) = 1$ and $\alpha(d, j, -1) = 1$.
\end{itemize}
Therefore $\alpha(d', j, +1) = \alpha(d, j, -1)$, which proves (ii).
\end{proof}

\begin{proof}[Proof of Theorem~\ref{thm:Boolen_wall_label}]
To prove the theorem we need to define a local inducement map for $\omega$. To that end, let $\sigma = [v, {\bf 0}] \in \cX^{(0)}$, with $v = (v_1, \ldots, v_N)$. Define $\tilde{o}_{\sigma} \colon \setof{1, \ldots, N} \to \setof{1,\ldots, N}$ by
\[
\tilde{o}_{\sigma}(j) :=
\begin{cases}
\theta^{-1}_j(v_j), & \text{if } 1 \leq v_j \leq m_j \\
j,                  & \text{otherwise.}
\end{cases}
\]
Notice that $\tilde{o}_{\sigma}(j)$ coincides with the index $k = k(j)$ from Lemma~\ref{lem:labeling}. Notice also that items (i) and (ii) of Lemma~\ref{lem:labeling} match the requirements (i) and (ii) of a wall labeling in Definition~\ref{def:wall_labeling}. Therefore $\tilde{o}_{\sigma}$ defines a local inducement map for $\omega$. This concludes the proof.
\end{proof}

\subsection{Morse graph}
\label{sec:Morse_graph}

Let $\cZ$ be a finite set and let $\cF \colon \cZ \rightrightarrows \cZ$ be a multivalued map.

\begin{defn}
The directed graph associated with $\cF$ is $G(\cF) = (\cZ,E)$ where the set of vertices is $\cZ$ and the edges are given by 
\[
E = \setdef{ (v,v') \in \cZ \times \cZ }{ v' \in \cF(v) }. 
\] 
\end{defn}

We define an equivalence relation $\sim$ on $\cZ$ by $v \sim v'$ if and only if $v=v'$ or there exist sequences $v_0, \ldots, v_k \in \cZ$ and $v'_0, \ldots, v'_m \in \cZ$ such that $v_0 = v'_m = v$, $v_k = v'_0 = v'$, $v_i \in \cF(v_{i-1})$ for $i=1,\ldots,k$, and $v'_j \in \cF(v'_{j-1})$ for $j=1,\ldots,m$. The equivalence classes are the \emph{strongly connected components} of $G(\cF)$ and are denoted by $\SCC(\cF)$.

\begin{defn}
The \emph{condensation graph} of $G(\cF)$ is the graph defined by $\CG(\cF)=(\SCC(\cF),E_{\SCC})$ where 
\[
E_{\SCC} = \setdef{ (C, C') \in \SCC(\cF) \times \SCC(\cF)}{C \neq C' \text{ and } \exists v \in C, v'\in C' \text{ s.t. } (v,v') \in E }. 
\]

\end{defn}
We define a partial order $\preceq$ on $\SCC(\cF)$ by $C \preceq C'$ if and only if $C = C'$ or if there exists $k\geq 0$ and $C_0,\ldots,C_k \in \SCC(\cF)$ such that $C=C_0$, $C'=C_k$ and $(C_i,C_{i-1}) \in E_{\SCC}$ for $i=1,\ldots,k$.

\begin{defn}
A strongly connected component $C\in \SCC(\cF)$ is a \emph{recurrent component} if the subgraph of $G(\cF)$ induced by $C$ contains at least one edge, that is,
\[
C \text{ is recurrent} \iff E \cap (C \times C) \neq \emptyset.
\]
Let $\RC(\cF)$ denote the set of recurrent components of $G(\cF)$.
\end{defn}

\begin{defn}
The Morse Graph of $\cF$, $\MG(\cF)$, is the partially ordered set $(\RC(\cF), \preceq)$. 
\end{defn}

We represent the Morse graph as the transitive reduction of the directed acyclic graph defined by the poset $(\RC(\cF), \preceq)$, that is, we define a directed acyclic graph whose vertices are $\RC(\cF)$ and edges are $\setdef{(C, C') \in \RC(\cF) \times \RC(\cF)}{C' \prec C}$ and represent $\MG(\cF)$ as the transitive reduction of this graph. This representation of the Morse graph is called the Hasse diagram of the poset $(\RC(\cF),\preceq)$.

We label the nodes of the Morse graph as follows.

\begin{defn}
\label{def:Boolean_MG_labels}
Let $\cZ = \bbB^N$ or $\cZ = \cD$ and consider a multivalued map $\cF \colon \cZ \rightrightarrows \cZ$. Let $q$ be a node in the Morse graph $\MG(\cF)$ and let $C_q$ be the associated strongly connected component. We label the node $q$ as follows:
\begin{enumerate}
\item[(i)] If $C_q = \setof{v}$ consists of a single state $v \in \cZ$, then we call $q$ a \emph{Fixed Point} and label it as $FP(v)$.
\item[(ii)] If $C_q = \setof{v_1, \ldots, v_k}$ for $k \geq 2$ and there exists $i \in \setof{1, \ldots, N}$ such that $(v_1)_i = (v_2)_i = \cdots = (v_k)_i$, then we call $q$ a \emph{Partial Cycle} and label it as $PC$.
\item[(iii)] Otherwise we call $q$ a \emph{Full Cycle} and label it as $FC$. 
\end{enumerate}
\end{defn}

The Boolean and Multilevel Morse graphs are defined as follows.

\begin{defn}
\label{def:Boolean_MG}
The \emph{Boolean Morse graph} of a monotone Boolean model $f \colon \bbB^N \to \bbB^N$ is the Morse graph $\MG(\cF)$ of the multivalued map $\cF \colon \bbB^N \rightrightarrows \bbB^N$ given by Definition~\ref{def:Boolean_update}. The nodes of $\MG(\cF)$ are labeled according to Definition~\ref{def:Boolean_MG_labels}.
\end{defn}

\begin{defn}
\label{def:multi_Boolean_MG}
The \emph{Multilevel Morse graph} of a regulatory network $RN$ at a parameter $(h, \theta) \in \PG$ is the Morse graph $\MG(\cG)$ of the multivalued map $\cG \colon \cD \rightrightarrows \cD$ given by Definition~\ref{def:update}. The nodes of $\MG(\cG)$ are labeled according to Definition~\ref{def:Boolean_MG_labels}.
\end{defn}

A DSGRN parameter $(h, \theta) \in \PG$ gives rise, via equation \eqref{eq:Boolean_wall_labeling}, to a wall labeling $\omega \colon W(\cX) \to \{ \pm 1\}$ which in turn generates a multivalued map $\cF \colon \cX \rightrightarrows \cX$ on the cubical complex $\cX$ defined in Definition~\ref{defn:Xcomplex} (see \cite{rook_field:24} for details). The multivalued map $\cF \colon \cX \rightrightarrows \cX$ generate the \emph{DSGRN Morse graph} as defined below.

\begin{rem}
\label{DSGRN_STG}
A wall labeling $\omega \colon W(\cX) \to \{ \pm 1\}$ actually generates five multivalued maps $\cF_i \colon \cX \rightrightarrows \cX$, $i = 0, 1, \ldots, 4$, at different levels of refinement (see \cite{rook_field:24}). For the computations in this paper we use the most refined multivalued map $\cF_4 \colon \cX \rightrightarrows \cX$ (see Section~\ref{sec:code}). In this paper we refer to this map simply as $\cF \colon \cX \rightrightarrows \cX$.
\end{rem}

As indicated in the introduction, the DSGRN models have two distinguishing features. First, the Morse graphs obtained from the combinatorial models can be mapped to Morse decompositions of ODEs. Second, the fact that $\cX$ can be viewed as a cubical cell complex \cite{lefschetz} allows for algebraic topological, in particular homological computations. Furthermore, the computed homological invariants, called \emph{Conley indices}, can be used to deduce the existence of invariant sets, e.g., fixed points and periodic orbits, for ODEs. We refer the reader to \cite{rook_field:24,mischaikow:mrozek} for further details.

For the purposes of this paper it is sufficient to remark that each node in the Morse graph can be assigned a Conley index. Furthermore, given a regulatory network with $N$ vertices, the Conley index for an associated ODE can be represented as a vector $(c_0,\ldots, c_N)$ of non-negative integers (the Betti numbers of the Conley index). Of particular note are the following representations of Conley indices.
\begin{enumerate}
\item If $(c_0,\ldots,c_N) =\bzero^{(k)}$, then this is the \emph{Conley index of a fixed point} with unstable manifold of dimension $k$.
\item If $(c_0,\ldots,c_N) =\bzero^{(k)}+\bzero^{(k+1)}$, then this is the \emph{Conley index of a periodic orbit} with  $k$  Floquet multipliers with magnitude greater than 1.
\item If $(c_0,\ldots,c_N) =\bzero$, then this is the Conley index of the empty set.
\end{enumerate}

In an ODE, the Conley index of a fixed point implies the existence of a fixed point. However, the Conley index of a periodic orbit only suggests the existence of a periodic orbit. Additional conditions need to be satisfied before the existence of a periodic orbit can be guaranteed \cite{mccord:mischaikow:mrozek}. Of greatest significance, is the fact that if the Conley index does not equal $\bzero$, then for an associated ODE the existence of a nontrivial invariant set is guaranteed.

\begin{defn}
\label{def:DSGRN_MG}
The \emph{DSGRN Morse graph} of a regulatory network $RN$ at a parameter $(h, \theta) \in \PG$ is the Morse graph $\MG(\cF)$ of the DSGRN multivalued map $\cF \colon \cX \rightrightarrows \cX$ defined above. Let $q$ be a node in the Morse graph $\MG(\cF)$ and let $k$ denote the number of cells in the corresponding strongly connected component. We assign to node $q$ the label $\ell(q)$ defined by
\[
\ell(q) :=
\begin{cases}
FP(k), & \text{if}~ q ~\text{has the Conley index of a fixed point} \\
PO(k), & \text{if}~ q ~\text{has the Conley index of a periodic orbit} \\
T(k),  & \text{if}~ q ~\text{has the trivial Conley index} \\
M(k),  & {otherwise}.
\end{cases}
\]
\end{defn}

\begin{rem}
For the multivalued map $\cF \colon \cX \rightrightarrows \cX$, defined in \cite{rook_field:24}, there is a notion of a gradient direction which can be used to conclude that some recurrent components are trivial (in the sense that they do not represent any dynamics of the model). So we define $\GRC(\cF)$ as the set of recurrent components in $\RC(\cF)$ for which there is common gradient direction and define the Morse graph as $\MG(\cF) := \RC(\cF) \setminus \GRC(\cF)$ with partial order $\preceq$.
\end{rem}

\subsection{Cost of computing Morse graphs}

Observe that there are three steps to identifying a Morse graph: the identification of the condensation graph, the identification of the recurrent components, and finally the identification of the partial order $\preceq$. Our perspective on quantifying the computational costs is based on the empirical observation that the number of recurrent components is typically orders of magnitude smaller than the combinatorial state space $\cZ$. Thus, we restrict our attention to the computation of the condensation graph that, using standard algorithms \cite{tarjan}, can be done in $O(|\cZ|+|E|)$ steps where $|\cZ|$ is the number of elements of $\cZ$ and $|E|$ is the number of edges in $G(\cF)$.

Given a monotone Boolean model $f\colon \B^N\to\B^N$, an upper bound on the computational cost of computing the associated Morse graph is
\[
O(N2^N).
\]
This bound follows from the fact that for the associated multivalued map $\cF\colon \B^N\rightrightarrows \B^N$ the number of vertices is $2^N$ and each vertex can have at most $2N+1$ edges.

Given a regulatory network $RN = (V, E, \delta)$ and using an associated finite multivalued dynamics model an upper bound on the computational cost of computing the Morse graph is
\[
O\left( N\prod_{n=1}^N \left( |\Targets(n)| + 1 \right) \right).
\]
This bound follows from the fact that for the associated multivalued map $\cG\colon \B^N\rightrightarrows \B^N$ the number of vertices is $\prod_{n=1}^N \left( |\Targets(n)| + 1 \right)$ and each vertex can have at most $2N+1$ edges.

Given a regulatory network $RN = (D,\delta) =(V, E, \delta)$ and using a DSGRN multivalued map $\cF_i\colon \cX\rightrightarrows \cX$, an upper bound on the computational cost of computing an associated Morse graph is
\[
O \left( N 2^N\prod_{n=1}^N \left( |\Targets(n)| + 2 \right) \right).
\]
This bound follows from the fact that each element of $\cX$ takes the form $[v, \bw]$ where $v \in \cX^{(0)}$ and $\bw\in \B^N$ and each element of $\cX$ can have at most $2N+1$ edges.

As indicated in Remark~\ref{DSGRN_STG} the various DSGRN multivalued maps $\cF_i$ allow for different levels of expression of dynamics and in general the more expressive the potential dynamics the higher the computational cost. However, the asymptotic bound given above is applicable to all the DSGRN multivalued maps developed at the time of this writing.

We do not claim that these estimates are sharp. However, they do make clear that the expected computational cost increases significantly as one moves from a Boolean to a finite multivalued to a DSGRN model. 

\subsection{Multilevel refinement of Boolean models}

In this section we establish a formal relationship between Boolean models and multilevel models. Specifically, we show that the DSGRN Boolean parameters give rise to \emph{multivalued refinements} of the corresponding Boolean models.

Let $RN$ be a regulatory network with $V = \setof{1, \ldots, N}$ and let $f \colon \bbB^N \to \bbB^N$ be a monotone Boolean model for $RN$ (see Definition~\ref{def:boolean_model_RN}). We assume that $RN$ is a fully connected network. This assumption is made solely for the sake of presentation, the results presented here are valid without this assumption. Recall that a DSGRN parameter is defined in terms of monotone increasing Boolean functions (see Definition~\ref{def:fi}). In particular, associated to $f$, there is an increasing monotone Boolean model $\tilde{h} \colon \bbB^N \to \bbB^N$ via the change of variables $\beta$ in \eqref{eq:beta}. More specifically, for each $i \in V$ there is a change of variables $\beta^i \colon \bbB^N \to \bbB^N$ given by \eqref{eq:beta} such that
\begin{equation}
\label{eq:f_f_tilde}
f_i(b) = \tilde{h}_i(\beta^i(b)).
\end{equation}

There are two types of multivalued maps associated to $f$. One is the asynchronous update multivalued map $\cF \colon \bbB^N \rightrightarrows \bbB^N$ generated by $f$, given by Definition~\ref{def:Boolean_update}, which gives rise to the Boolean dynamics of the network.

The other is the nearest neighbor asynchronous update multivalued map $\cG \colon \cD \rightrightarrows \cD$ (see Definition~\ref{def:update}), which is used to compute the finite multilevel dynamics of the network. The map $\cG$ is defined in terms of a DSGRN parameter as follows. Given an order parameter $\theta$, the map $\tilde{h}$ defines a Boolean DSGRN parameter $(h, \theta) \in \PG$, where
\begin{equation}
\label{eq_f_hat}
h = \left( (\tilde{h}_1, \ldots, \tilde{h}_1), (\tilde{h}_2, \ldots, \tilde{h}_2), \ldots, (\tilde{h}_N, \ldots, \tilde{h}_N) \right),
\end{equation}
as defined by \eqref{eq:Boolean_par}. The parameter $(h, \theta)$ is then used to define the finite multilevel update function $g \colon \cD \to \cD$ given by \eqref{Psi}, which in turn generates the nearest neighbor asynchronous update multivalued map $\cG \colon \cD \rightrightarrows \cD$.

The goal of this section is to show that the multilevel update function $g$ is a multilevel refinement of the Boolean model $f$. The refinement criteria rely on the notion of a binarization of the multivalued configurations $d \in \cD$. Our definition of binarization is a special case of the definition in \cite{Pauleve_MP}, which implies that we have a more strict notion of refinement and hence that our results are stronger.

\begin{defn}
\label{defn:binarization}
A \emph{binarization} of the set $X = \setof{0,1,\ldots,m}$ is a monotone increasing function  $\mu \colon X \to \bbB$ satisfying $\mu(0) = 0$ and $\mu(m) = 1$. A \emph{binarization} of $\cD = \prod_{i=1}^N X_i$ is a map $\mu \colon \cD \to \bbB^N$ given by
\[
\mu(d) = (\mu_1(d_1), \mu_2(d_2), \ldots, \mu_N(d_N)),
\]
where $\mu_i \colon X_i \to \bbB$ is a binarization of $X_i$ for $i = 1, \ldots, N$.
\end{defn}

\begin{defn}
\label{defn:multilevel_refinement}
A multilevel update function $g \colon \cD \to \cD$ is a \emph{trivial multilevel refinement} of a Boolean model $f \colon \bbB^N \to \bbB^N$ if for each $i \in \setof{1, \ldots, N}$ there exists a binarization 
$\mu^i \colon \cD \to \bbB^N$ 
such that the following holds for every $d \in \cD$
\begin{enumerate}[(i)]
\item If $g_i(d) < d_i$, then $f_i(\mu^i(d)) = 0$,
\item If $g_i(d) > d_i$, then $f_i(\mu^i(d)) = 1$,
\item If $g_i(d) = d_i$, then $f_i(\mu^i(d)) = b_i$, where $b = \mu^i(d)$.
\end{enumerate}
\end{defn}
For our Boolean DSGRN parameter $(h, \theta) \in \PG$ and for $j \in \setof{1, \ldots, N}$, define $\mu^j \colon \cD \to \bbB^N$ by
\begin{equation}
\label{eq:binaritazion_muj}
\mu^j(d) = (\mu^j_1(d_1), \ldots, \mu^j_N(d_N)),
\end{equation}
where $\mu^j_i \colon X_i \to \bbB$ is defined by \eqref{eq:mu}.

\begin{lem}
\label{lem:binarization}
For each $j \in \setof{1, \ldots, N}$, the map $\mu^j \colon \cD \to \bbB^N$ defined by \eqref{eq:binaritazion_muj} is binarization of $\cD$ which satisfies $f_j(\mu^j(d)) = \tilde{h}_j(B^j(d))$ for all $d \in \cD$, where $B^j$ is given by \eqref{def:B}.
\end{lem}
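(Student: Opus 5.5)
The proof has two parts: first check that $\mu^j$ is a binarization in the sense of Definition~\ref{defn:binarization}, then verify the identity $f_j(\mu^j(d)) = \tilde{h}_j(B^j(d))$ coordinatewise.

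\emph{Binarization.} Fix $j$ and a coordinate $i$. By \eqref{eq:mu}, $\mu^j_i(d_i)$ equals $0$ when $d_i < \theta_i(j)$ and $1$ otherwise, so it is monotone increasing in $d_i$. The order parameter $\theta_i$ is a bijection $\Targets(i) \to \{1,\ldots,m_i\}$, so $1 \le \theta_i(j) \le m_i$. Hence $\mu^j_i(0) = 0$, because $0 < \theta_i(j)$, and $\mu^j_i(m_i) = 1$, because $m_i \ge \theta_i(j)$. Thus each $\mu^j_i$ is a binarization of $X_i$, and $\mu^j$ is a binarization of $\cD$.

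This step uses the standing full-connectivity assumption, so that $(i,j) \in E$ and $\theta_i(j)$ is defined for every pair. Without that assumption, one would set $\mu^j_i$ to any fixed binarization (for example the step at $1$) for $i \notin \Sources(j)$. Since $f_j$ and $\tilde h_j$ do not depend on such inputs, the identity below is unaffected.

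\emph{Identity.} By \eqref{eq:f_f_tilde}, $f_j(\mu^j(d)) = \tilde{h}_j(\beta^j(\mu^j(d)))$. It therefore suffices to show $\beta^j(\mu^j(d)) = B^j(d)$ componentwise, that is, $\beta^j_i(\mu^j_i(d_i)) = \tilde\beta^j_i(\mu^j_i(d_i)) = B^j_i(d)$ for each $i \in \Sources(j)$. The second equality is the definition $B^j_i = \tilde\beta^j_i \circ \mu^j_i$. For the first, note that $\beta^j_i$ is defined by the monotonicity of $f_j$ in input $i$, while $\tilde\beta^j_i$ is defined by $\delta(i,j)$. Because $f$ is a monotone Boolean model for $RN$, its influence network is a signed subgraph of $RN$. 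So when $f_j$ strictly depends on input $i$, the direction of monotonicity agrees with $\delta(i,j)$, and $\beta^j_i = \tilde\beta^j_i$. This is exactly the earlier remark that the two maps coincide for Boolean parameters.

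The main obstacle is the degenerate case where $f_j$ is constant in input $i$. There $\beta^j_i$ may be ambiguous (either negation or identity) and may disagree with $\tilde\beta^j_i$. I would handle it by observing that $\tilde h_j$ is then also constant in input $i$. Indeed, $\tilde h_j = f_j \circ (\beta^j)^{-1}$ and $\beta^j$ acts coordinatewise, so flipping input $i$ of $\tilde h_j$ corresponds to flipping input $i$ of $f_j$. Consequently the value of $\tilde h_j$ is unchanged whichever of $\beta^j_i$ or $\tilde\beta^j_i$ is applied to that coordinate. Alternatively, one can fix the convention $\beta^j_i := \tilde\beta^j_i$ in this case, which \eqref{eq:beta} permits, since a constant function is both increasing and decreasing in that input. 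In either case $\tilde h_j(\beta^j(\mu^j(d))) = \tilde h_j(B^j(d))$, which proves the lemma.
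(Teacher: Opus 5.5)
Your proof is correct and follows the same route as the paper: check from \eqref{eq:mu} that each $\mu^j_i$ is monotone with $\mu^j_i(0)=0$ and $\mu^j_i(m_i)=1$, then combine $\beta^j(\mu^j(d)) = B^j(d)$ with \eqref{eq:f_f_tilde}. Your handling of inputs on which $f_j$ is constant adds care the paper skips, since the paper simply asserts $\beta^j \circ \mu^j = B^j$ even though \eqref{eq:beta} is ambiguous there; your observation that $\tilde h_j$ is then also constant in that input shows the identity holds regardless of the choice.
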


\begin{proof}
Let $j \in \setof{1, \ldots, N}$. For each $i = 1, \ldots, N$ it follows directly from \eqref{eq:mu} that $\mu^j_i \colon X_i \to \bbB$ is monotone increasing, $\mu^j_i(0) = 0$, and $\mu^j_i(m_i) = 1$. Hence $\mu^j_i$ is a binarization of $X_i$ and therefore $\mu^j$ is a binarization of $\cD$.

From the definitions of $\beta^j$ given by \eqref{eq:beta} and $B^j$ given by \eqref{def:B} it follows that $\beta^j(\mu^j(d)) = B^j(d)$. Hence by \eqref{eq:f_f_tilde} we have that
\[
f_j(\mu^j(d)) = \tilde{h}_j(\beta^j(\mu^j(d))) = \tilde{h}_j(B^j(d)).
\]
\end{proof}

\begin{thm}
\label{thm:multilevel_refinement}
Let $RN$ be a regulatory network with $V = \{ 1, \ldots, N \}$. Let $f \colon \bbB^N \to \bbB^N$ be a Boolean model for $RN$ and let $(h, \theta) \in \PG$ be a corresponding parameter node as defined above. Let $g \colon \cD \to \cD$ be the multilevel function defined by $(h, \theta)$. Then $g$ is a trivial multilevel refinement of the Boolean model $f$.
\end{thm}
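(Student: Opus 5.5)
The plan is to use the special form of a Boolean DSGRN parameter. Every node $i$ has all components of its logic parameter equal, $h_i^j = \tilde{h}_i$ for every $j \in \Targets(i)$, so the sum in \eqref{Psi} collapses:
\[
g_i(d) = \sum_{j \in \Targets(i)} \tilde{h}_i(B^i(d)) = m_i\, \tilde{h}_i(B^i(d)).
\]
Hence $g_i(d) \in \setof{0, m_i}$ for every $d \in \cD$, and $g_i(d) = 0$ or $g_i(d) = m_i$ according as $\tilde{h}_i(B^i(d))$ equals $0$ or $1$. This dichotomy drives the whole argument.

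For each $i$ I take the binarization $\mu^i$ from \eqref{eq:binaritazion_muj}. Lemma~\ref{lem:binarization} says it is a binarization of $\cD$ and that $f_i(\mu^i(d)) = \tilde{h}_i(B^i(d))$. Combining this with the display gives the key identity $f_i(\mu^i(d)) = g_i(d)/m_i$, for all $d$. I then check the three conditions of Definition~\ref{defn:multilevel_refinement}.
\begin{enumerate}[(i)]
\item If $g_i(d) < d_i$, then $g_i(d) < d_i \le m_i$ forces $g_i(d) = 0$, so $f_i(\mu^i(d)) = 0$.
\item If $g_i(d) > d_i$, then $g_i(d) > d_i \ge 0$ forces $g_i(d) = m_i$, so $f_i(\mu^i(d)) = 1$.
\item If $g_i(d) = d_i$, then $d_i \in \setof{0, m_i}$, and we need $f_i(\mu^i(d)) = \mu^i_i(d_i)$. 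Since $\mu^i_i$ is a binarization of $X_i$, we have $\mu^i_i(0) = 0$ and $\mu^i_i(m_i) = 1$. So $\mu^i_i(d_i) = d_i/m_i = g_i(d)/m_i = f_i(\mu^i(d))$.
\end{enumerate}

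The only delicate step is bookkeeping in (iii). The $i$-th component of $\mu^i$ is $\mu^i_i$, which by \eqref{eq:mu} is defined through the threshold $\theta_i(i)$ of the self-edge $(i,i)$. This is where the standing fully connected assumption is used: it makes every $\mu^j_k$ well defined, and in particular $\theta_i(i) \in \setof{1, \ldots, m_i}$. We need only the endpoint values $\mu^i_i(0) = 0$ and $\mu^i_i(m_i) = 1$, not the location of the threshold, so the conclusion does not depend on $\theta$. Without full connectivity I would define $\mu^j_k$ arbitrarily, say by a threshold at $m_k$, for non-edges; these inputs do not affect $f_j$ or $\tilde{h}_j$, so the argument is unchanged.
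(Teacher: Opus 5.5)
Your proof is correct and follows essentially the same route as the paper's. You collapse \eqref{Psi} to $g_i(d)=m_i\,\tilde{h}_i(B^i(d))\in\setof{0,m_i}$, invoke Lemma~\ref{lem:binarization} for $f_i(\mu^i(d))=\tilde{h}_i(B^i(d))$, and check (i)--(iii) using $0\le d_i\le m_i$ and the endpoint values $\mu^i_i(0)=0$, $\mu^i_i(m_i)=1$; your closing remark on dropping full connectivity is a harmless extra that the paper asserts without detail.
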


\begin{proof}
For $i \in \setof{1, \ldots, N}$ let $\mu^i \colon \cD \to \bbB^N$ be the map defined by \eqref{eq:binaritazion_muj}. By Lemma~\ref{lem:binarization} it follows that $\mu^i$ is a binarization of $\cD$ and $f_i(\mu^i(d)) = \tilde{h}_i(B^i(d))$ for all $d \in \cD$.

Furthermore notice that, by \eqref{Psi} and \eqref{eq_f_hat}, we have
\[
g_i(d) = \sum_{j \in \Targets(i)} h_i^j(B^i(d)) = \sum_{j \in \Targets(i)} \tilde{h}_i(B^i(d)) = | \Targets(i) | \tilde{h}_i(B^i(d)) = m_i \tilde{h}_i(B^i(d)).
\]
Hence
\begin{equation}
\label{eq:g_i_values}
g_i(d) =
\begin{cases}
0,  & \text{if}~  \tilde{h}_i(B^i(d)) = 0 \\
m_i, & \text{if}~ \tilde{h}_i(B^i(d)) = 1.
\end{cases}
\end{equation}
From this it follows that if $g_i(d) < d_i \leq m_i$, then $g_i(d) = 0$ and so $\tilde{h}_i(B^i(d)) = 0$. Hence
\[
f_i(\mu^i(d)) = \tilde{h}_i(B^i(d)) = 0,
\]
which proves (i). Analogously, if $g_i(d) > d_i \geq 0$, then $g_i(d) = m_i$ and so $\tilde{h}_i(B^i(d)) = 1$. Thus
\[
f_i(\mu^i(d)) = \tilde{h}_i(B^i(d)) = 1,
\]
which proves (ii).

To prove (iii) let $b = \mu^i(d)$. If $g_i(d) = d_i$, then it follows from \eqref{eq:g_i_values} that $d_i = g_i(d) = 0$ or $d_i = g_i(d) = m_i$.

If $d_i = g_i(d) = 0$ then, since $\mu^i$ is a binarization, we have that $b_i = 0$. Additionally, $f_i(\mu^i(d)) = 0$ as above. Hence $f_i(\mu^i(d)) = b_i$.

Similarly, if $d_i = g_i(d) = m_i$, then $b_i = 1$ and $f_i(\mu^i(d)) = 1$. Thus $f_i(\mu^i(d)) = b_i$.

Therefore $g$ is a trivial multilevel refinement of $f$.
\end{proof}

\section{Additional examples}
\label{sec:more_examples}

In this section we discuss additional examples. We label the nodes of Morse graphs in this section with their Conley indices. For the relationship between the Conley index and the label used in Section~\ref{sec:examples_intro} see Definition~\ref{def:DSGRN_MG}.

\subsection{A two node network}

Consider the two-node network in Figure~\ref{fig:22network}(a), where both nodes have positive self-loops, and $v_1 \dashv v_2$ and $v_2 \to v_1$ are the remaining two edges. We computed the Morse graphs for all nodes in the essential parameter graph of this network and classified the nodes into Morse graph isomorphism classes. The resulting parameter graph is presented in Figure~\ref{fig:parameter_graph_two_nodes}, where nodes producing isomorphic Morse graphs have the same color. Nodes that correspond to Boolean parameters (see Definition~\ref{def:DSGRN_Boolean_parameter_graph}) are square shaped and the non-Boolean nodes are circles. In Figure~\ref{fig:mg_two_nodes_classes} we present one Morse graph for each isomorphism class.

Notice that all Boolean parameters belong to isomorphism classes $0$, $3$, and $10$. Hence Boolean parameters cannot produce the dynamics in the remaining isomorphism classes. In particular Boolean parameters cannot produce monostability and $4$-stability.

\begin{figure}[!htpb]
\centering
\includegraphics[width=1.0\linewidth]{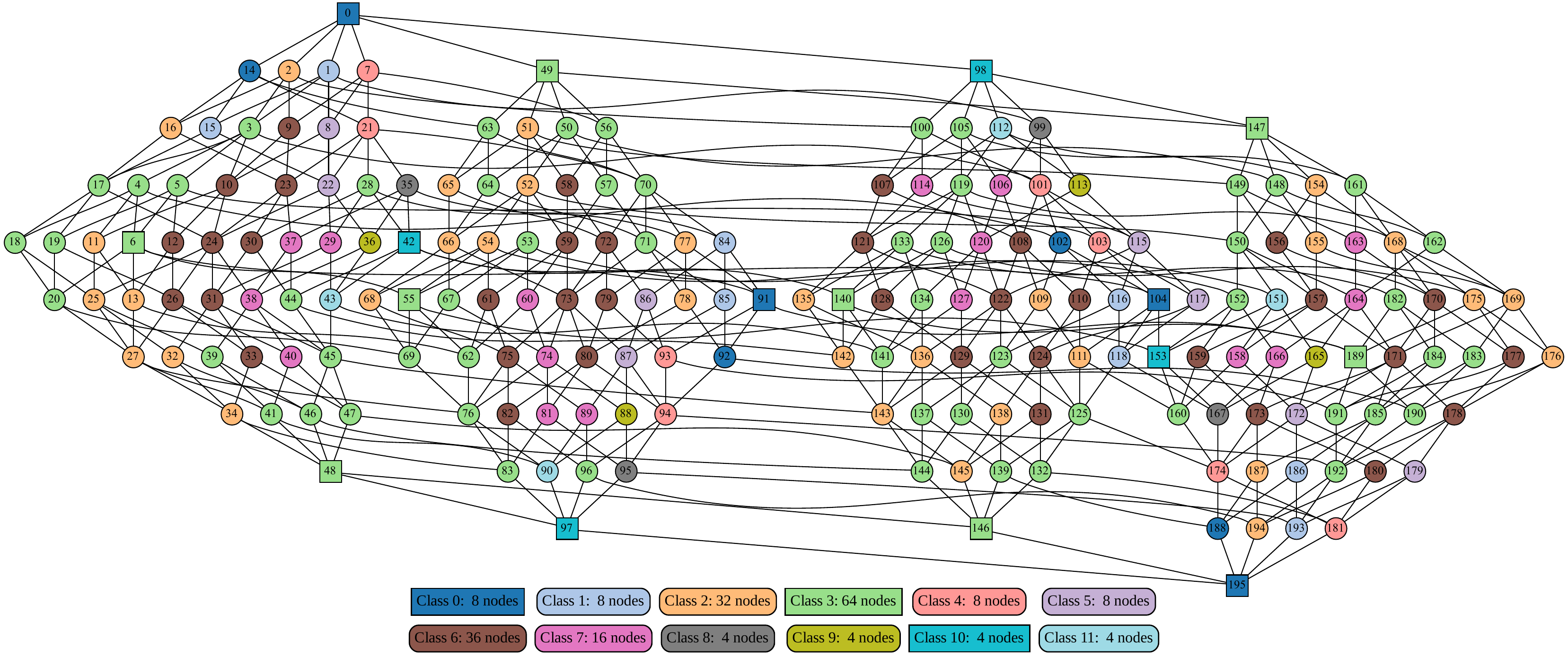}
\caption{Essential parameter graph, with $196$ nodes, for the network in Figure~\ref{fig:22network}(a). Nodes with the same color have isomorphic Morse graphs. The number of nodes in each class is given in the legend. Squared nodes are Boolean parameter nodes and round nodes are non Boolean parameters.}
\label{fig:parameter_graph_two_nodes}
\end{figure}

\begin{figure}
\centering
\includegraphics[width=1.0\linewidth]{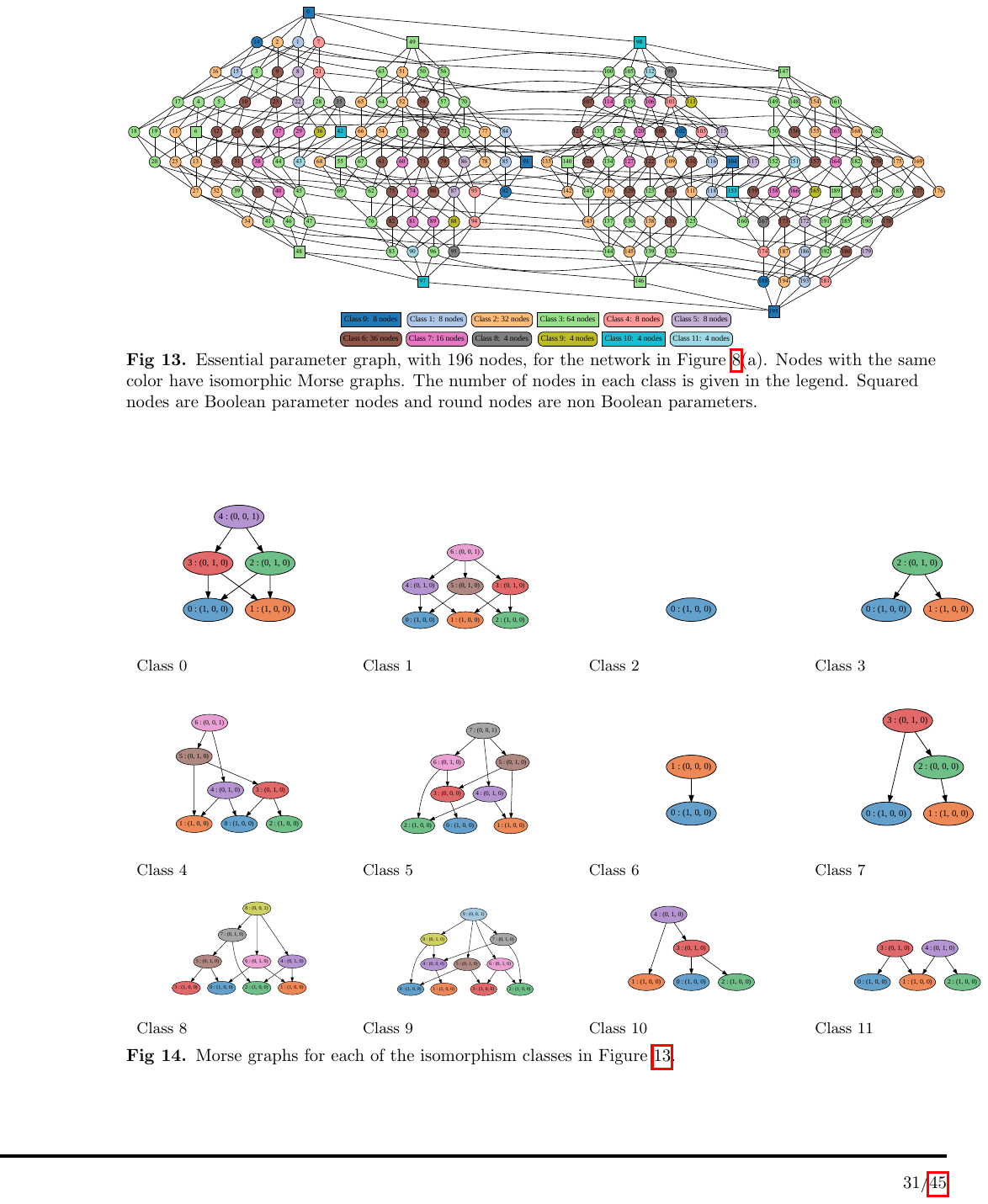}
\caption{Morse graphs for each of the isomorphism classes in Figure~\ref{fig:parameter_graph_two_nodes}.}
\label{fig:mg_two_nodes_classes}
\end{figure}

\subsection{A three node network}

Consider the three node network in Figure~\ref{fig:3node_network}. The essential parameter graph of this network has $2,352$ nodes. To have a complete description of the types of dynamics which can be expressed by this network we computed the Morse graphs for all nodes in the essential parameter graph. We classified each Morse graph according to the number of stable nodes in the Morse graph and whether it contains the Conley index of a stable periodic orbit. For each type of dynamics we counted how many of the parameter nodes are Boolean parameter nodes. The resulting number of parameters for each type of stability encountered is presented in Table~\ref{tab:three_node_dynamics}. As we can see from the results, the Boolean parameters are not expressive enough to capture the periodic behavior and the more complicated dynamics. Figures~\ref{fig:three_node_mg_6_stable} and \ref{fig:three_node_mg_periodic} present two examples of Morse graphs, one exhibiting $6$-stability and one with a Conley index of a periodic orbit.

\begin{figure}[!htb]
\centering
\includegraphics[width=0.2\linewidth]{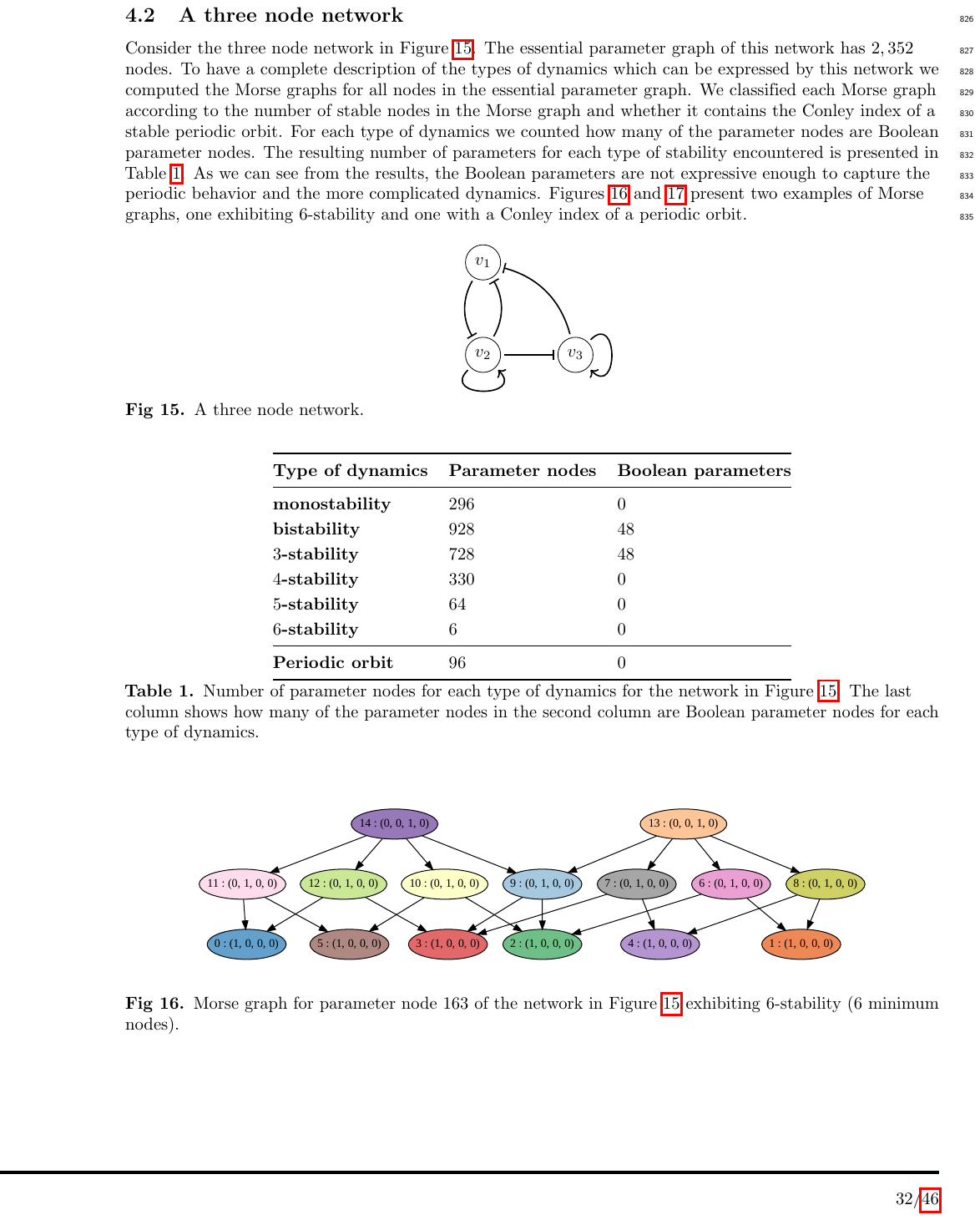}
\caption{A three node network.}
\label{fig:3node_network}
\end{figure}

\begin{table}[!htb]
\centering
\renewcommand{\arraystretch}{1.2}
\textbf{
\begin{tabular}{@{}lll@{}}
\toprule
Type of dynamics & Parameter nodes & Boolean parameters \\ \midrule
monostability & $296$ & $0$ \\
bistability  & $928$ & $48$ \\
$3$-stability & $728$ & $48$ \\
$4$-stability & $330$ & $0$ \\
$5$-stability & $64$ & $0$ \\
$6$-stability & $6$ & $0$ \\
\midrule
Periodic orbit & $96$ & $0$ \\
\bottomrule
\end{tabular}
}
\caption{Number of parameter nodes for each type of dynamics for the network in Figure~\ref{fig:3node_network}. The last column shows how many of the parameter nodes in the second column are Boolean parameter nodes for each type of dynamics.}
\label{tab:three_node_dynamics}
\end{table}

\begin{figure}[!htpb]
\centering
\includegraphics[width=0.9\linewidth]{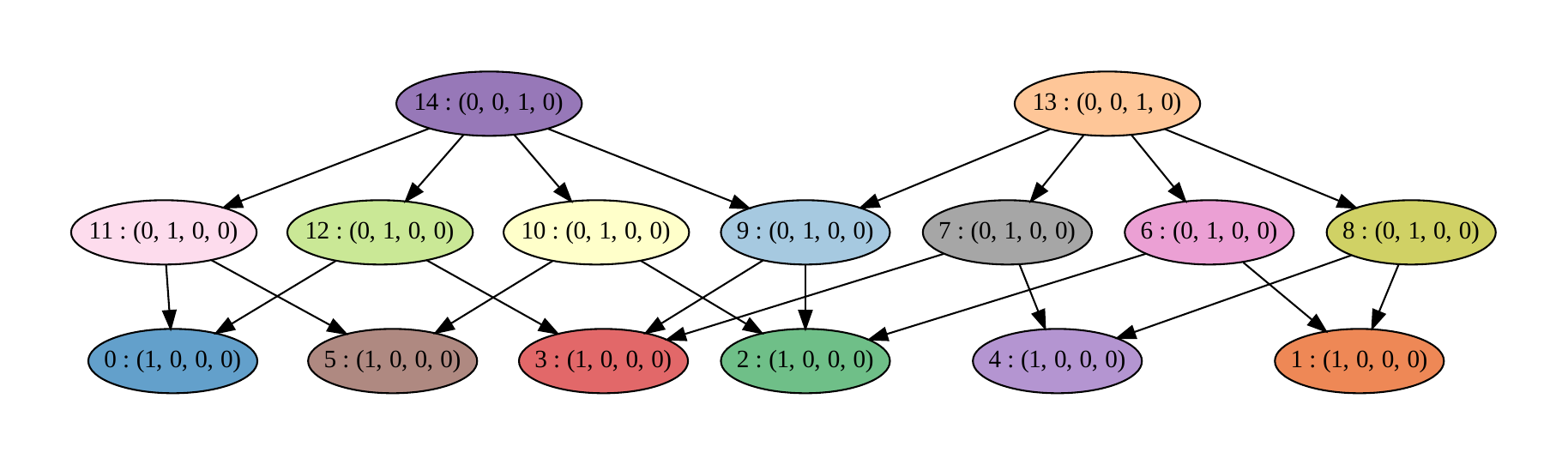}
\caption{Morse graph for parameter node $163$ of the network in Figure~\ref{fig:3node_network} exhibiting $6$-stability ($6$ minimum nodes).}
\label{fig:three_node_mg_6_stable}
\end{figure}

\begin{figure}[!htpb]
\centering
\includegraphics[width=0.8\linewidth]{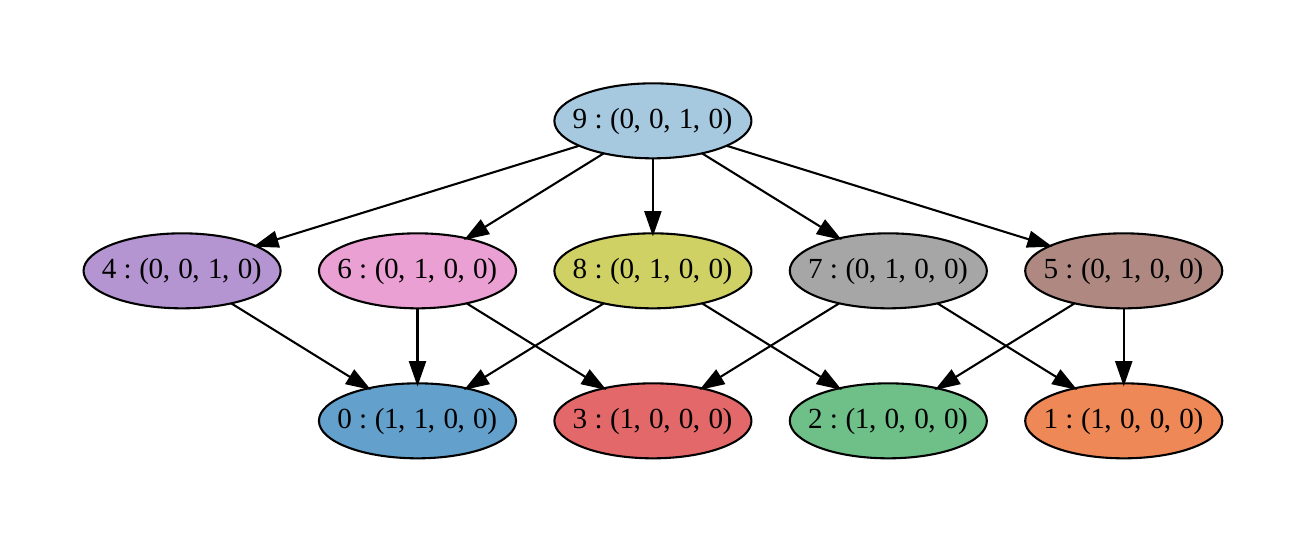}
\caption{Morse graph for parameter node $1251$ of the network in Figure~\ref{fig:3node_network}. Morse node $0$ has the Conley index of a stable periodic orbit.}
\label{fig:three_node_mg_periodic}
\end{figure}

\subsection{A five node network}

Consider the five node network in Figure~\ref{fig:5node_network}(Left). The goal of this example is to find parameter nodes producing oscillations. The parameter graph for this network contains $151,974,144,000$ parameter nodes. A careful search of the parameter graph produced parameter nodes exhibiting oscillations. One such parameter node is $56,996,313,179$. The Morse graph for this parameter node is shown in Figure~\ref{fig:5node_network}(Right).

\begin{figure}[!htb]
\centering
\includegraphics[width=0.93\linewidth]{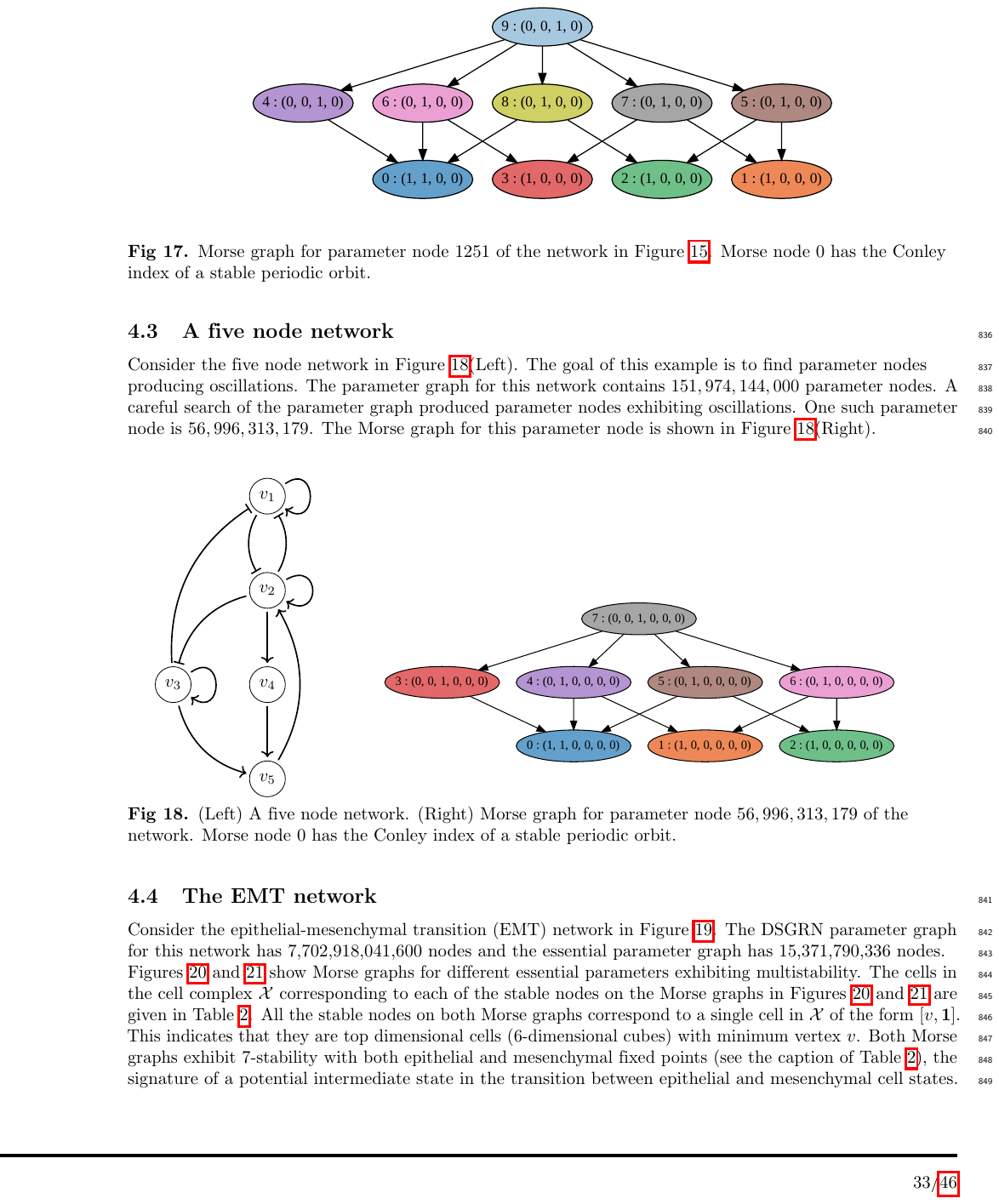}
\caption{(Left) A five node network. (Right) Morse graph for parameter node $56,996,313,179$ of the network. Morse node $0$ has the Conley index of a stable periodic orbit.}
\label{fig:5node_network}
\end{figure}

\subsection{The EMT network}

Consider the epithelial-mesenchymal transition (EMT) network in Figure~\ref{fig:EMT_network}. The DSGRN parameter graph for this network has $7{,}702{,}918{,}041{,}600$ nodes and the essential parameter graph has $15{,}371{,}790{,}336$ nodes. Figures~\ref{fig:EMT_network_mg_1} and \ref{fig:EMT_network_mg_2} show Morse graphs for different essential parameters exhibiting multistability. The cells in the cell complex $\cX$ corresponding to each of the stable nodes on the Morse graphs in Figures~\ref{fig:EMT_network_mg_1} and \ref{fig:EMT_network_mg_2} are given in Table~\ref{tab:EMT_mg_nodes}. All the stable nodes on both Morse graphs correspond to a single cell in $\cX$ of the form $[v, \bone]$. This indicates that they are top dimensional cells ($6$-dimensional cubes) with minimum vertex $v$. Both Morse graphs exhibit 7-stability with both epithelial and mesenchymal fixed points (see the caption of Table~\ref{tab:EMT_mg_nodes}), the signature of a potential intermediate state in the transition between epithelial and mesenchymal cell states.

\begin{figure*}[!htb]
\centering
\includegraphics[width=0.25\linewidth]{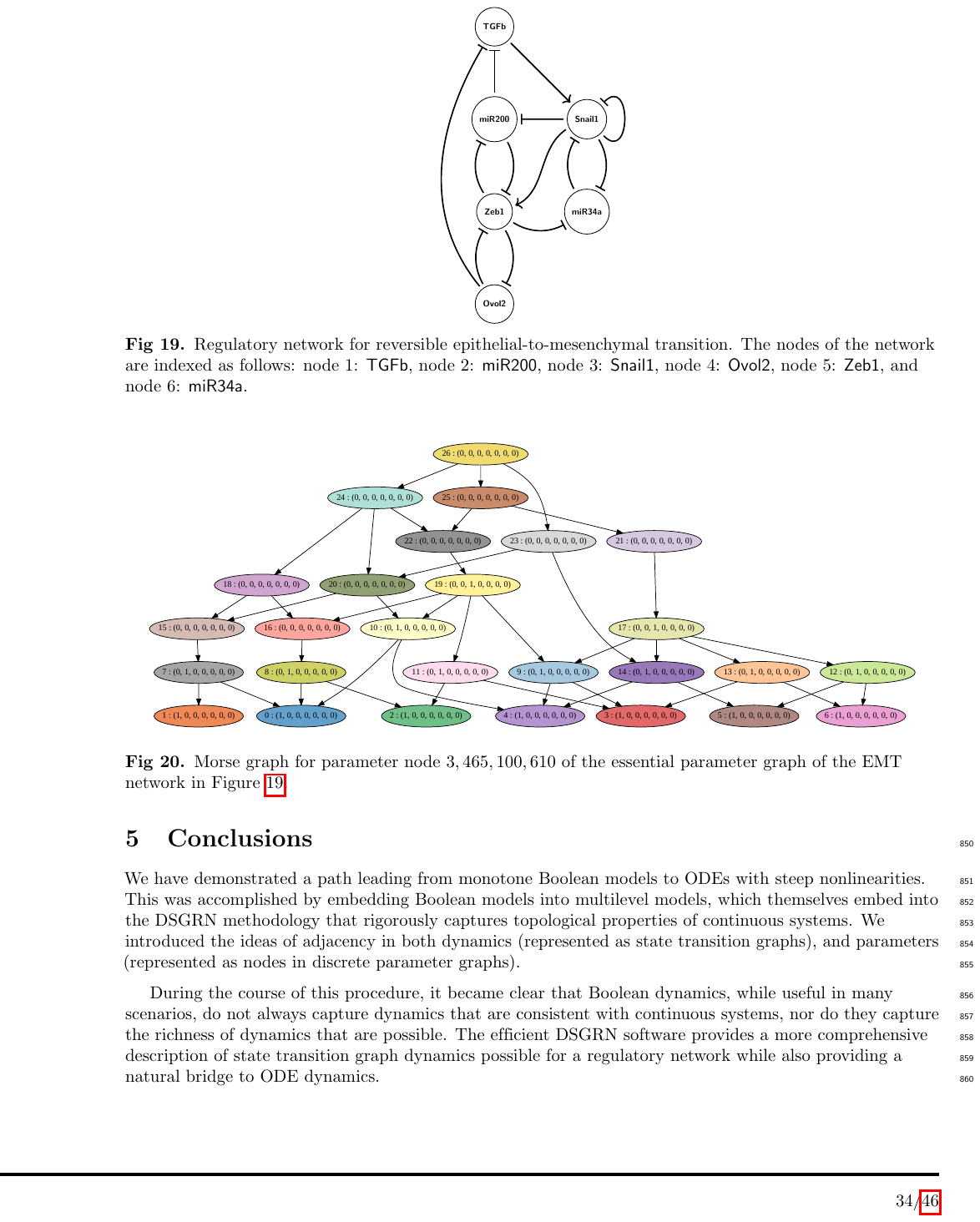}
\caption{Regulatory network for reversible epithelial-to-mesenchymal transition. The nodes of the network are indexed as follows: node 1: \textsf{TGFb}, node 2: \textsf{miR200}, node 3: \textsf{Snail1}, node 4: \textsf{Ovol2}, node 5: \textsf{Zeb1}, and node 6: \textsf{miR34a}.}
\label{fig:EMT_network}
\end{figure*}

\begin{figure}[!htpb]
\centering
\includegraphics[width=1.0\linewidth]{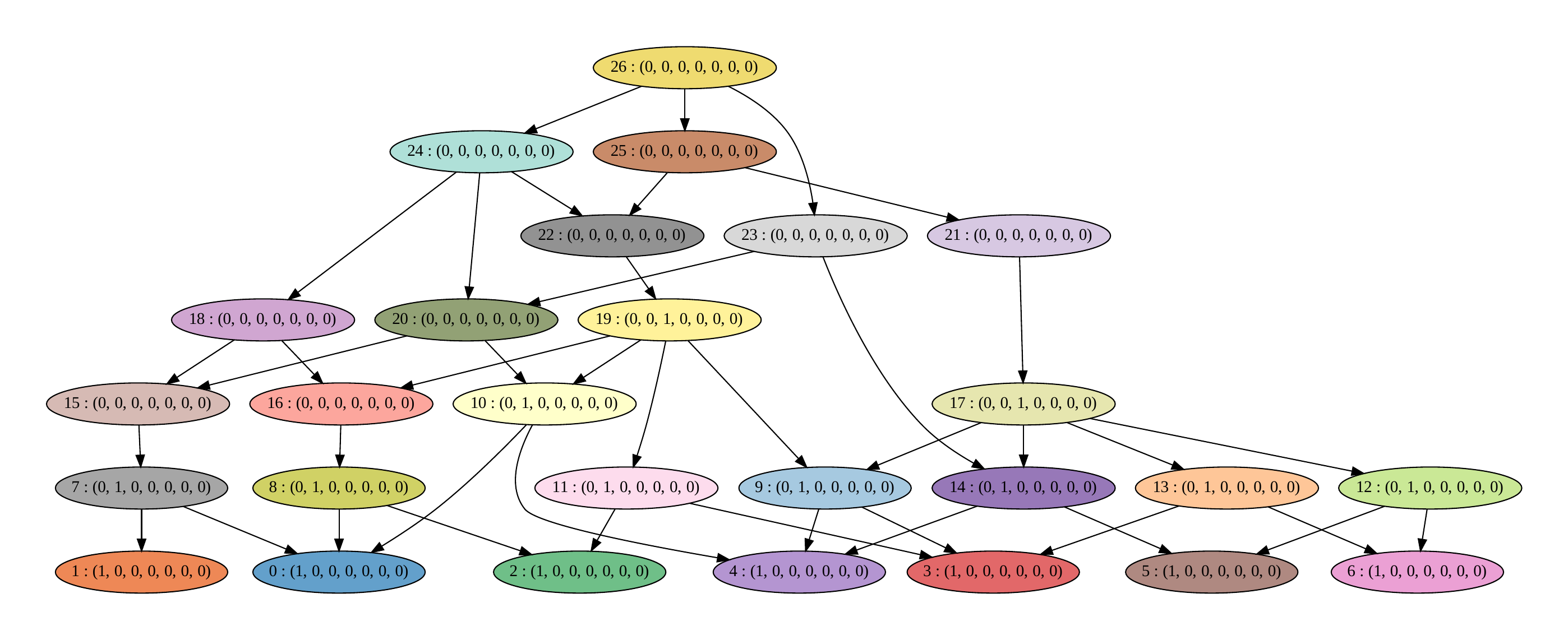}
\caption{Morse graph for parameter node $3,465,100,610$ of the essential parameter graph of the EMT network in Figure~\ref{fig:EMT_network}.}
\label{fig:EMT_network_mg_1}
\end{figure}

\begin{figure}[!htpb]
\centering
\includegraphics[width=1.0\linewidth]{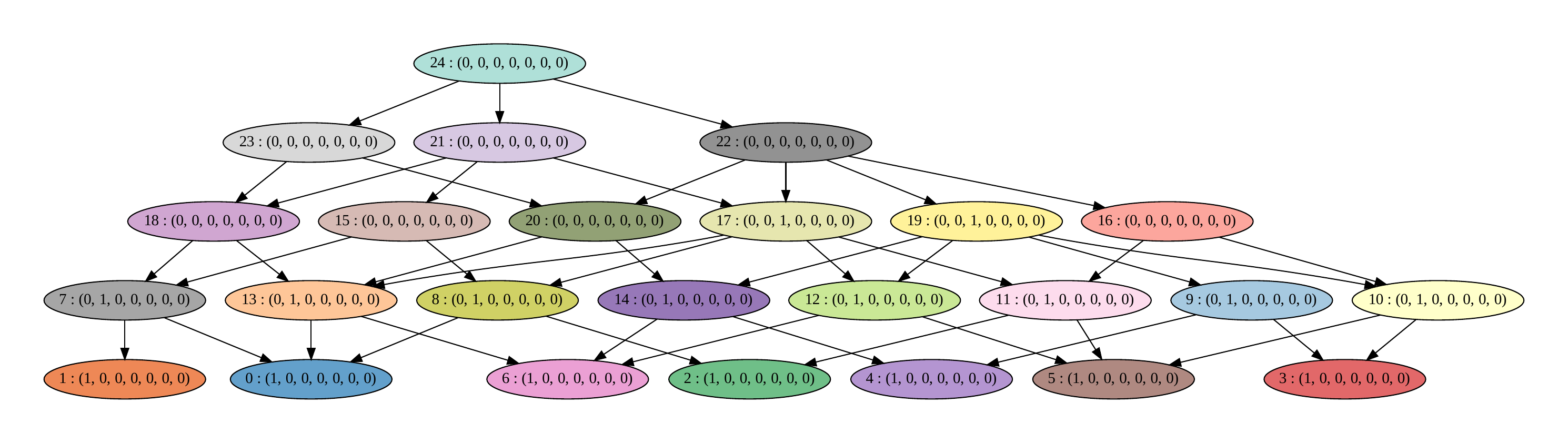}
\caption{Morse graph for parameter node $2,858,252,354$ of the essential parameter graph of the EMT network in Figure~\ref{fig:EMT_network}.}
\label{fig:EMT_network_mg_2}
\end{figure}

\begin{table}[!htbp]
\centering
\renewcommand{\arraystretch}{1.2}
\begin{minipage}{0.45\textwidth}
\centering
\textbf{
\begin{tabular}{@{}ll@{}}
\toprule
Node & Cell in $\cX$ \\ \midrule
$0$ & $[(0, 1, 3, 0, 3, 0), \bone]$ \\
$1$ & $[(1, 0, 4, 0, 3, 0), \bone]$ \\
$2$ & $[(0, 1, 0, 0, 2, 1), \bone]$ \\
$3$ & $[(0, 2, 0, 0, 1, 1), \bone]$ \\
$4$ & $[(0, 2, 3, 0, 1, 0), \bone]$ \\
$5$ & $[(0, 2, 3, 2, 0, 0), \bone]$ \\
$6$ & $[(0, 2, 0, 2, 0, 1), \bone]$ \\
\bottomrule
\end{tabular}
}
\end{minipage}
\begin{minipage}{0.45\textwidth}
\centering
\textbf{
\begin{tabular}{@{}ll@{}}
\toprule
Node & Cell in $\cX$ \\ \midrule
$0$ & $[(0, 1, 3, 0, 3, 0), \bone]$ \\
$1$ & $[(1, 0, 4, 0, 3, 0), \bone]$ \\
$2$ & $[(0, 1, 0, 0, 2, 1), \bone]$ \\
$3$ & $[(0, 2, 0, 0, 1, 1), \bone]$ \\
$4$ & $[(0, 2, 3, 0, 1, 0), \bone]$ \\
$5$ & $[(0, 2, 0, 2, 0, 1), \bone]$ \\
$6$ & $[(0, 2, 3, 2, 0, 0), \bone]$ \\
\bottomrule
\end{tabular}
}
\end{minipage}
\caption{Cells in the cell complex $\cX$ corresponding to each of the stable nodes on the Morse graphs. \textbf{Left:} For the Morse graph in Figure~\ref{fig:EMT_network_mg_1}. \textbf{Right:} For the Morse graph in Figure~\ref{fig:EMT_network_mg_2}. The nodes of the network are indexed in the following order $(\textsf{TGFb}, \textsf{miR200}, \textsf{Snail1}, \textsf{Ovol2}, \textsf{Zeb1}, \textsf{miR34a})$. A Morse graph node corresponds to a Mesenchymal state if the vertex of the corresponding cell in $\cX$ is of the form $(*, *, k, 0, 3, *)$ for $k \geq 3$, and it corresponds to an Epithelial state if the vertex is of the form $(*, *, 0, 2, 0, *)$. Hence we have that for the table on the left nodes $0$ and $1$ are Mesenchymal states and node $6$ is an Epithelial state. For the table on the right nodes $0$ and $1$ are Mesenchymal states and node $5$ is an Epithelial state.}
\label{tab:EMT_mg_nodes}
\end{table}

\section{Conclusions}

We have demonstrated a path leading from monotone Boolean models to ODEs with steep nonlinearities. This was accomplished by embedding Boolean models into multilevel models, which themselves embed into the DSGRN methodology that rigorously captures topological properties of continuous systems. We introduced the ideas of adjacency in both dynamics (represented as state transition graphs), and parameters (represented as nodes in discrete parameter graphs).

During the course of this procedure, it became clear that Boolean dynamics, while useful in many scenarios, do not always capture dynamics that are consistent with continuous systems, nor do they capture the richness of dynamics that are possible. The efficient DSGRN software provides a more comprehensive description of state transition graph dynamics possible for a regulatory network while also providing a natural bridge to ODE dynamics.

DSGRN permits a more systematic exploration of regulatory network dynamics that is well-suited to investigation of genetic disruptions leading to disease states.

\section*{Acknowledgments}
M.G. was partially supported by the Air Force Office of Scientific Research under awards numbered FA9550-23-1-0011 and FA9550-23-1-0400.
K.M. was partially supported by the Air Force Office of Scientific Research under awards numbered FA9550-23-1-0011 and FA9550-23-1-0400.
B.R. was supported by the Air Force Office of Scientific Research under award number FA9550-23-1-0400.

\clearpage

\appendix

\section{Lattices of monotone increasing Boolean functions}
\label{sec:lattices}

\begin{defn}
The set of monotone increasing Boolean functions with $k$ inputs $f \colon \bbB^k \to \bbB$ is denoted by $\cM_k$.
\end{defn}

\begin{defn}
The \emph{truth set} of a monotone Boolean function $f \colon \bbB^k \to \bbB$ is defined as
\[
\cT(f) :=\{ z \in \bbB^k \;|\; f(z) =1 \}.
\]
\end{defn}

The set $\cM_k$ forms a lattice, where the join ($\vee$) and meet ($\wedge$) operations are defined by union and intersection of the corresponding truth sets, i.e.,
\begin{align*}
h = f \vee g   & \iff \cT(h) = \cT(f) \cup \cT(g), \\
h = f \wedge g & \iff \cT(h) = \cT(f) \cap \cT(g).
\end{align*}

The lattice $\cM_2$ is shown in Figure~\ref{fig:MBF_PG_multi} on the left. On the right is a set of logical parameters for a network node with two inputs and two outputs (see \eqref{eq:logical}). Each logical parameter is a pair $(h_1, h_2)$ where $h_1, h_2 \in \cM_2$ and $h_1 \succeq h_2$ (see \eqref{eq:ordering condition}). This is a partially ordered set with adjacency defined in Definition~\ref{def:adjacency_DSGRN}. Finally, the red nodes on the right are embedded Boolean logical parameters (see \eqref{eq:Boolean_par}). We note that the poset on the right is isomorphic to lattice $\cM_3$, based on the arguments in~\cite{crawford22,Gedeon24b}.

\begin{figure}[htp!]
\centering
\includegraphics[width=0.72\linewidth]{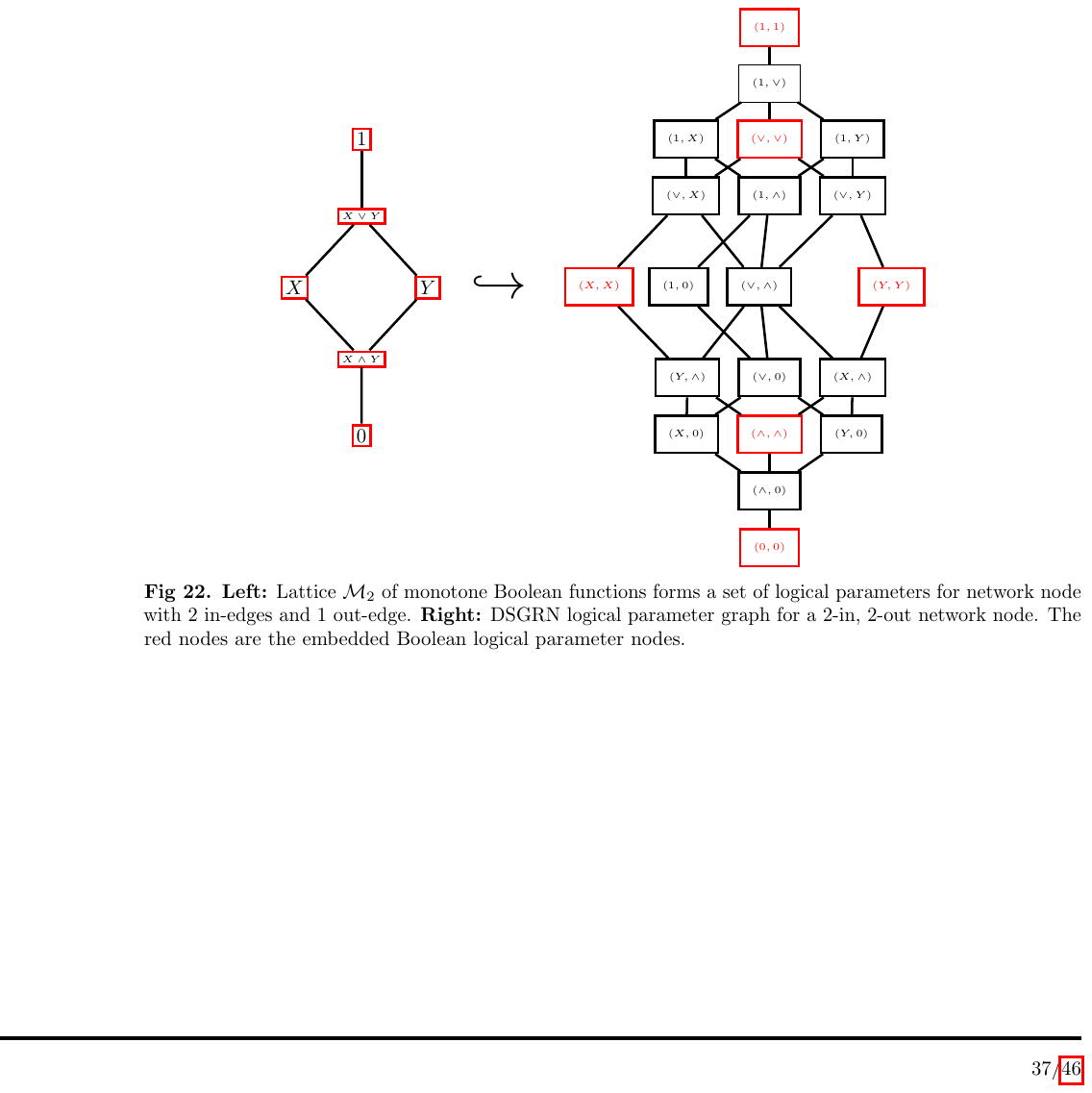}
\caption{\textbf{Left:} Lattice $\cM_2$ of monotone Boolean functions forms a set of logical parameters for network node with  2 in-edges and 1 out-edge. \textbf{Right:} DSGRN logical parameter graph for a 2-in, 2-out network node.  The red nodes are the embedded Boolean logical parameter nodes.}
\label{fig:MBF_PG_multi}
\end{figure}

\section{Code Usage}
\label{sec:code}

In this section we describe via examples how to use the main functionalities of DSGRN. For a more complete description of DSGRN, see the tutorials available at \cite{DSGRN_repo}. Jupyter notebooks with code to run the examples in this paper are available at \cite{DSGRN_Boolean_paper_repo}.

DSGRN is a Python package and can be installed (assuming Python is installed) with:
\begin{center}
\texttt{pip install DSGRN}
\end{center}

\subsection{Defining a Regulatory Network}

The first step is to specify a regulatory network (Definition~\ref{def:RN}) using DSGRN string based specification format. Each node of the network is defined by a string, which defines the node and a list of its input nodes. The format is
\begin{center}
\texttt{node\_name : algebraic\_expression : optional\_modifiers},
\end{center}
where \texttt{node\_name} specifies the network node, \texttt{algebraic\_expression} specifies the input nodes to \texttt{node\_name}, and \texttt{optional\_modifiers} are the optional modifiers \texttt{M}, \texttt{B}, or \texttt{E} and determine which parameter graph to construct as described in the next section.

For example, to define the network in Figure~\ref{fig:22network}(a) we can use the following strings
{\centering
\begin{align*}
\texttt{v1 :} & \texttt{\ \ v1 + v2 : M} \\
\texttt{v2 :} & \texttt{\ \~{}v1 + v2 : M}
\end{align*}
}%
which specify a network with two nodes $v_1$ and $v_2$, where the input nodes to $v_1$ are $v_1$ and $v_2$ and the input nodes to $v_2$ are also $v_1$ and $v_2$. A positive edge is specified by denoting the corresponding input node by $v_i$ and a negative edge by denoting the input node by $\sim v_i$. Hence in the network given above the edge $(v_1, v_2)$ is negative and the other three edges are positive (see Figure~\ref{fig:22network}(a)). The optional flag \texttt{M} has no effect on the construction of the network and only affects the associated parameter graph, as described in the next section. The meaning of the algebraic expressions \texttt{v1 + v2} and \texttt{\~{}v1 + v2} is also discussed in the next section (for now just think of them as lists of input nodes).

After importing DSGRN with the command
\begin{lstlisting}[language=Python]
import DSGRN
\end{lstlisting}
the network above can be constructed in DSGRN with the commands
\begin{lstlisting}[language=Python]
net_spec = """v1 :  v1 + v2 : M
              v2 : ~v1 + v2 : M"""

network = DSGRN.Network(net_spec)
\end{lstlisting}

Once a DSGRN Network object is constructed as above, the network graph can be plotted with the command below. This will draw the network graph, as in Figure~\ref{fig:22network}(a). 
\begin{lstlisting}[language=Python]
DSGRN.DrawGraph(network)
\end{lstlisting}

\subsection{Constructing the Parameter Graph}

The DSGRN parameter graph is constructed from a network with the following command
\begin{lstlisting}[language=Python]
parameter_graph = DSGRN.ParameterGraph(network)
\end{lstlisting}
The optional modifier \texttt{M} used in the definition of the network tells DSGRN to construct the \emph{DSGRN multi Boolean parameter graph} (see Definition~\ref{def:DSGRN_multilevel_parameter_graph}). We inquire DSGRN about the size of the parameter graph
\begin{lstlisting}[language=Python]
print(parameter_graph.size())  # Output: 1600
\end{lstlisting}
The output indicates that we have $1,600$ parameter nodes (vertices in the parameter graph). We can then interact with the parameter graph and compute dynamics for each node of the parameter graph as described in what follows.

\begin{rem}
In the original DSGRN framework~\cite{Cummins2017b, kepley:mischaikow:zhang} the DSGRN parameter graph is constructed via a decomposition of the parameter space in terms of semi-algebraic sets. We refer to that parameter graph as the \emph{classical DSGRN parameter graph}. For networks with nodes with more than two input edges, the classical parameter graph depends on the algebraic expressions of the input nodes. Hence, for example, the expressions \texttt{v1 : v1 + v2 + v3} and \texttt{v1 : v1 (v2 + v3)} will produce different classical parameter graphs. For the \emph{DSGRN multi Boolean parameter graph} (Definition~\ref{def:DSGRN_multilevel_parameter_graph}) and the \emph{DSGRN Boolean parameter graph} (Definition~\ref{def:DSGRN_Boolean_parameter_graph}) only the input nodes are considered and the specific format of the algebraic expressions are ignored by DSGRN, and hence we can always use sums as in \texttt{v1 : v1 + v2 + v3} (see Remark~\ref{rem:PG_sizes_diffs}).
\end{rem}

If we use the optional modifier \texttt{B} instead of the modifier \texttt{M} when defining the network, then DSGRN will construct the \emph{DSGRN Boolean parameter graph} (see Definition~\ref{def:DSGRN_Boolean_parameter_graph}). If neither the modifier \texttt{M} nor \texttt{B} is used, then DSGRN will construct the \emph{classical DSGRN parameter graph}. The \emph{DSGRN Boolean parameter graph} is a subgraph of the \emph{DSGRN multi Boolean parameter graph}. In all of these cases, if we additionally use the modifier \texttt{E}, then DSGRN will construct the \emph{essential parameter graph} (see Definition~\ref{def:fi}), which is always a subgraph of the corresponding parameter graph.

Hence, the following commands define the classical DSGRN parameter graph (note the different algebraic expression for node \texttt{v2})
\begin{lstlisting}[language=Python]
net_spec = """v1 : v1 + v2
              v2 : (~v1)(v2)"""

network = DSGRN.Network(net_spec)

parameter_graph = DSGRN.ParameterGraph(network)
print(parameter_graph.size())  # Output: 1600
\end{lstlisting}

The following commands define the DSGRN Boolean parameter graph
\begin{lstlisting}[language=Python]
net_spec = """v1 :  v1 + v2 : B
              v2 : ~v1 + v2 : B"""

network = DSGRN.Network(net_spec)

parameter_graph = DSGRN.ParameterGraph(network)
print(parameter_graph.size())  # Output: 144
\end{lstlisting}

The following commands define the essential DSGRN multi Boolean parameter graph
\begin{lstlisting}[language=Python]
net_spec = """v1 :  v1 + v2 : M E
              v2 : ~v1 + v2 : M E"""

network = DSGRN.Network(net_spec)

parameter_graph = DSGRN.ParameterGraph(network)
print(parameter_graph.size())  # Output: 196
\end{lstlisting}

The following commands plot the parameter graph for the network in Figure~\ref{fig:network_boolean}(a).
\begin{lstlisting}[language=Python]
net_spec = """v1 : v1 + v2
              v2 : ~v1"""

network = DSGRN.Network(net_spec)
parameter_graph = DSGRN.ParameterGraph(network)
DSGRN.draw_parameter_graph(parameter_graph)
\end{lstlisting}
The resulting parameter graph is shown in Figure~\ref{fig:pg_2_nodes_3_edges}. The nodes in the parameter graph are indexed by an integer index starting at zero (from $0$ to $119$ in this example), which are shown as the labels of the parameter nodes.

\begin{figure}[!htpb]
\centering
\includegraphics[width=1.0\linewidth]{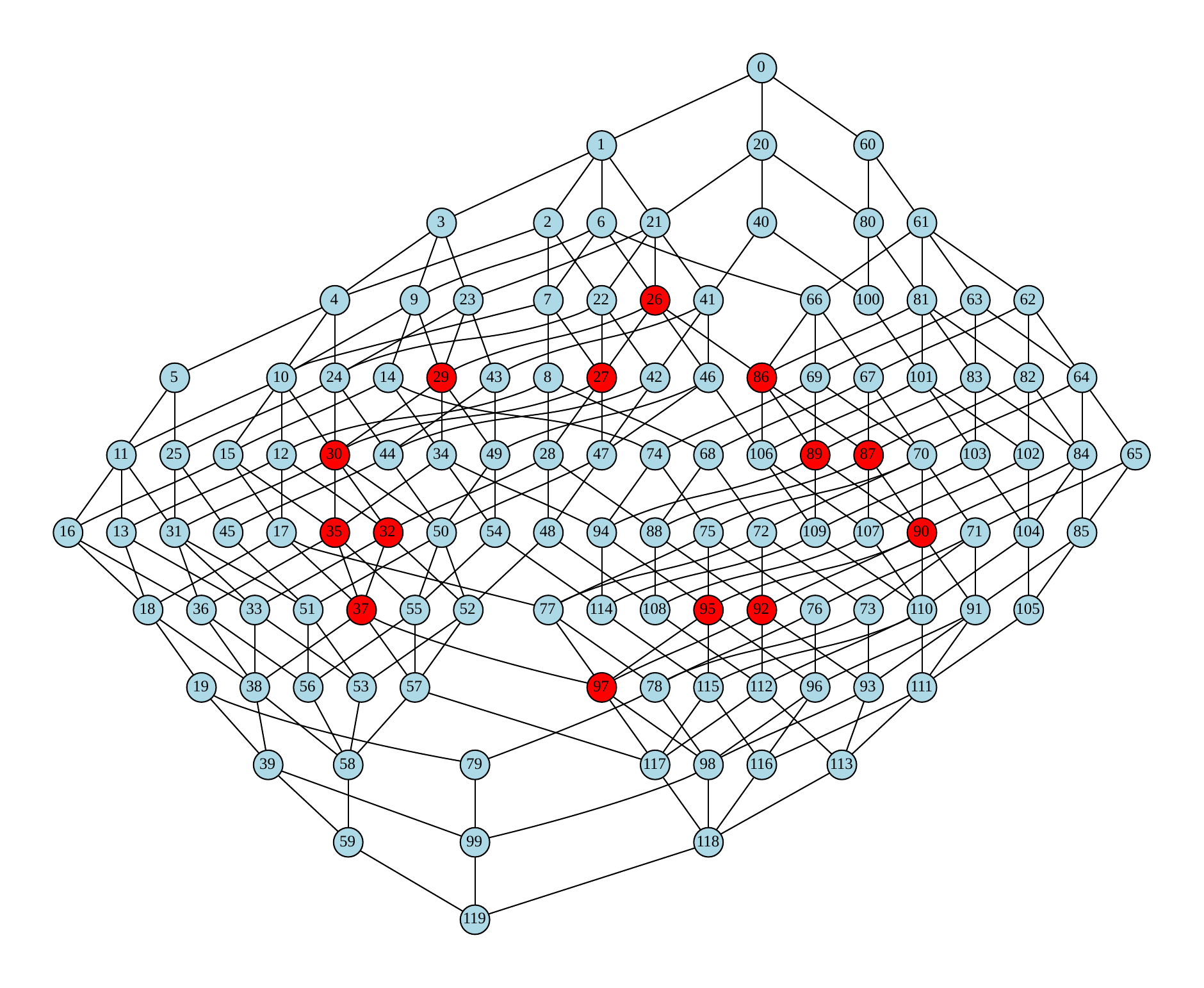}
\caption{Parameter graph for the network in Figure~\ref{fig:network_boolean}(a). The essential nodes are shown in red.}
\label{fig:pg_2_nodes_3_edges}
\end{figure}

Now let us redefine and explore the DSGRN multi Boolean parameter graph for the network in Figure~\ref{fig:22network}(a)
\begin{lstlisting}[language=Python]
net_spec = """v1 :  v1 + v2 : M
              v2 : ~v1 + v2 : M"""

network = DSGRN.Network(net_spec)

parameter_graph = DSGRN.ParameterGraph(network)
\end{lstlisting}

We specify parameters in the parameter graph via their integer indexes. In this example they are indexed from $0$ to $1,599$. The command
\begin{lstlisting}[language=Python]
par_index = 1032

parameter = parameter_graph.parameter(par_index)
\end{lstlisting}
selects a particular parameter node from the parameter graph. We can ask DSGRN for information about the parameter with the command
\begin{lstlisting}[language=Python]
print(parameter.partialorders())
\end{lstlisting}
which will display the following output
\begin{lstlisting}[language=Python]
v1 : (p0, t0, p2, t1, p1, p3)
v2 : (t1, p0, p1, p2, t0, p3)
\end{lstlisting}

This output describes the parameter in a compact notation in terms of its order and logic parameters (see Definitions~\ref{defn:order_parameter} and \ref{def:fi}) and should be interpreted as follows:

The \emph{ordering} of the $t_j$ terms in the list gives the order parameter. More specifically, how they are ordered with respect to their indexing represents the order parameter. Hence for node $v_1$ since $t_0$ appears before $t_1$ the order parameter $\theta_1 \colon \setof{1, 2} \to \setof{1, 2}$ is the identity function, that is, $\theta_1 (1) < \theta_1 (2)$. For node $v_2$ since $t_1$ appears before $t_0$ the order parameter $\theta_2 \colon \setof{1, 2} \to \setof{1, 2}$ is given by $\theta_2 (1) = 2$ and $\theta_2 (2) = 1$, that is, $\theta_2 (2) < \theta_2 (1)$.

The \emph{positions} of the $t_j$ terms with respect to the $p_i$ terms give the logic parameter as follows. Each $p_i$ represent the value of a Boolean function $h$ evaluated at the Boolean vector corresponding to the binary digits of $i$, that is, $p_i = h(i)$. Hence we have $p_0 = h(0, 0)$, $p_1 = h(0, 1)$, $p_2 = h(1, 0)$, and $p_3 = h(1, 1)$. Each $t_j$ gives rise to one Boolean function $h$ which is defined by assigning $0$ to the $p_i$ terms before $t_j$ and $1$ to the $p_i$ terms after $t_j$. Hence for node $v_1$ we have the logic parameter $h_1 = (h_1^1, h_1^2)$ where $h_1^1$ is defined by the relative position of $t_0$ and is given by $h_1^1(0, 0) = 0$ and $h_1^1(0, 1) = h_1^1(1, 0) = h_1^1(1, 1) = 1$. Notice that only the relative positions of the $p_i$ with respect to $t_0$ are relevant to the definition of $h_1^1$. The function $h_1^2$ is defined by the position of $t_1$ as $h_1^2(0, 0) = h_1^2(1, 0) = 0$ and $h_1^2(0, 1) = h_1^2(1, 1) = 1$. Hence we have the logic parameter $h_1 = (h_1^1, h_1^2)$, where
\[
h_1^1(b_1, b_2) = b_1 \vee b_2 \quad \text{and} \quad h_1^2(b_1, b_2) = b_2.
\]
Analogously, for node $v_2$ we have the logic parameter $h_2 = (h_2^1, h_2^2)$ where $h_2^1$ is defined by the relative position of $t_1$ and is given by $h_2^1(0, 0) = h_2^1(0, 1) = h_2^1(1, 0) = h_2^1(1, 1) = 1$. The function $h_2^2$ is defined by the position of $t_0$ as $h_2^2(0, 0) = h_2^2(1, 0) = h_2^2(0, 1) = 0$ and $h_2^2(1, 1) = 1$. Notice that the logic parameter $h_2 = (h_2^1, h_2^2)$ is given by
\[
h_2^1(b_1, b_2) = 1 \quad \text{and} \quad h_2^2(b_1, b_2) = b_1 \wedge b_2.
\]

To construct a DSGRN parameter, and hence to compute its dynamics, we need to specify the integer indexing of the parameter node in the parameter graph. However we can ask DSGRN for the parameter index from the partial order information described above as follows
\begin{lstlisting}[language=Python]
partial_orders = ['(p0, t0, p2, t1, p1, p3)',
                  '(t1, p0, p1, p2, t0, p3)']

par_index = DSGRN.index_from_partial_orders(parameter_graph, partial_orders)
print(par_index) # Output: 1032
\end{lstlisting}

Since the partial order information described above represent the order and logic parameters, we can use the above commands to go back and forth between the DSGRN parameter indexing and the order and logic parameters representation. 

\begin{rem}
\label{rem:PG_sizes_diffs}
The classical DSGRN parameter graph is a subgraph of the DSGRN multi Boolean parameter graph. They are identical for networks with nodes having at most two input edges. However, for networks with nodes with more than two input edges the classical DSGRN parameter graph may be a proper subset of the DSGRN multi Boolean parameter graph, and the size of the classical DSGRN parameter graph depends on the algebraic expressions of the input nodes. The examples below illustrate the sizes differences.
\begin{lstlisting}[language=Python]
net_spec = """v1 : v1 + v2 + ~v3
              v2 : ~v1
              v3 : v1 + v2"""

network = DSGRN.Network(net_spec)
parameter_graph = DSGRN.ParameterGraph(network)
print(parameter_graph.size()) # Output: 305424

net_spec = """v1 : (v1 + v2)(~v3)
              v2 : ~v1
              v3 : v1 + v2"""

network = DSGRN.Network(net_spec)
parameter_graph = DSGRN.ParameterGraph(network)
print(parameter_graph.size()) # Output: 326592
\end{lstlisting}

The size of the DSGRN multi Boolean parameter graph does not depend on the algebraic expressions of the input nodes
\begin{lstlisting}[language=Python]
net_spec = """v1 : v1 + v2 + ~v3 : M
              v2 : ~v1 : M
              v3 : v1 + v2 : M"""

network = DSGRN.Network(net_spec)
parameter_graph = DSGRN.ParameterGraph(network)
print(parameter_graph.size()) # Output: 383184

net_spec = """v1 : (v1 + v2)(~v3) : M
              v2 : ~v1 : M
              v3 : v1 + v2 : M"""

network = DSGRN.Network(net_spec)
parameter_graph = DSGRN.ParameterGraph(network)
print(parameter_graph.size()) # Output: 383184
\end{lstlisting}
\end{rem}

\subsection{Computing dynamics}

We can compute the dynamics of a DSGRN parameter with the following command 
\begin{lstlisting}[language=Python]
net_spec = """v1 :  v1 + v2 : M
              v2 : ~v1 + v2 : M"""

network = DSGRN.Network(net_spec)

parameter_graph = DSGRN.ParameterGraph(network)

par_index = 926

parameter = parameter_graph.parameter(par_index)
morse_graph, stg, graded_complex = DSGRN.Blowup.ConleyMorseGraph(parameter)
\end{lstlisting}

We can then plot the Morse graph
\begin{lstlisting}[language=Python]
DSGRN.Blowup.PlotMorseGraph(morse_graph)
\end{lstlisting}
and the corresponding color-coded cell complex $\cX_b$ (for dimensions greater than two a projection of $\cX_b$ into two dimensions is plotted) and the state transition graph
\begin{lstlisting}[language=Python]
DSGRN.Blowup.PlotMorseSets(morse_graph, stg, graded_complex)
\end{lstlisting}
This will plot the Morse graph and state transition graphs corresponding to Figure~\ref{fig:order_flow}(a). However, by default DSGRN labels the nodes of the Morse graph with their Conley indices, as shown in Figure~\ref{fig:two_node_mg_conley_index}.

\begin{figure}[!htpb]
\centering
\includegraphics[width=0.35\linewidth]{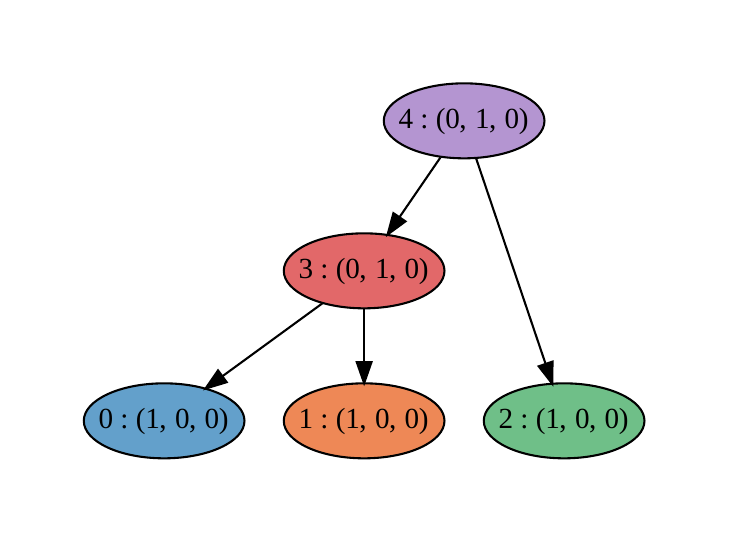}
\caption{Morse graph for parameter node $926$ of the network in Figure~\ref{fig:22network}(a). Each node is labeled with its Conley index.}
\label{fig:two_node_mg_conley_index}
\end{figure}

To plot the Morse graph with the labeling used in Figure~\ref{fig:order_flow} (see Definition~\ref{def:DSGRN_MG}) we can use the option \texttt{label='a'} as follows
\begin{lstlisting}[language=Python]
DSGRN.Blowup.PlotMorseGraph(morse_graph, label='a')
\end{lstlisting}
This will reproduce the Morse graph shown in Figure~\ref{fig:22network}(a).

\bibliographystyle{plain}
\bibliography{references}

\end{document}